\documentclass[11pt]{article}

\usepackage[margin=1in]{geometry}
\usepackage{amsmath,amssymb,amsthm,amsxtra}
\usepackage{mathrsfs, mathtools}
\usepackage{bbm}
\usepackage{bm}
\usepackage[shortlabels]{enumitem}
\usepackage{microtype}
\microtypesetup{expansion=false}
\usepackage[T1]{fontenc}
\usepackage{xcolor}
\usepackage{comment}
\usepackage{float}
\usepackage{hyperref}
\usepackage{graphicx}
\usepackage{algorithm}
\usepackage[noend]{algpseudocode}
\makeatletter
\renewcommand{\theHALG@line}{\thealgorithm.\arabic{ALG@line}}
\makeatother

\hypersetup{hidelinks}
\numberwithin{equation}{section}

\newtheorem{theorem}{Theorem}[section]
\newtheorem{definition}[theorem]{Definition}
\newtheorem{lemma}[theorem]{Lemma}

\newtheorem{claim}[theorem]{Claim}

\newtheorem{proposition}[theorem]{Proposition}

\newcommand{\cG}{\mathcal{G}}

\newcommand{\cP}{\mathcal{P}}

\newcommand{\cZ}{\mathcal{Z}}

\newcommand{\wt}{\operatorname{w}}
\newcommand{\cmu}{\mu}

\newcommand{\Tr}{\text{Tr}}

\newcommand{\bbone}{\mathbbm{1}}

\newcommand{\I}{\mathcal{I}}

\makeatletter
\DeclareRobustCommand\Equiv{\mathrel{%
 \mathchoice
 {\Equiv@\textfont\displaystyle{.45}}
 {\Equiv@\textfont\textstyle{.45}}
 {\Equiv@\scriptfont\scriptstyle{.5}}
 {\Equiv@\scriptscriptfont\scriptscriptstyle{.55}}
}}
\newcommand{\Equiv@}[3]{%
 \rlap{\raisebox{#3\fontdimen5#12}{$\m@th#2 = $}}%
 \raisebox{-#3\fontdimen5#12}{$\m@th#2 = $}%
}
\makeatother

\newcommand{\TV}{\operatorname{\mathsf{TV}}}

\newcommand{\diag}{\operatorname{\mathsf{diag}}}

\newcommand{\benm}{\begin{enumerate}[leftmargin=*]}
\newcommand{\eenm}{\end{enumerate}}

\definecolor{forestgreen}{rgb}{0.13, 0.55, 0.13}

\newcommand{\op}{\operatorname{op}}

\newcommand{\one}{\mathbf 1}
\newcommand{\e}{\mathrm e}

\newcommand{\Id}{\mathrm{Id}}

\title{The Hard-Core Model on Bipartite Spectral Expanders: Counting and Sampling at All Fugacities}

\author{
Ijay Narang\thanks{School of Computer Science, Georgia Institute of Technology, \texttt{inarang3@gatech.edu}}
\and
Will Perkins\thanks{School of Computer Science, Georgia Institute of Technology, \texttt{wperkins3@gatech.edu}}
}
\date{}

\begin{document}
\maketitle

\begin{abstract}
We study approximate counting and sampling algorithms for the hard-core model on $\Delta$-regular bipartite graphs under a spectral expansion condition. Let $M_G$ be the biadjacency matrix of $G$. For every fixed $\xi\in(0,1)$, we give an FPRAS for the hard-core partition function and an efficient approximate sampler whenever
\[
\lambda\leq \frac{1-\xi}{\sigma_2(M_G)}.
\]

The main idea is to introduce a family of quadratic tilts in the left-right occupation imbalance and show that each tilted measure can be sampled efficiently using Glauber dynamics. A discrete Gaussian identity expresses the original hard-core model as an exact positive mixture of these tilted measures; truncation and simulated annealing then yield efficient counting and sampling algorithms.

For the complementary high-fugacity regime, we refine the polymer-model approach and show that the required phase-dominance and cluster expansion conditions follow from the singular-spectrum bound alone. Combining the two regimes, we obtain efficient approximate counting and sampling at every fugacity $\lambda>0$  whenever
\[
\sigma_2(M_G)\leq
c\left(\frac{\Delta^2}{\log(\mathrm e\Delta)}\right)^{1/3}
\]
for an absolute constant $c>0$. In particular, this recovers all-fugacity algorithms for random $\Delta$-regular bipartite graphs for all sufficiently large $\Delta$, while providing an efficiently verifiable certificate of their success on a given instance.
\end{abstract}

\newpage

\section{Introduction}
\label{sec:Intro}

Given a graph $G$, let $\I(G)$ denote the set of all independent sets of $G$. For a fugacity parameter $\lambda>0$, the hard-core model $\mu_{G,\lambda}$ is the probability distribution over $\I(G)$ defined by:
$$\mu_{G,\lambda}(I) = \frac{\lambda^{|I|}}{Z_G(\lambda)}, \qquad Z_G(\lambda):= \sum_{I \in \I(G)} \lambda^{|I|},$$
where $Z_G(\lambda)$ is the normalizing constant or partition function of the Gibbs measure $\mu_{G,\lambda}$.

Exact evaluation of \(Z_G(\lambda)\) is \(\#\mathrm{P}\)-hard, motivating the study of efficient approximate counting and sampling algorithms. Let
\(\varepsilon\in(0,1)\). An algorithm is an \emph{FPTAS} for
\(Z_G(\lambda)\) if it runs in time polynomial in \(|V(G)|\) and
\(1/\varepsilon\) and produces an estimate \(\widehat Z\) such that
\[
(1-\varepsilon)Z_G(\lambda)
\leq \widehat Z
\leq (1+\varepsilon)Z_G(\lambda) \,.
\]
A randomized algorithm is an \emph{FPRAS} for \(Z_G(\lambda)\) if it produces such an approximation with probability at least $3/4$.
An algorithm is an \emph{efficient approximate sampler} for
\(\mu_{G,\lambda}\) if it runs in time polynomial in \(|V(G)|\) and \(1/\varepsilon\) and outputs an independent set from a law
\(\widehat{\mu}\) satisfying
$\TV(\widehat{\mu},\mu_{G,\lambda})
<\varepsilon.$

Let $\mathcal{G}^{\le}_\Delta$ be the set of all graphs of maximum degree at most $\Delta$. For $G \in \mathcal{G}^{\le}_\Delta$, the computational tractability of approximating $Z_G(\lambda)$ exhibits a sharp transition. If $\lambda < \lambda_c(\Delta)$, then there is an FPTAS and efficient sampling scheme for the hard-core model~\cite{weitz2006counting}. If $\lambda > \lambda_c(\Delta)$, then there is no FPRAS or efficient sampling scheme unless $\mathbf{NP}=\mathbf{RP}$~\cite{Sly10,sly2012computational,galanis2016inapproximability}. The threshold $\lambda_c(\Delta) := \frac{(\Delta-1)^{\Delta-1}}{(\Delta-2)^\Delta}$ corresponds to the Gibbs uniqueness threshold for the hard-core model on the infinite $\Delta$-regular tree, and so in this case (and for all $2$-spin antiferromagnetic models), the computational transition coincides with a physics phase transition.

The complexity landscape is less understood when the input graph is required to be bipartite. Even at $\lambda =1$, approximately counting independent sets in bipartite graphs is the canonical problem $\#\textbf{BIS}$, which is not known either to admit an FPRAS or to be NP-hard to approximate.  Many other natural approximate counting problems  have been shown to be $\#\textbf{BIS}$-hard ~\cite{DyerGoldbergGreenhillJerrum2004, DyerGoldbergJerrum2010}, and so settling the complexity $\#\textbf{BIS}$ is one of the central problems in the field.

In \cite{cai2016hardness} it was shown that, given a bipartite graph $G \in \mathcal{G}^{\le}_\Delta$, approximating $Z_G(\lambda)$ is as hard as
$\#\textbf{BIS}$ whenever $\lambda>\lambda_c(\Delta)$. The
algorithmic side has seen progress for special classes of instances, such as lattices~\cite{helmuth2019algorithmic,cannon2026pirogov} and expander graphs~\cite{JenssenKeevashPerkins2020,liao2019counting,jenssen2023approximately,jenssen2026refined}. However, such results often require strong combinatorial hypotheses (such as small-set expansion) that can be difficult to verify. Thus, one would hope for a broader and efficiently checkable hypothesis.

A natural candidate for such a condition is spectral expansion. There are two recent reasons
for optimism. First, Chen, Chen, Chen, Yin, and
Zhang~\cite{CCCYZ2025} gave a general rapid-mixing criterion that, for
the hard-core model, can be checked using the least eigenvalue of the
adjacency matrix. For a $\Delta$-regular bipartite graph, however, the least adjacency
eigenvalue is always $-\Delta$, regardless of how well the graph
expands.  Our localization procedure can be viewed as a method for removing precisely this trivial bipartite direction before applying the criterion.

Second, recent work gives efficient sampling at every fugacity on random $\Delta$-regular bipartite graphs when $\Delta$ is a
sufficiently large constant. More precisely, Kocurek, Oveis Gharan, and
Tjowasi \cite{kocurek2026sampling} handle the regime $\lambda = O \left(  \frac{1}{\sqrt{\Delta}} \right)$ by analyzing two auxiliary Markov chains, whereas the polymer-model framework of Jenssen, Keevash, and Perkins \cite{JenssenKeevashPerkins2020} and refinements in~\cite{chen2022sampling} handles the regime
$\lambda =  \Omega \left ( \frac{\log \Delta}{\Delta} \right)$.  These proofs, however, use properties of random regular bipartite graphs beyond a bound on the
nontrivial spectrum. Nevertheless, this suggests a natural question: can all-fugacity algorithms be obtained from an efficiently checkable condition on the nontrivial spectrum alone? Our results answer this question affirmatively.

For notation, let $\mathcal{G}_{\Delta, \Delta}$ denote the set of all $\Delta$-regular bipartite graphs on vertex set $L\sqcup R$, where $|L|=|R|=n$. Given a graph $G$, let $M_G$ denote its biadjacency matrix and $\sigma_2(M_G)$ denote its second-largest singular value. Define the class of graphs $\mathcal{G}_{\Delta, \Delta}^\sigma$ to be

\begin{equation}
\mathcal{G}_{\Delta, \Delta}^\sigma := \{G \in \mathcal{G}_{\Delta, \Delta} : \sigma_2(M_G) \le \sigma\}.
\end{equation}

Our main result is the following.

\begin{theorem}
\label{thm:all_fug}
There are absolute constants \(c>0\) and
\(\Delta_0\geq3\) such that, for every
\(\Delta\geq\Delta_0\) and every
\(G\in\mathcal{G}_{\Delta,\Delta}^{\sigma}\), where
\begin{equation}
\sigma
\leq
c
\left(
\frac{\Delta^2}{\log(\e\Delta)}
\right)^{1/3},
\label{eq:all-fugacity-spectral-range}
\end{equation}
there is an FPRAS for \(Z_G(\lambda)\) and an efficient approximate
sampler for \(\mu_{G,\lambda}\) at every fugacity \(\lambda>0\).
\end{theorem}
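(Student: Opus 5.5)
The plan is to prove Theorem~\ref{thm:all_fug} by combining two regime-specific results, as the abstract suggests, and checking that the two fugacity windows overlap under \eqref{eq:all-fugacity-spectral-range}.

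\textbf{Low-fugacity regime.} Here we use the tilting result. For every fixed $\xi\in(0,1)$ there is an FPRAS and an efficient sampler whenever $\lambda\le (1-\xi)/\sigma_2(M_G)$. We fix $\xi=1/2$, so the low regime covers every $\lambda\le 1/(2\sigma)$.

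\textbf{High-fugacity regime.} Here we use the refined polymer-model result, which should take the following form. If $\lambda\ge \lambda_{\mathrm{high}}(\Delta,\sigma)$, then phase dominance holds: almost all Gibbs mass sits on configurations whose minority-side defect sets are small. The cluster expansion for the resulting polymer model then converges, yielding an FPTAS/FPRAS and a sampler via the standard Jenssen--Keevash--Perkins machinery.

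The key input is a spectral replacement for small-set expansion. By the expander mixing lemma for the biadjacency matrix, for $S\subseteq L$,
\[
|N(S)|\ \ge\ \frac{\Delta^2|S|}{\sigma^2+\Delta^2|S|/n}.
\]
So small sets expand by a factor of about $\Delta^2/\sigma^2$, and sets up to size $n/2$ still expand by a factor $1+\Omega(1)$ once $\sigma\le c\Delta$.

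Phase dominance needs $\lambda$ to beat the entropy of choosing defects. A polymer $\gamma\subseteq L$ with neighbourhood $N(\gamma)$ has weight roughly $\lambda^{|\gamma|}(1+\lambda)^{-|N(\gamma)|}$. Counting $2$-linked sets gives at most $(e\Delta^2)^{|\gamma|}$ polymers of a given size. The Koteck\'y--Preiss condition therefore becomes
\[
(1+\lambda)^{|N(\gamma)|-|\gamma|}\ \gtrsim\ (C\Delta^2)^{|\gamma|},
\]
that is, $\lambda\cdot(\text{expansion}-1)\gtrsim \log\Delta$. Using the spectral expansion, the polymer side should apply for roughly
\[
\lambda\ \ge\ C\,\frac{\sigma^2\log(e\Delta)}{\Delta^2}.
\]
(For large polymers we would instead use the $1+\Omega(1)$ expansion, with the weaker requirement $\lambda\ge C\log\Delta/\Delta$.)

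\textbf{Overlap.} The two windows cover all $\lambda>0$ provided
\[
\frac{1}{2\sigma}\ \ge\ C\frac{\sigma^2\log(e\Delta)}{\Delta^2},
\]
i.e.\ $\sigma^3\le \Delta^2/(2C\log(e\Delta))$. This is exactly \eqref{eq:all-fugacity-spectral-range} with $c=(2C)^{-1/3}$. The large-polymer threshold $C\log\Delta/\Delta$ is also below $1/(2\sigma)$, since $\sigma\ll \Delta/\log\Delta$ for $\Delta\ge\Delta_0$. The algorithm computes which regime $\lambda$ falls in (both thresholds are explicit) and runs the corresponding algorithm. If the high-regime result also requires $\lambda$ larger than some absolute constant for the sampler, that constant is covered by the polymer condition as well.

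\textbf{Main obstacle.} The hard step is the high-fugacity analysis from spectral data alone. One must prove phase dominance: the measure concentrates near the all-$L$ or all-$R$ ground states with defect set at most $\varepsilon n$. This needs a global estimate showing that balanced configurations have exponentially small weight, using only $\sigma$.

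I would try a container or entropy argument. Bound $Z$ restricted to configurations with $|I\cap L|,|I\cap R|\ge \varepsilon n$ by counting pairs $(A,B)$ with no edges, i.e.\ $e(A,B)=0$. The mixing lemma forces $|A||B|\le \sigma^2 n^2/\Delta^2$, so one side is tiny. Then compare against $(1+\lambda)^n$.

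Getting the error terms to match the same $\sigma^2\log\Delta/\Delta^2$ threshold, rather than something lossier, is where I expect the real work to lie.
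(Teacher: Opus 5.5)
Your overlap computation is exactly the paper's proof of Theorem~\ref{thm:all_fug}. Theorem~\ref{thm:main} with $\xi=1/2$ covers $\lambda\le 1/(2\sigma)$, and Theorem~\ref{thm:cluster-expansion} covers $\lambda\ge\exp[C\frac{\sigma^2}{\Delta^2-\sigma^2}\log(\e\Delta)]-1$. For $\Delta\ge\Delta_0$ one has $\sigma\le\Delta/2$ and the exponent is at most $1$, so $e^x-1\le 2x$ bounds this threshold by $4C\sigma^2\log(\e\Delta)/\Delta^2\le 1/(2\sigma)$ once $4Cc^3\le 1/2$. Had you simply cited that theorem, you would be done.

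The gap is that your proposal does not cite it. It guesses the form of the high-fugacity result and sketches a justification, and that sketch fails at two points.

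\emph{Large polymers.} Your parenthetical is backwards. By your own heuristic $(1+\lambda)^{|N(\gamma)|-|\gamma|}\gtrsim(C\Delta^2)^{|\gamma|}$, an expansion factor of only $1+\Omega(1)$ forces $\log(1+\lambda)\gtrsim\log\Delta$, not $\lambda\gtrsim\log\Delta/\Delta$. That would destroy the overlap with $1/(2\sigma)$. The paper admits no such polymers: they are capped at $r=n/\Delta$. At that size Lemma~\ref{lem:tanner} still gives $|N(\gamma)|\ge|\gamma|/x$ with $(1/x-1)^{-1}\le 2\bigl(\frac1\Delta+\frac{\sigma^2}{\Delta^2-\sigma^2}\bigr)$. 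The $1/\Delta$ term is then absorbed using $\frac{\sigma^2}{\Delta^2-\sigma^2}\ge\frac1\Delta$ (Lemma~\ref{lem:trace-comparison}).

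\emph{Phase dominance.} This is the more serious problem. Capping polymers at $n/\Delta$ means phase dominance is needed at that scale, not at a constant scale $\varepsilon n$. The total weight of independent sets with more than $n/\Delta$ vertices on each side must be exponentially small relative to $(1+\lambda)^n$.

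Your proposed tool does not see this scale. The constraint $|A||B|\le\sigma^2n^2/\Delta^2$ is already met by two sets of size $2n/\Delta$ once $\sigma\ge 2$. Moreover, near the bottom of the window, $\lambda\approx C\sigma^2\log(\e\Delta)/\Delta^2$, even the majority side of a typical configuration has size about $\lambda n\ll\sigma n/\Delta$. So the mixing lemma cannot separate the two phases from mixed configurations at all.

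A union bound over $B$ that uses only the worst-case lower bound on $|N(B)|$ is also inconclusive. For the maximizing $b\approx\lambda n$ one has $\binom{n}{b}\lambda^b\ge(1+\lambda)^n/(n+1)$. Meanwhile the bound only guarantees $n-|N(B)|\lesssim n\sigma^2/(\Delta^2\lambda)$, so the estimate exceeds $(1+\lambda)^n$ by a factor exponential in $n\sigma^2/\Delta^2$. The obstacle you name is therefore still open in your plan, and it is the missing idea.

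The paper closes it by importing the Hadas--Peled long-range-order theorem (Theorem~\ref{prop:HP}) with cutoff $r=n/\Delta$, certifying both of its hypotheses from the spectrum:
\begin{enumerate}[leftmargin=*]
\item Local expansion comes from the return probabilities $P^{2j}(v,v)\le\frac1n+(\sigma/\Delta)^{2j}$ of Lemma~\ref{lem:return-probability} when $\sigma\le\Delta/2$, and from the one-edge certificate otherwise.
\item Global expansion comes from the Cheeger bound $h(G)\ge(\Delta-\sigma)/2$ (Lemma~\ref{lem:cheeger}).
\end{enumerate}
The configurations in which both sides are admissible are then shown to be negligible by Lemma~\ref{lem:tanner} and a binomial entropy bound (Lemma~\ref{lem:uniform-phase-error}).
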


An immediate consequence of Theorem \ref{thm:all_fug} is a
certifying improvement for random regular bipartite graphs. Indeed, if
\(G\) is a uniformly random \(\Delta\)-regular bipartite graph, then
\[
\sigma_2(M_G)\leq 2\sqrt{\Delta-1}+o_n(1)
\]
with probability \(1-o_n(1)\) for \(\Delta\) sufficiently large \cite{BritoDumitriuHarris2022,friedman2003proof},
this satisfies the condition of the all-fugacity Theorem~\ref{thm:all_fug}.  

Previous all-fugacity results for random regular bipartite graphs were obtained by combining algorithms whose guarantees held with high probability over the random instance~\cite{kocurek2026sampling,JenssenKeevashPerkins2020,chen2022sampling}, but without an efficient way to verify these guarantees. Our theorem instead provides a single efficiently checkable spectral condition that certifies the complete all-fugacity guarantee for the given graph.  See, e.g., Remark 1 of~\cite{JenssenKeevashPerkins2020} or Remark 2.1 of~\cite{chen2022sampling} which ask for algorithms with the stronger type of guarantee we provide here. 

The proof of Theorem \ref{thm:all_fug} follows from separate analyses of the moderate and high fugacity regimes. In the moderate fugacity regime, we utilize a new technique, which we call discrete quadratic localization, to prove the following.

\begin{theorem}
\label{thm:main}
Fix an integer \(\Delta\geq3\), \(\xi\in(0,1)\), and \(\sigma>0\).
Suppose that \(\lambda>0\) satisfies
\begin{equation}
\lambda\leq\frac{1-\xi}{\sigma}.
\end{equation}
Then there is an FPRAS for approximating \(Z_G(\lambda)\) and an
efficient approximate sampler for \(\mu_{G,\lambda}\) for every
\(G\in\mathcal{G}_{\Delta,\Delta}^{\sigma}\).
\end{theorem}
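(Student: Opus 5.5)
We follow the strategy sketched in the abstract: decompose $\mu_{G,\lambda}$ as a positive mixture of quadratically tilted measures. Each tilted measure is handled by a spectral rapid-mixing criterion, and simulated annealing assembles the counting and sampling algorithms.

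\textbf{Step 1: the tilt.} For $I\in\I(G)$ let $D(I)=|I\cap L|-|I\cap R|$ be the left--right imbalance. For a parameter $\theta>0$, to be tuned, and a shift $k\in\mathbb Z$, define the tilted weight
\[
\nu_{k}(I)\propto\lambda^{|I|}\exp\!\bigl(-\tfrac{\theta}{2}(D(I)-k)^2\bigr).
\]
The discrete Gaussian (Jacobi theta / Poisson summation) identity
\[
\sum_{k\in\mathbb Z} e^{-\frac{\theta}{2}(D-k)^2}=C_\theta,
\]
which is independent of $D$ because $D$ is an integer, shows that $Z_G(\lambda)=C_\theta^{-1}\sum_k Z_k$ with $Z_k$ the tilted partition functions. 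Hence $\mu_{G,\lambda}=\sum_k (Z_k/\sum_j Z_j)\,\nu_k$ exactly, and only the $|k|\le n$ plus an $O(\sqrt{\log(n/\varepsilon)/\theta})$ margin matter. In one sentence: the tilt adds $-\frac{\theta}{2}\, x^\top (\mathbf 1_L-\mathbf 1_R)(\mathbf 1_L-\mathbf 1_R)^\top x$ to the interaction, which cancels exactly the trivial bipartite eigendirection of the adjacency matrix with eigenvalue $-\Delta$.

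\textbf{Step 2: rapid mixing of each $\nu_k$.} Write $\nu_k$ as a two-spin model $\propto\exp(\frac12 x^\top J x+h^\top x)$ on $\{0,1\}^{2n}$, where $J=-\beta A_G-\theta vv^\top$ with a hard-core penalty and $v=\mathbf 1_L-\mathbf 1_R$. Then apply the criterion of~\cite{CCCYZ2025}, which asks that the relevant interaction matrix, scaled by the maximal marginal variance (of order $\lambda/(1+\lambda)$ in the hard-core setting), have least eigenvalue above $-1$, or equivalently spectral radius below $1$ on the relevant side. The spectrum of $A_G$ consists of $\pm\sigma_i(M_G)$. The eigenvalue $-\Delta$ has eigenvector $v/\sqrt{2n}$ (up to normalization). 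Choose $\theta$ so that the rank-one term shifts $-\Delta$ upward to a harmless value, e.g.\ $\theta\approx\Delta/n$ scaled appropriately, while $+\Delta$ (eigenvector $\mathbf 1$) is orthogonal to $v$ and unaffected. The remaining spectrum lies in $[-\sigma,\sigma]$, so the condition reduces to $\lambda\sigma\le1-\xi$. This gives Glauber mixing in $\mathrm{poly}(n)$ time, with the exponent depending on $\xi$ and $\Delta$.

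\textbf{Step 3: counting and sampling.} Sampling from $\mu_{G,\lambda}$ requires sampling $k$ with probability $\propto Z_k$, so we need ratios $Z_k/Z_{k'}$. Estimate them, together with $Z_G(\lambda)$, by simulated annealing: run a sequence of interpolations in $\lambda$ from $0$ (where the partition function is trivial) up to the target, and in $k$ between neighbouring shifts. Each ratio is an expectation of a bounded quantity under some $\nu_k$, with polynomially bounded relative variance because consecutive shifts change $e^{-\theta(D-k)^2/2}$ by a factor $e^{O(\theta n)}=O(1)$. Every intermediate measure also satisfies $\lambda'\sigma\le1-\xi$, so Step 2 applies throughout. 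Finally, sample $k$ from the estimated weights and then run Glauber dynamics on $\nu_k$; the total variation error is controlled by the estimation error plus truncation.

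\textbf{Main obstacle.} The crux is Step 2. We must check precisely that the criterion of~\cite{CCCYZ2025} applies to a model with a dense rank-one quadratic term, and not merely to pairwise nearest-neighbour interactions. We must also check that the external field and the hard constraints do not spoil the bound, so that $\theta$ can be chosen to neutralize $-\Delta$ without creating a new bad direction. The first thing to try is to treat the hard-core model as a limit of soft antiferromagnetic interactions. In that formulation the tilt contributes $-\theta vv^\top$, which adds $+\theta\cdot 2n$ along $v$ and yields an explicit spectral certificate.
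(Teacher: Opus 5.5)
Your Steps 1 and 3 follow the paper's architecture. Your identity $\sum_k e^{-\theta(D-k)^2/2}=C_\theta$ is the paper's discrete Gaussian identity with $q=e^{-\theta}$. Annealing in $\lambda$ separately for each fixed $k$, from a trivial starting point, already yields every weight, so the interpolation in $k$ is unnecessary. The gap is Step 2, which carries essentially all the content of the theorem. You leave it as an ``obstacle'', and the heuristic you offer would not go through, for two reasons.

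First, you apply the criterion to the wrong matrix. Theorem~\ref{thm:chen-general} constrains the multiplicative dependency matrix $\Psi_S$, for every pinning $S$, not an exponent matrix $J$. In your soft formulation, the direction $v$ has $J$-eigenvalue $\beta\Delta-2n\theta$. The tilt subtracts rather than adds, and relative to $\beta A_G$ its effect is $2n\theta/\beta\to0$, so no finite tilt neutralizes anything in the hard-core limit. No limit is needed, since $\nu_k$ is already a full-support measure on the downward-closed family $\I(G)$. A direct computation gives $\Psi_S(u,v)=e^{-\theta}-1$ for same-side pairs, $e^{\theta}-1$ for non-adjacent cross pairs, and $-1$ for edges, independently of $S$. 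So the cross block is $(e^\theta-1)J-e^\theta B$ with $B=M[U_L,U_R]$: the tilt rescales the edge part rather than simply adding to it. Taking exactly $e^{-\theta}=q:=1-\Delta/n$, this equals $-q^{-1}\bigl(B-\tfrac{\Delta}{n}J\bigr)=-q^{-1}M_\perp[U_L,U_R]$. That is a submatrix of $M_\perp$, so its norm is at most $\sigma/q$ for \emph{every} pinning. This is the precise form of ``cancelling the $-\Delta$ direction''.

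Second, and more seriously, you assume a uniform marginal scale of order $\lambda$, but the tilted measure has very non-uniform marginals. Under a pinning of imbalance $d$, the ratios are $r_L=\lambda e^{-\theta(d-k)-\theta/2}$ and $r_R=\lambda e^{\theta(d-k)-\theta/2}$. With $\theta\approx\Delta/n$, already at the empty pinning with $k\approx n/2$ one gets $r_L\approx\lambda e^{\Delta/2}$. Scaling the interaction by a maximal marginal therefore gives a condition like $\lambda\sigma e^{\Theta(\Delta)}<1$, not $\lambda\sigma\le1-\xi$.

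The missing idea is to use the per-vertex form $\Psi_S\preceq\Id+(1-\delta)\diag(r_S)^{-1}$ and conjugate by $R_S^{1/2}:=\diag(r_S)^{1/2}$. All cross interactions sit in the off-diagonal block, which is therefore multiplied by $\sqrt{r_Lr_R}=\lambda\sqrt q$, independent of $d$ and $k$. Meanwhile the same-side blocks become $-(1-q)r_L(J-\Id)-r_L\Id=-(1-q)r_LJ-qr_L\Id\preceq0$, and similarly on $R$. This gives $R_S^{1/2}\Psi_SR_S^{1/2}-R_S\preceq\frac{\lambda\sigma}{\sqrt q}\Id$, uniformly over pinnings. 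Hence $\delta=1-\lambda\sigma/\sqrt q\geq1-\sqrt{1-\xi}$ once $n\geq\Delta/\xi$ guarantees $q\geq1-\xi$; smaller $n$ is handled by brute-force enumeration. Without these two points, rapid mixing of $\nu_k$ is not established, and Step 2 remains a heuristic.
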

For a near-Ramanujan bipartite graph, $\sigma= O (\sqrt{\Delta} )$, and so this theorem applies up to fugacity $O ( 1/\sqrt{\Delta})$, matching the bound obtained by recent algorithms for random regular bipartite graphs~\cite{kocurek2026sampling}, but under a deterministic and efficiently checkable condition.

In the high fugacity regime, we refine the cluster expansion analysis of \cite{JenssenKeevashPerkins2020} to follow from spectral expansion. 

\begin{theorem}
\label{thm:cluster-expansion}
There exists an absolute constant \(C>0\) such that the following holds. Fix an integer \(\Delta\geq 3\) and
\(0\leq\sigma<\Delta\), and suppose that \(\lambda>0\) satisfies
\begin{equation}
\lambda
\geq
\exp\!\left[
C\frac{\sigma^2}{\Delta^2-\sigma^2}\log(\e\Delta)
\right]-1.
\label{eq:cluster-expansion-lambda}
\end{equation}
Then there is an FPTAS for approximating \(Z_G(\lambda)\) and an
efficient approximate sampler for \(\mu_{G,\lambda}\) for every
\(G\in\mathcal{G}_{\Delta,\Delta}^{\sigma}\).
\end{theorem}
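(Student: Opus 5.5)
We follow the polymer-model framework of \cite{JenssenKeevashPerkins2020}, replacing combinatorial expansion with spectral estimates via the expander mixing lemma. For $G\in\mathcal{G}^\sigma_{\Delta,\Delta}$ and $S\subseteq L$, $T\subseteq R$, the bipartite expander mixing lemma gives
\[
\Bigl|e(S,T)-\tfrac{\Delta}{n}|S||T|\Bigr|\le \sigma\sqrt{|S||T|}.
\]
Taking $T=R\setminus N(S)$, so that $e(S,T)=0$, yields the vertex-expansion bound
\[
|N(S)|\ \ge\ \frac{\Delta^2|S|}{\sigma^2+(\Delta^2-\sigma^2)|S|/n}.
\]
In particular, every $S$ with $|S|\le n/2$ satisfies $|N(S)|\ge (1+\eta)|S|$ with $\eta\gtrsim(\Delta^2-\sigma^2)/\sigma^2$. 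We will also need expansion of $2$-linked sets (connected in $G^2$) of size at most $n/2$; these come from the same inequality.

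\textbf{Step 1: phase dominance.} Every independent set $I$ is assigned a majority side, the one it "occupies". Write $Z=Z_L+Z_R$, where $Z_L$ counts independent sets whose defect set $I\cap L$ is small (say $|[I\cap L]|\le n/2$ after taking $2$-linked closures), and symmetrically for $Z_R$. The remaining part of the sum is made of sets with substantial occupation on both sides. We will show this part is at most $e^{-\Omega(n)}Z$, in the usual way: charge each $S\subseteq L$ by $\lambda^{|S|}(1+\lambda)^{n-|N(S)|}$ and use $|N(S)|-|S|\ge \eta|S|$ for $|S|\le n/2$. The same expansion handles both the balanced regime and the large-set regime.

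\textbf{Step 2: polymer model and cluster expansion.} On the $R$-dominant phase, polymers are $2$-linked subsets $\gamma\subseteq L$ with $|\gamma|\le n/2$, with weight
\[
w(\gamma)=\lambda^{|\gamma|}(1+\lambda)^{-|N(\gamma)|}.
\]
We will verify the Koteck\'y--Preiss condition
\[
\sum_{\gamma'\not\sim\gamma} w(\gamma')e^{a|\gamma'|}\le a|\gamma|.
\]
The number of $2$-linked sets of size $k$ containing a given vertex is at most $(e\Delta^2)^k$. Combined with $|N(\gamma)|\ge(1+\eta)|\gamma|$, this gives
\[
w(\gamma)\le \bigl(\lambda/(1+\lambda)\bigr)^{|\gamma|}(1+\lambda)^{-\eta|\gamma|}\le (1+\lambda)^{-\eta|\gamma|}.
\]
So convergence holds once $\eta\log(1+\lambda)\ge C'\log(e\Delta)$, i.e.
\[
\log(1+\lambda)\ge C\frac{\sigma^2}{\Delta^2-\sigma^2}\log(e\Delta),
\]
which is exactly \eqref{eq:cluster-expansion-lambda}.

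\textbf{Main obstacle.} Small polymers need the constant-factor expansion $|N(\gamma)|\ge(1+\eta)|\gamma|$ and large ones need $|N(\gamma)|\ge|\gamma|+\Omega(n)$. Near $|\gamma|\approx n/2$, however, the mixing-lemma bound only gives $|N(\gamma)|-|\gamma|\approx\frac{\Delta^2-\sigma^2}{\Delta^2+\sigma^2}|\gamma|$. We must check that this still yields an $\eta$ of the stated order uniformly in $|\gamma|\le n/2$. We will do this by evaluating the displayed bound at $|S|=n/2$ and treating $\sigma$ close to $\Delta$ separately. If the uniform estimate fails, the fallback is to shrink the polymer size cap to $\delta n$ with $\delta$ depending on $\sigma/\Delta$, and absorb the gap into the phase-dominance step.

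\textbf{Step 3: algorithms.} With the Koteck\'y--Preiss condition established, truncating the cluster expansion at clusters of size $O(\log(n/\varepsilon))$ and enumerating them as in \cite{helmuth2019algorithmic,JenssenKeevashPerkins2020} gives a deterministic FPTAS for $\log Z_L$ and $\log Z_R$, hence for $Z$ up to the $e^{-\Omega(n)}$ error from Step 1. For sampling, we sample polymer configurations via the self-reducibility/Markov chain approach of \cite{chen2022sampling}. We then fill in the dominant side: each vertex of $R\setminus N(\bigcup\gamma)$ is included independently with probability $\lambda/(1+\lambda)$. Step 1 bounds the total-variation error.
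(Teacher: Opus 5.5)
\textbf{Verdict: there is a genuine gap.} It lies in Step~1, and the fallback you mention does not close it as you describe it.

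\textbf{The cap $n/2$ cannot work in the target regime.} Since $|N(S)|\le n$, a set of size $n/2$ never has $\eta>1$. Your Koteck\'y--Preiss estimate for the largest polymers would therefore need $\log(1+\lambda)\gtrsim\log(e\Delta)$, i.e.\ $\lambda\geq\mathrm{poly}(\Delta)$. For $\sigma=O(\sqrt\Delta)$, however, the theorem covers all $\lambda\gtrsim\log(e\Delta)/\Delta$.

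Worse, for $\lambda<1$ a typical $L$-dominant configuration has $|I\cap L|\approx\frac{\lambda}{1+\lambda}n<n/2$. It is then also counted by the polymer model meant for the $R$-dominant phase, so $(1+\lambda)^n(\Xi_L+\Xi_R)\approx 2Z_G(\lambda)$.

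The cap must be much smaller. The paper takes $r=n/\Delta$, so Tanner gives $|N(\gamma)|\geq|\gamma|/x$ with $x=\frac{\sigma^2}{\Delta^2}+\frac{1-\sigma^2/\Delta^2}{\Delta}$. The extra $1/\Delta$ is absorbed via $\frac{\sigma^2}{\Delta^2-\sigma^2}\geq\frac1\Delta$ for $n\geq\Delta^2$. With a small cap, two error terms must be controlled:
\begin{itemize}
\item the overlap $W_\cap$, where both $I\cap L$ and $I\cap R$ are admissible. Your proposal never addresses this; the paper shows such $I$ have $|I|\leq 2xn/(1+x)$ and then uses an entropy bound.
\item the weight $W_\varnothing$ of independent sets with more than $r$ occupied vertices on \emph{both} sides.
\end{itemize}

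\textbf{Your charging scheme cannot bound $W_\varnothing$.} We have $\sum_{S\subseteq L}\lambda^{|S|}(1+\lambda)^{n-|N(S)|}=Z_G(\lambda)$ exactly, and the large $S$ include the entire $L$-dominant phase. So any estimate that sees only $S=I\cap L$ and its expansion is of order $Z_G(\lambda)$, not $e^{-\Omega(n)}Z_G(\lambda)$.

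A genuinely two-sided estimate is needed, and even a first-moment count over pairs $(A,D)$ is not routine. Take $\sigma\approx2\sqrt\Delta$, cap $n/\Delta$, and $|A|\approx\lambda n$. Tanner only gives $|R\setminus N(A)|\lesssim 4n/(\Delta\lambda)$. The $\lambda$-weighted number of $D\subseteq R\setminus N(A)$ with $|D|=n/\Delta$ is then about $(4e)^{n/\Delta}$. The resulting union bound is exponentially larger than $(1+\lambda)^n$, so it gives nothing.

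\textbf{How the paper handles it.} It supplies exactly this long-range-order statement via the Hadas--Peled theorem:
\begin{itemize}
\item local expansion is certified by the return probabilities $P^{2j}(v,v)\leq\frac1n+(\sigma/\Delta)^{2j}$ when $\sigma\leq\Delta/2$, and by the trivial one-edge certificate otherwise;
\item global expansion comes from Cheeger's bound $h(G)\geq(\Delta-\sigma)/2$.
\end{itemize}

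\textbf{A possible alternative repair.} For $\sigma\ll\Delta$ one could instead take the cap to be a large constant multiple of $\frac{\sigma^2}{\Delta^2-\sigma^2}\,n$. A joint count over $A=I\cap L$ and $D\subseteq R\setminus N(A)$, bounding $|R\setminus N(A)|$ by Tanner, then appears to close. But that is a different, two-sided argument, and your proposal does not contain it.
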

When $\sigma = O(\sqrt{\Delta})$, the lower bound is $\Omega \left( \frac{\log \Delta}{\Delta} \right)$, matching the results of the strongest random-regular polymer model algorithms~\cite{chen2022sampling}.

\subsection{Discrete Quadratic Localization and the Moderate-Fugacity Regime}

When $\lambda>\lambda_c(\Delta)$, the single-site Glauber dynamics for the hard-core measure need
not mix rapidly on a bipartite graph in $\cG_{\Delta,\Delta}$. This is witnessed by the random $\Delta$-regular bipartite graph:
the hard-core measure places significant mass on both a left-heavy region and a right-heavy region and moving between these regions requires passing through balanced configurations with exponentially small weight~\cite{mossel2009hardness}.

This obstruction presents itself spectrally as well. Because $G$ is $\Delta$-regular, the biadjacency matrix admits the decomposition $M=\frac{\Delta}{n}J+M_{\perp}$ where $\|M_{\perp}\|_{\op}\leq \sigma.$ Writing out the pinned influence matrix $\Psi_S$ from the spectral mixing criterion of \cite{CCCYZ2025}, this obstruction manifests in the exceptional rank-one term $(\Delta/n)J$. To circumvent this issue, we introduce a negative quadratic field which localizes
the left-right imbalance to a particular window so that its discrete curvature exactly cancels the rank-one contribution $(\Delta/n)J$. The remaining dependency is then governed by $\sigma$.

More concretely, define $
m(I):=|I\cap L|-|I\cap R|,$ $
q:=1-\frac{\Delta}{n},$ and $
f(a):=\frac{a^2}{2}$.
For $t>0$ and $k\in\mathbb Z$, the quadratically localized measure is given by
\[
\mu_{G,t,k}(I)
= \frac{t^{|I|}q^{f(m(I))+km(I)}}{\cZ_{G,k}(t)}, \qquad \cZ_{G,k}(t) = \sum_{I \in \I(G)} t^{|I|}q^{f(m(I))+km(I)}.
\]

When $k$ is very negative, the measure favors left-heavy independent
sets, when $k=0$, it favors balanced independent sets, and when $k$ is very positive, it favors right-heavy independent sets. Being attracted to a given balance prevents Glauber dynamics from crossing between the left-heavy and right-heavy distributions, giving rapid mixing.

We now discuss how to reconstruct the hard-core distribution from the above measure. Define $p_k := \frac{q^{k^2/2}}{\displaystyle\sum_{r\in\mathbb{Z}} q^{r^2/2}}$ so that $\{p_k\}_{k \in \mathbb{Z}}$ defines a probability distribution on the integers and denote this measure by $p$. The key identity connecting $\mu_{G,t,k}$ to the hard-core model is as follows:
\begin{align} \label{eq:inf_sum_intro}
Z_G(t)
&= \sum_{I\in\I(G)} t^{|I|}
= \sum_{I\in\I(G)} t^{|I|}
\mathbb E_{k\sim p}
\left[q^{f(m(I))+km(I)}\right] \nonumber \\
&= \mathbb E_{k\sim p}
\left[
\sum_{I\in\I(G)}
t^{|I|}q^{f(m(I))+km(I)}
\right]
= \mathbb E_{k\sim p}\!\left[\cZ_{G,k}(t)\right]
= \sum_{k\in\mathbb Z} p_k \cZ_{G,k}(t).
\end{align}

Equation \eqref{eq:inf_sum_intro} can be viewed as a discrete version of the Hubbard--Stratonovich transform \cite{Hub59}, which has been deeply studied in the context of Ising models \cite{BL02,BB19,EKZ21,AnariEtAl21,KLR22, AKV2024}.

The work most closely related to our setting is that of \cite{KLR22}, which studies sampling from Ising models whose Hamiltonians have a small number of unfavorable eigen-directions. Their approach identifies these eigen-directions and integrates them out via the Hubbard--Stratonovich transform. The key distinction is that, in our setting, the relevant obstructions are not encoded directly in the Hamiltonian, but instead emerge in the dependency matrix that we seek to control.

We also note that the idea of utilizing quadratic localization to improve sampling efficiency appears in the literature on umbrella sampling \cite{TV77,Kas11} and proximal sampling \cite{LST21}.

 Our proof of Theorem \ref{thm:main} is a direct instantiation of discrete quadratic localization. To show that  the method can be implemented in polynomial time, we establish spectral independence for $\mu_{G,t,k}$ by a theorem from \cite{CCCYZ2025}, giving rapid mixing of Glauber dynamics. To convert the Hubbard--Stratonovich transform into an efficient algorithm, we show that truncating the infinite sum appropriately gives a sufficiently good approximation. The algorithm then follows from applying standard annealing techniques to obtain an  FPRAS and efficient sampler. 

 \subsection{Spectrally Certified Polymer Models at High Fugacity}
 
 For large fugacity, we use the two-phase polymer framework of Jenssen, Keevash, and Perkins~\cite{JenssenKeevashPerkins2020}. We define two abstract polymer models intended to capture the left- and right-dominant independent sets respectively.  We then show two things: (i) the scaled sum of polymer model partition functions is a good approximation of the hard-core partition function (ii) these polymer models have convergent cluster expansions. The polymer models we use are essentially the same as those used in~\cite{JenssenKeevashPerkins2020,liao2019counting,chen2022sampling}. Our main contribution (and the reason we can obtain improved guarantees) is showing that the approximation by polymer models and polymer-convergence criterion follow from a bound on the nontrivial singular spectrum alone (while \cite{JenssenKeevashPerkins2020,liao2019counting} used a small-set expansion condition and~\cite{chen2022sampling} used a second-moment analysis of the random graph).

We first prove that the hard-core measure is concentrated near two phases: independent sets that are predominantly supported on \(L\), and
independent sets that are predominantly supported on \(R\). To establish this, we apply a long-range order theorem of Hadas and Peled~\cite{HP2026}, which shows that phase dominance follows from local and global expansion certificates. When $\sigma$ is far from $\Delta$, we show that these certificates are met by bounding short random-walk return probabilities. When \(\sigma\) is close to \(\Delta\), we combine a simple one-edge local certificate with the Cheeger bound.

To show convergence of the cluster expansion for the polymer models, we verify the Koteck\'y--Preiss condition by utilizing a scale-sensitive Tanner
inequality, lower bounding the neighborhood size of every polymer in
terms of \(\Delta\) and \(\sigma\):
\begin{equation}
    \label{eqTanner}
   \frac{|N(S)|}{|S|}
\geq
\frac{\Delta^2}
{\sigma^2+(\Delta^2-\sigma^2)|S|/n} \,, 
\end{equation}
where $S$ is a subset of one bipartition class of $G$ and $G$ has second singular value at most $\sigma$.
This forces the polymer weights to
decay sufficiently rapidly for the Koteck\'y--Preiss condition to hold.

\subsection{Organization of the Paper} 
Section \ref{sec:prelim} presents preliminaries on spectral graph theory, the Hadas--Peled long-range order criterion, Markov chains, and polymer models. The proofs and algorithms for Theorems \ref{thm:main} and \ref{thm:cluster-expansion} are provided in Sections \ref{sec:main} and \ref{sec:cluster-proof} respectively. We combine Theorems \ref{thm:main} and \ref{thm:cluster-expansion} to attain Theorem \ref{thm:all_fug} in Subsection \ref{subsec:param_matching}.

\section{Preliminaries}
\label{sec:prelim}

\subsection{Spectral graph theory}

Throughout, fix $\sigma\geq 0$ and let \(G=(L\sqcup R,E)\in\mathcal{G}_{\Delta,\Delta}^{\sigma}\),
where \(|L|=|R|=n\), and \(M:=M_G\in\{0,1\}^{n\times n}\) denotes its biadjacency matrix. We write \(\Id\) for the identity matrix, \(J\) for the all-ones matrix,
and \(\one\) for the all-ones vector. For a real matrix \(A\), its
Euclidean operator norm and Frobenius norm are
\[
\|A\|_{\mathrm{op}}
=
\sup_{x\ne0}\frac{\|Ax\|_2}{\|x\|_2},
\qquad
\|A\|_F
=
\bigl(\Tr(A^{\mathsf T}A)\bigr)^{1/2}.
\]

Since \(G\) is \(\Delta\)-regular,
$
M\one=M^{\mathsf T}\one=\Delta\one.
$
Thus the largest singular value of \(M\) is \(\Delta\). Let $\Pi_\perp=\Id-\frac1nJ$ be the orthogonal projection away from the constant direction. The second singular value satisfies
\[
\sigma_2(M)=\|\Pi_\perp M\Pi_\perp\|_{\mathrm{op}}\leq\sigma
\]
by the definition of $\mathcal G_{\Delta,\Delta}^{\sigma}$. A simple observation is the following.

\begin{lemma}
\label{lem:trace-comparison}
If $n\geq\Delta^2$ and $\sigma<\Delta$, then
\begin{equation}
\frac{\sigma^2}{\Delta^2-\sigma^2}
\geq\frac1\Delta.
\label{eq:theta-lower}
\end{equation}
\end{lemma}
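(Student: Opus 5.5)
The plan is a trace comparison on $MM^{\mathsf T}$, as the lemma's name suggests. Since $M$ is a $0/1$ matrix with exactly $\Delta$ ones in each row, we have $\|M\|_F^2=\Tr(MM^{\mathsf T})=n\Delta$. The squared singular values of $M$ are the eigenvalues of $MM^{\mathsf T}$. The top one is $\Delta^2$, coming from the constant vector, and the remaining $n-1$ are each at most $\sigma_2(M)^2\le\sigma^2$. This gives
\[
n\Delta=\Tr(MM^{\mathsf T})\le \Delta^2+(n-1)\sigma^2,
\qquad\text{so}\qquad
\sigma^2\ge\frac{n\Delta-\Delta^2}{n-1}.
\]

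Next I rewrite the target inequality. Because $\sigma<\Delta$, the denominator $\Delta^2-\sigma^2$ is positive. Hence \eqref{eq:theta-lower} is equivalent to $\Delta\sigma^2\ge\Delta^2-\sigma^2$, that is, to $\sigma^2\ge\Delta^2/(\Delta+1)$. It therefore suffices to show
\[
\frac{n\Delta-\Delta^2}{n-1}\ge\frac{\Delta^2}{\Delta+1}.
\]

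To check this, cross-multiply and divide by $\Delta$, obtaining $(n-\Delta)(\Delta+1)\ge\Delta(n-1)$. Expanding the left side gives $n\Delta+n-\Delta^2-\Delta$, so the inequality reduces to $n\ge\Delta^2$. This is exactly the hypothesis of the lemma.

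The argument is elementary and has no real obstacle. The only points needing care are the sign of $\Delta^2-\sigma^2$, which is positive by the assumption $\sigma<\Delta$, and the bookkeeping that the constant direction contributes exactly $\Delta^2$ to the trace.
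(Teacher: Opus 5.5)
Your proof is correct and essentially the paper's argument: the same Frobenius-norm bound $n\Delta\le\Delta^2+(n-1)\sigma^2$, followed by reducing the target to $\sigma^2\ge\Delta^2/(\Delta+1)$. The only cosmetic difference is that the paper notes $\Delta(n-\Delta)/(n-1)$ is increasing in $n$ and evaluates it at $n=\Delta^2$, whereas you cross-multiply directly to reduce the comparison to $n\ge\Delta^2$.
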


\begin{proof}
Because $M$ has exactly $n\Delta$ entries equal to one, $\|M\|_F^2=n\Delta.$ If the singular values are
$\Delta=\sigma_1(M)\geq\sigma_2(M)\geq\cdots$, then
\[
n\Delta
=
\Delta^2+\sum_{i=2}^n\sigma_i(M)^2
\leq
\Delta^2+(n-1)\sigma_2(M)^2.
\]
giving that $\sigma_2(M)^2 \geq \frac{\Delta(n-\Delta)}{n-1}.$ Since $\sigma_2(M)\leq\sigma$, the same lower bound holds for $\sigma^2$.
The right-hand side is increasing in $n$, and at $n=\Delta^2$ it equals
$\Delta^2/(\Delta+1)$. Hence $\sigma^2\geq\frac{\Delta^2}{\Delta+1}$ and substitution into the increasing function
$x\mapsto x/(\Delta^2-x)$ gives the result.
\end{proof}

One reason spectral expansion is useful is that it implies notions of combinatorial
expansion (see for example, \cite{Chung1997}). For a $\Delta$-regular graph $G=(V,E)$, define its edge-Cheeger constant by
\begin{equation}
h(G)
:=
\min_{\substack{\varnothing\neq S\subseteq V\\|S|\leq|V|/2}}
\frac{|E(S,V\setminus S)|}{|S|}.
\label{eq:def-cheeger}
\end{equation}

\begin{lemma}
\label{lem:cheeger}
Let $G \in \mathcal{G}_{\Delta, \Delta}^\sigma$. The edge-Cheeger constant of $G$ satisfies
\begin{equation}
h(G)
\geq
\frac{\Delta-\sigma}{2}.
\label{eq:cheeger-spectral}
\end{equation}
\end{lemma}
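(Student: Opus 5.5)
The plan is to deduce the bound from the easy direction of the Cheeger inequality for the full adjacency matrix of $G$, viewed as a $\Delta$-regular graph on $V=L\sqcup R$ with $|V|=2n$. The adjacency matrix is
\[
A=\begin{pmatrix}0&M\\ M^{\mathsf T}&0\end{pmatrix},
\]
and its eigenvalues are $\pm\sigma_i(M)$. In particular its spectrum is $\Delta,-\Delta,\pm\sigma_2(M),\dots$. The top eigenvector is $\one_V$, and the eigenvalue $-\Delta$ has eigenvector $(\one_L,-\one_R)$. The key point is that we cannot simply apply the standard bound $h(G)\ge(\Delta-\lambda_2(A))/2$. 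It does apply, but $\lambda_2(A)$ is the second largest eigenvalue, which is $\sigma_2(M)\le\sigma$ since $-\Delta$ is the smallest eigenvalue. So in fact the standard easy Cheeger direction $h(G)\ge (\Delta-\lambda_2)/2$ directly gives the claim, and I will just prove that cleanly.

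Fix $S\subseteq V$ with $0<|S|\le n$. Take the test vector $x=\one_S-\frac{|S|}{2n}\one_V$, which is orthogonal to $\one_V$. I will use the Laplacian quadratic form
\[
x^{\mathsf T}(\Delta\Id-A)x=\sum_{uv\in E}(x_u-x_v)^2=|E(S,V\setminus S)|.
\]
Since $x\perp\one_V$, the Courant--Fischer principle gives $x^{\mathsf T}Ax\le\lambda_2(A)\|x\|^2\le\sigma\|x\|^2$. Here $\lambda_2(A)$ is the largest eigenvalue on $\one_V^\perp$, namely $\max(\sigma_2(M),\,\cdot)$ where the remaining eigenvalues are $\le\sigma_2(M)$, including $-\Delta$. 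Hence
\[
|E(S,V\setminus S)|\ge(\Delta-\sigma)\|x\|^2=(\Delta-\sigma)|S|\Bigl(1-\frac{|S|}{2n}\Bigr)\ge\frac{\Delta-\sigma}{2}|S|,
\]
using $|S|\le n$. Dividing by $|S|$ and minimizing over $S$ gives \eqref{eq:cheeger-spectral}.

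The only step requiring care is identifying $\lambda_2(A)$ on $\one_V^\perp$. This comes from the block structure: $A$ restricted to $\one_V^\perp$ has eigenvalues $-\Delta$ together with $\pm\sigma_i(M)$ for $i\ge2$, all at most $\sigma_2(M)\le\sigma$. This follows from the SVD of $M$, whose top singular pair is $(\one/\sqrt n,\one/\sqrt n)$ because $M\one=M^{\mathsf T}\one=\Delta\one$. If $\sigma\ge\Delta$ the claim is trivial, so I may assume $\sigma<\Delta$.
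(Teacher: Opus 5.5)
Your argument is correct and takes essentially the paper's route. You identify the largest eigenvalue of $A$ on $\one_V^{\perp}$ as $\sigma_2(M)\le\sigma$, with $-\Delta$ sitting harmlessly at the bottom, and then apply the easy direction of Cheeger's inequality; the paper cites that inequality, while you prove it inline via the test vector $\one_S-\frac{|S|}{2n}\one_V$ and Courant--Fischer, and your trivial case $\sigma\ge\Delta$ covers the paper's separate disconnected case. The only problem is expository: your first paragraph says the standard bound cannot be applied and then says it can, so delete the first claim.
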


\begin{proof}
If $G$ is disconnected, then $\sigma_2(M)=\Delta\leq\sigma$ and the claimed bound is
trivial. Suppose that $G$ is connected. The eigenvalues of the adjacency matrix
\[
\begin{pmatrix}
0&M\\ M^{\mathsf T}&0
\end{pmatrix}
\]
are the positive and negative singular values of $M$. The largest
eigenvalue is $\Delta$ and the second-largest is
$\sigma_2(M)\leq\sigma$. The standard Cheeger inequality for a $\Delta$-regular graph gives $h(G)\geq(\Delta-\sigma_2(M))/2\geq(\Delta-\sigma)/2$
\cite{HLW2006}.
\end{proof}

In addition, spectral expansion provides finer control of vertex expansion at a specified set size. The following scale-sensitive form of Tanner's inequality provides this control.

The following is the bipartite singular-value form of Tanner’s scale-sensitive expansion inequality \cite{Tanner1984}, see also \cite[Theorem 4.15]{HLW2006}.
\begin{lemma}
\label{lem:tanner}
Let $G \in \mathcal{G}_{\Delta, \Delta}^\sigma$. For every nonempty $S\subseteq L$,
\begin{equation}
\frac{|N(S)|}{|S|}
\geq
\frac{\Delta^2}
{\sigma^2+(\Delta^2-\sigma^2)|S|/n}.
\label{eq:size-sensitive-tanner}
\end{equation}
The same inequality holds for nonempty subsets of $R$.
\end{lemma}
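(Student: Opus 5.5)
The plan is to compare two evaluations of $\|M^{\mathsf T}x\|_2^2$ for the indicator $x=\one_S$ of a nonempty $S\subseteq L$, where $s:=|S|$. The first is combinatorial and gives an upper bound in terms of $|N(S)|$; the second is spectral and gives a lower bound in terms of $\sigma$. The case $S\subseteq R$ is identical with $M$ and $M^{\mathsf T}$ interchanged, since $\sigma_2(M)=\sigma_2(M^{\mathsf T})$.

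\emph{Combinatorial side.} The vector $y:=M^{\mathsf T}\one_S\in\mathbb R^R$ has entry $y_v=|N(v)\cap S|$. It is supported on $N(S)$, and its entries sum to $e(S,R)=\Delta s$ by regularity. By Cauchy--Schwarz,
\[
\|y\|_2^2\geq \frac{(\Delta s)^2}{|N(S)|}.
\]

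\emph{Spectral side.} Decompose $\one_S=\frac{s}{n}\one+x_\perp$, where $x_\perp\perp\one$ and $\|x_\perp\|_2^2=s-s^2/n$. Since $M^{\mathsf T}\one=\Delta\one$ and $M\one=\Delta\one$, the map $M^{\mathsf T}$ sends $\one^\perp$ into $\one^\perp$. Hence $M^{\mathsf T}x_\perp=\Pi_\perp M^{\mathsf T}\Pi_\perp x_\perp$, whose norm is at most $\sigma\|x_\perp\|_2$ by the characterization $\sigma_2(M)=\|\Pi_\perp M\Pi_\perp\|_{\op}\leq\sigma$ recalled above. The two components $M^{\mathsf T}(\frac sn\one)=\frac{\Delta s}{n}\one$ and $M^{\mathsf T}x_\perp$ are orthogonal, so
\[
\|y\|_2^2\leq \frac{\Delta^2s^2}{n}+\sigma^2\Bigl(s-\frac{s^2}{n}\Bigr)=s\Bigl(\sigma^2+(\Delta^2-\sigma^2)\frac sn\Bigr).
\]

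\emph{Conclusion.} Combining the two displays gives
\[
\frac{\Delta^2 s^2}{|N(S)|}\leq s\Bigl(\sigma^2+(\Delta^2-\sigma^2)s/n\Bigr).
\]
Rearranging yields \eqref{eq:size-sensitive-tanner}. The denominator is positive: if $\sigma<\Delta$ it is at least $\sigma^2+(\Delta^2-\sigma^2)/n>0$, and if $\sigma\geq\Delta$ it is at least $\Delta^2 s/n>0$.

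\emph{Main obstacle.} The only real point of care is justifying that $M^{\mathsf T}$ preserves $\one^\perp$, so that the cross term vanishes and the operator-norm bound applies exactly to $x_\perp$. This follows because $\langle M^{\mathsf T}x_\perp,\one\rangle=\langle x_\perp,M\one\rangle=\Delta\langle x_\perp,\one\rangle=0$.
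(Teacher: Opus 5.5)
Your proof is correct and follows the paper's argument essentially verbatim. Like the paper, you decompose $\one_S$ into its constant component and a component orthogonal to $\one$, bound $\|M^{\mathsf T}\one_S\|_2^2$ above by $|S|\bigl(\sigma^2+(\Delta^2-\sigma^2)|S|/n\bigr)$, bound it below by $\Delta^2|S|^2/|N(S)|$ via Cauchy--Schwarz on $N(S)$, and combine. Your explicit checks that $M^{\mathsf T}$ preserves $\one^\perp$ and that the denominator is positive fill in details the paper leaves implicit.
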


\begin{proof}
Let $x:=|S|/n$ and decompose
\[
\one_S=x\one+u,
\qquad
\langle u,\one\rangle=0.
\]
Since $M^{\mathsf T}\one=\Delta\one$, and since the operator norm of
$M^{\mathsf T}$ on $\one^\perp$ is at most $\sigma$, the two summands in
\[
M^{\mathsf T}\one_S=\Delta x\one+M^{\mathsf T}u
\]
are orthogonal. Hence
\begin{align}
\|M^{\mathsf T}\one_S\|_2^2
&=
\Delta^2x^2n+\|M^{\mathsf T}u\|_2^2
\nonumber\\
&\leq
\Delta^2x^2n+\sigma^2\|u\|_2^2
\nonumber\\
&=
n\bigl(\Delta^2x^2+\sigma^2x(1-x)\bigr).
\label{eq:tanner-upper}
\end{align}
The vector $M^{\mathsf T}\one_S$ is supported on $N(S)$ and has
coordinate sum $\Delta|S|$. Hence Cauchy--Schwarz gives
\begin{equation}
\|M^{\mathsf T}\one_S\|_2^2
\geq
\frac{\Delta^2|S|^2}{|N(S)|}.
\label{eq:tanner-lower}
\end{equation}
Combining \eqref{eq:tanner-upper} and \eqref{eq:tanner-lower}, then
substituting $x=|S|/n$, proves \eqref{eq:size-sensitive-tanner}.
The argument for subsets of $R$ uses $M$ in place of
$M^{\mathsf T}$.
\end{proof}

\subsection{Long-range order}

We recall the following results of Hadas and Peled \cite{HP2026}. In
particular, they define a local expansion property through a random
connected subgraph.

\begin{definition}[Local expansion; \cite{HP2026}]
\label{def:local-expansion}
Let $H=(V,E)$ be a finite simple $\Delta$-regular graph. We say that
$H$ satisfies the local expansion property with constants 
$C_{\mathrm{LE}},M_{\mathrm{LE}}>0$ if there is a probability
distribution on connected subgraphs $\boldsymbol T$ of $H$, each
having at least one edge, such that for every edge $e\in E$ and every
vertex $v\in V$,
\begin{align}
|E|\,\mathbb P(e\in E(\boldsymbol T))
&\leq M_{\mathrm{LE}},
\label{eq:LE-edge}\\
|V|\,\mathbb P\bigl(N(v)\cap V(\boldsymbol T)\neq\varnothing\bigr)
&\geq \frac{\Delta M_{\mathrm{LE}}}{C_{\mathrm{LE}}}.
\label{eq:LE-vertex}
\end{align}
\end{definition}

Note that every finite, simple, $\Delta$-regular graph satisfies the local expansion property with constants $C_{\mathrm{LE}}=M_{\mathrm{LE}}=1$. Indeed, let $\boldsymbol T$ consist of one edge chosen uniformly at
random. Inequality~\eqref{eq:LE-edge}  then holds with equality.
Moreover, the $\Delta^2$ edge incidences at the vertices of $N(v)$
involve at least $\Delta^2/2$ distinct edges, whereas
$|E|=\Delta|V|/2$, hence
\[
|V|\,\mathbb P\bigl(N(v)\cap V(\boldsymbol T)\neq\varnothing\bigr)
\geq \Delta.
\]

When $H$ is bipartite, a random walk gives an alternative certificate.
\begin{proposition}
Let $H$ be a finite, simple $\Delta$-regular bipartite graph, and let $P$ be the transition matrix of the simple random walk on a bipartite graph $H$.  Then $H$ satisfies the local expansion property of Definition~\ref{def:local-expansion}  with constants $M_{\mathrm{LE}}=\Delta$  and
\begin{equation}
C_{\mathrm{LE}} = 1 + \Delta \cdot \max_{v\in V(H)} \sum_{j=1}^{\lfloor(\Delta-1)/2\rfloor}P^{2j}(v,v).
\label{eq:HP-random-walk-certificate}
\end{equation}
\end{proposition}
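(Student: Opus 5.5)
The plan is to certify local expansion with a short random walk. Let $\boldsymbol T$ be the subgraph traced by a simple random walk $X_0,X_1,\dots,X_{\Delta-1}$ of $\Delta-1$ steps, with $X_0$ uniform on $V=V(H)$. Since $H$ is regular, the uniform distribution is stationary, so every $X_s$ is uniform. $\boldsymbol T$ is connected and has at least one edge (as $\Delta\geq 3$). The two conditions of Definition~\ref{def:local-expansion} will be checked separately: the edge bound \eqref{eq:LE-edge} by a first-moment count, and the vertex bound \eqref{eq:LE-vertex} by a second-moment count.

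\emph{Edge condition.} At stationarity, the step $X_s\to X_{s+1}$ traverses a fixed edge $e$ with probability $2\cdot\frac{1}{|V|}\cdot\frac1\Delta=\frac1{|E|}$. A union bound over the $\Delta-1$ steps gives $|E|\,\mathbb P(e\in E(\boldsymbol T))\leq \Delta-1\leq\Delta$. This is \eqref{eq:LE-edge} with $M_{\mathrm{LE}}=\Delta$.

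\emph{Vertex condition.} Fix $v$ and let $Y=\sum_{s=0}^{\Delta-1}\bbone[X_s\in N(v)]$. Then $\mathbb E Y=\Delta\cdot\Delta/|V|$, and by Paley--Zygmund (second-moment method) $\mathbb P(Y>0)\geq (\mathbb EY)^2/\mathbb E[Y^2]$. It therefore suffices to show
\[
\mathbb E[Y^2]\leq \mathbb E Y\Bigl(1+\Delta\max_v\sum_{j=1}^{\lfloor(\Delta-1)/2\rfloor}P^{2j}(v,v)\Bigr),
\]
since this gives $|V|\,\mathbb P(Y>0)\geq \Delta^2/C_{\mathrm{LE}}=\Delta M_{\mathrm{LE}}/C_{\mathrm{LE}}$.

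Bounding the off-diagonal pairs uses three facts.
\begin{enumerate}
\item Because $H$ is bipartite, $N(v)$ lies in one part, so two visits to $N(v)$ occur at times of the same parity. Their gap is $2j$ with $1\leq j\leq\lfloor(\Delta-1)/2\rfloor$, which matches the range of the sum in \eqref{eq:HP-random-walk-certificate}.
\item For a pair of times with gap $2j$, stationarity gives probability $\frac1{|V|}\sum_{w,u\in N(v)}P^{2j}(w,u)=\frac{\Delta^2}{|V|}P^{2j+2}(v,v)$. Here I use $P(u,v)=P(v,w)=1/\Delta$ to close up the path $v\to w\to\cdots\to u\to v$.
\item Spectral decomposition of the reversible (symmetric) $P$ gives $P^{2k}(v,v)=\sum_i\phi_i(v)^2\lambda_i^{2k}$, which is nonincreasing in $k$. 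Hence $P^{2j+2}(v,v)\leq P^{2j}(v,v)$.
\end{enumerate}
The number of ordered time pairs with gap $2j$ is at most $2(\Delta-2j)$. Summing over $j$ and comparing with $\mathbb EY=\Delta^2/|V|$ yields the bound above, up to the constant in front of the sum.

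The main obstacle is that constant. Naive counting of unordered pairs produces a factor $2$, i.e.\ $1+2\Delta\sum_j P^{2j}(v,v)$, whereas the stated $C_{\mathrm{LE}}$ has none. To remove it, I would first try the exact pair count $\sum_{j}2(\Delta-2j)$ together with the monotonicity of $P^{2j}(v,v)$, shifting indices so that each gap-$2j$ term is charged to $P^{2j}(v,v)$ while the slack $2j$ in the pair count absorbs the factor $2$. If that fails, I would replace Paley--Zygmund by a first-hitting decomposition $\mathbb P(Y>0)\geq \mathbb EY/\max\mathbb E[Y\mid\text{first hit}]$, symmetrized over $w\in N(v)$ using reversibility, which counts only forward pairs and hence carries no factor $2$.
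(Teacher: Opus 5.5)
Your edge bound is correct. The second-moment route to the vertex bound \eqref{eq:LE-vertex}, however, cannot produce the stated $C_{\mathrm{LE}}$, and the problem is not the sloppy factor $2$. Counting pairs exactly, with $J=\lfloor(\Delta-1)/2\rfloor$, you get $\mathbb{E}[Y^2]/\mathbb{E}Y=1+\sum_{j=1}^{J}2(\Delta-2j)P^{2j+2}(v,v)$. Monotonicity of $a_j:=P^{2j}(v,v)$ cannot bound this by $1+\Delta\sum_{j=1}^{J}a_j$. For $a_1=a_2=a_3$ and $a_j=0$ afterwards, the two sides are $(4\Delta-12)a_1$ and $3\Delta a_1$, so no index shifting works once $\Delta>12$. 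This is not an artifact of the test sequence. Take $\Delta$ even and $N>\Delta$, and let $H$ be the cycle $C_{2N}$ with each vertex replaced by an independent set of size $\Delta/2$ and each edge by a complete bipartite graph. Then $H$ is vertex-transitive with $P^{2j}(v,v)=\frac{2}{\Delta}\binom{2j}{j}4^{-j}$ for $j\le J+1$. The correction term in $\mathbb{E}[Y^2]/\mathbb{E}Y$ is asymptotically $\frac43(C_{\mathrm{LE}}-1)$; for $\Delta=100$ it is about $16.0$, against $C_{\mathrm{LE}}-1\approx13.9$. So Paley--Zygmund is too lossy here, and your first proposed fix cannot succeed.

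Your fallback is the right idea, but its key step is missing, and ``symmetrizing over $w$'' will not supply it. Write $\mathbb{P}(Y>0)=\mathbb{E}Y/\mathbb{E}[Y\mid Y>0]$ and condition on the first hitting time and place $(\tau,X_\tau)=(s,w)$ with $w\in N(v)$. The Markov property gives $\mathbb{E}[Y\mid\tau=s,X_\tau=w]\le1+\sum_{j=1}^{J}P^{2j}(w,N(v))=1+\Delta\sum_{j=1}^{J}P^{2j+1}(w,v)$. The law of $X_\tau$ is in general not uniform on $N(v)$, so this must be bounded for every $w$, not on average. The estimate you need is the spectral Cauchy--Schwarz bound $P^{2j+1}(w,v)=\langle P^je_w,P^{j+1}e_v\rangle\le\sqrt{P^{2j}(w,w)\,P^{2j+2}(v,v)}\le\tfrac12\bigl(P^{2j}(w,w)+P^{2j}(v,v)\bigr)$, which uses your monotonicity fact. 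Summing over $j$ gives $\mathbb{E}[Y\mid Y>0]\le C_{\mathrm{LE}}$, hence $|V|\,\mathbb{P}(Y>0)\ge\Delta^2/C_{\mathrm{LE}}$. This also explains why $C_{\mathrm{LE}}$ involves a maximum over all vertices. With this step your argument becomes a self-contained proof. The paper instead gets the result in one line by invoking Lemma~1.6 of Hadas--Peled with $M_0=\Delta$ and noting that odd return probabilities vanish on a bipartite graph.
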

\begin{proof}
  Apply \cite[Lemma 1.6]{HP2026} with $M_0 = \Delta$. Since $H$ is bipartite, odd return probabilities are $0$, which gives
\eqref{eq:HP-random-walk-certificate}.    
\end{proof}

These certificates allow us to apply the Hadas--Peled criterion to bound the total weight of independent sets having substantial intersection size with both bipartition classes. In particular, the second part of Theorem~1.7 of
Hadas--Peled~\cite{HP2026}, written in the present notation, gives the following.
\begin{theorem}[Theorem 1.7 of \cite{HP2026}]
\label{prop:HP}
There are absolute constants $C_{\mathrm{HP}},c_{\mathrm{HP}}>0$ such
that the following holds. Let $\Delta\geq2$, let $\lambda>0$, and let
$H=(L_H\sqcup R_H,E_H)$ be a finite simple $\Delta$-regular bipartite
graph. Suppose that $H$ has edge-Cheeger constant
$h(H)>0$ and local expansion parameters
$C_{\mathrm{LE}},M_{\mathrm{LE}}>0$.  Let
$1\leq r\leq |L_H|/2$. If
\begin{equation}
\log(1+\lambda)
>
C_{\mathrm{HP}}\frac{C_{\mathrm{LE}}}{\Delta}
\max\left\{
\log\!\left(
C_{\mathrm{HP}}\frac{\Delta}{h(H)}
\frac{|V(H)|}{r}
\right),
\frac{\Delta}{h(H)}
\frac{|V(H)|}{rM_{\mathrm{LE}}}
\right\},
\label{eq:HP-hypothesis}
\end{equation}
then
\begin{equation}
\sum_{\substack{I\in\I(H)\\
|I\cap L_H|>r,\ |I\cap R_H|>r}}
\lambda^{|I|}
\leq
(1+\lambda)^{|L_H|}
\exp\!\left[
-c_{\mathrm{HP}}\frac{h(H)}{\Delta}\,
r\log(1+\lambda)
\right].
\label{eq:HP-conclusion}
\end{equation}
\end{theorem}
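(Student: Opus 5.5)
This statement is imported from Hadas and Peled~\cite{HP2026}, so the plan is not to reprove their long-range order theorem. Instead, we verify that the quoted form is a faithful specialization of their Theorem~1.7 in our notation.

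First, we match the objects. Their graph is a finite simple $\Delta$-regular bipartite graph with bipartition $(L_H,R_H)$ and fugacity $\lambda$. Their edge-expansion parameter is the edge-Cheeger constant $h(H)$ of \eqref{eq:def-cheeger}. Their local expansion constants are exactly those of Definition~\ref{def:local-expansion}. Regularity forces $|L_H|=|R_H|=|V(H)|/2$, so $(1+\lambda)^{|L_H|}$ is the natural normalization: the partition function of a single phase, i.e.\ subsets of one side.

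Second, we rewrite their hypothesis. It is stated in terms of a parameter, essentially the number of ``minority'' vertices $r$ relative to $|V(H)|$, together with $h(H)/\Delta$ and $C_{\mathrm{LE}}, M_{\mathrm{LE}}$. We translate it into \eqref{eq:HP-hypothesis} by substituting these quantities, absorbing all numerical factors into the single constant $C_{\mathrm{HP}}$. If their condition is phrased as a lower bound on $\lambda$ rather than on $\log(1+\lambda)$, the two forms agree up to constants in the relevant range, because the right-hand side of \eqref{eq:HP-hypothesis} always involves $\log(1+\lambda)$ scaling. This is the step where care is needed, since a mismatch in which quantity sits inside the logarithm would change the statement.

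Third, we translate the conclusion. Their bound is on the probability, under $\mu_{H,\lambda}$, that both $|I\cap L_H|$ and $|I\cap R_H|$ exceed $r$. Multiplying by $Z_H(\lambda)$ converts this into a bound on the weighted sum in \eqref{eq:HP-conclusion}. This requires the upper bound $Z_H(\lambda)\le 2(1+\lambda)^{|L_H|}\exp[o(\cdot)]$, or else their statement must already be in the unnormalized form with the factor $(1+\lambda)^{|L_H|}$. In the first case, we absorb the prefactor by shrinking $c_{\mathrm{HP}}$, which is possible because $r\ge1$ and the hypothesis guarantees $\frac{h}{\Delta}r\log(1+\lambda)$ is bounded below by a constant.

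The main obstacle is this bookkeeping of constants, in particular checking that the exponent's dependence $h(H)/\Delta\cdot r\log(1+\lambda)$ matches theirs with $r\le |L_H|/2$. We would resolve it by directly comparing with the statement in~\cite{HP2026}.
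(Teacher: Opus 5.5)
Your plan matches the paper's treatment. The paper gives no proof of this statement; it imports Theorem~1.7 of \cite{HP2026} and transcribes it into its own notation.

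One branch of your third step would not go through as written. Every subset of $L_H$ is independent, so $Z_H(\lambda)\geq(1+\lambda)^{|L_H|}$, and the unnormalized form \eqref{eq:HP-conclusion} is the \emph{stronger} statement. If the source bound were normalized by $Z_H(\lambda)$, you could not recover \eqref{eq:HP-conclusion} just by shrinking $c_{\mathrm{HP}}$. Absorbing the factor $2$ is fine, since the hypothesis does force $\frac{h(H)}{\Delta}r\log(1+\lambda)$ above a constant. The problem is the upper bound $Z_H(\lambda)\leq 2(1+\lambda)^{|L_H|}e^{o(\cdot)}$ itself. It must control the total weight of all configurations with a small minority side, which is an expansion-based phase-dominance estimate, not constant bookkeeping. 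So the quoted form is justified by citation only if the source is already stated relative to $(1+\lambda)^{|L_H|}$, or in something at least as strong. For this paper's purposes the normalized version would in fact suffice, since Lemma~\ref{lem:uniform-phase-error} only needs $W_{\varnothing}/Z_G(\lambda)$ to be small.

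Your remark that hypotheses of the form $\lambda>X$ and $\log(1+\lambda)>X$ agree up to constants is also false when the right-hand side of \eqref{eq:HP-hypothesis} is large. You never need that equivalence, though: the trivial implication $\log(1+\lambda)>X\Rightarrow\lambda>X$ is the only direction required.
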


\subsection{Glauber dynamics and a dependency-matrix criterion}

Let \(P\) be the transition matrix of a finite reversible Markov chain
with stationary distribution \(\pi\). Its mixing time is
\[
T_{\mathrm{mix}}(\varepsilon)
=
\min\left\{
t\ge0:
\max_x\TV\bigl(P^t(x,\cdot),\pi\bigr)\le\varepsilon
\right\} \,,
\]
where $
\TV(\mu,\nu)
=
\frac12\sum_x|\mu(x)-\nu(x)|$ is the total variation distance.
We write \(T_{\mathrm{mix}}:=T_{\mathrm{mix}}(1/4)\).

One of the most widely studied Markov chains for sampling from
high-dimensional probability distributions is the single-site Glauber
dynamics. In this paper, we use its specialization to distributions on downward-closed subsets of 
the Boolean hypercube. Let \(m\in\mathbb N\), let \(\mathcal F\subseteq2^{[m]}\) be a nonempty downward-closed
family, and let \(\mu\) be a full-support probability distribution on
\(\mathcal F\). We identify each
configuration in \(\{0,1\}^m\) with the corresponding subset of
\([m]\). From the current state \(S\in\mathcal F\), the Glauber dynamics chooses a
coordinate \(v\in[m]\) uniformly at random and sets
\[
T:=S\setminus\{v\}.
\]
It then resamples whether \(v\) is present according to its conditional
distribution under \(\mu\), given the configuration \(T\) on all other
coordinates. More precisely, the next state is
\[
\begin{cases}
T\cup\{v\},
  &\displaystyle\text{with probability }
  \frac{\mu(T\cup\{v\})}
       {\mu(T)+\mu(T\cup\{v\})},\\[1.2ex]
T,
  &\displaystyle\text{with probability }
  \frac{\mu(T)}
       {\mu(T)+\mu(T\cup\{v\})}.
\end{cases}
\]
Here we interpret \(\mu(A)=0\) whenever \(A\notin\mathcal F\).

For \(S\in\mathcal F\), let
\[
U_S
=
\{v\in[m]\setminus S:S\cup\{v\}\in\mathcal F\}
\]
be the elements that remain available after conditioning on \(S\). The conditional distribution on these remaining elements is called a \emph{pinning}. For every nonmaximal \(S\), define the marginal ratios
\[
r_S(v)
=
\frac{\mu(S\cup\{v\})}{\mu(S)},
\qquad v\in U_S,
\]
and the pairwise dependency matrix \(\Psi_S\in\mathbb R^{U_S\times U_S}\) by
\[
\Psi_S(u,v)
=
\begin{cases}
\displaystyle
\frac{\mu(S\cup\{u,v\})\mu(S)}
 {\mu(S\cup\{u\})\mu(S\cup\{v\})}-1,
&u\ne v,\\[2ex]
0,&u=v.
\end{cases}
\]
Here \(\mu(T)\) is interpreted as zero when \(T\notin\mathcal F\).
Finally, write
\[
r_{\max}
=
\max_{\substack{S\in\mathcal F\\v\in U_S}}r_S(v),
\qquad
\mu_{\min}
=
\min_{S\in\mathcal F}\mu(S).
\]

\begin{theorem}[{\cite[Theorem~1.9]{CCCYZ2025}}]
\label{thm:chen-general}
Let \(\mathcal F\subseteq2^{[m]}\) be a nonempty downward-closed
family, and let \(\mu\) be a full-support probability distribution on
\(\mathcal F\). Suppose that, for some \(\delta\in(0,1)\),
\[
\Psi_S
\preceq
\Id+(1-\delta)\diag(r_S)^{-1}
\]
for every nonmaximal \(S\in\mathcal F\). Then the single-site Glauber
dynamics for \(\mu\) satisfies
\[
T_{\mathrm{mix}}
=
O\!\left(
(1+r_{\max})\frac{m}{\delta}
\log\frac1{\mu_{\min}}
\right).
\]
\end{theorem}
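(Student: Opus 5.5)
Since this is a result quoted from \cite{CCCYZ2025}, the paper can simply invoke it. Still, here is how I would prove it from scratch: establish a Poincaré inequality for the Glauber dynamics by induction over pinnings, in the style of local-to-global spectral independence arguments. The target is a spectral gap of order $\delta/\bigl((1+r_{\max})m\bigr)$. The stated mixing bound then follows from the standard estimate $T_{\mathrm{mix}}\le \mathrm{gap}^{-1}\log(4/\mu_{\min})$.

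\emph{Step 1: encode a pinning as a local walk.} For a nonmaximal $S\in\mathcal F$, consider the ``add one more element'' distribution on $U_S$. Give $v\in U_S$ weight proportional to $r_S(v)$, so that $\pi_S(v)=r_S(v)/\sum_{w}r_S(w)$. Two-step transitions within the link of $S$ are then governed by the ratios
\[
\frac{\mu(S\cup\{u,v\})}{\mu(S\cup\{u\})}=r_S(v)\bigl(1+\Psi_S(u,v)\bigr),\qquad u\neq v.
\]
Hence the local transition operator, after symmetrising by $\diag(r_S)^{1/2}$, is expressed through
\[
\diag(r_S)^{1/2}\,\Psi_S\,\diag(r_S)^{1/2}+\bigl(\text{rank-one and diagonal terms}\bigr).
\]
The hypothesis $\Psi_S\preceq \Id+(1-\delta)\diag(r_S)^{-1}$ becomes, after conjugation,
\[
\diag(r_S)^{1/2}\,\Psi_S\,\diag(r_S)^{1/2}\preceq \diag(r_S)+(1-\delta)\Id .
\]
This is the statement that the pinned covariance of the indicator vector, measured relative to the product-measure baseline, is at most $1-\delta$ in the relevant norm. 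In other words, it is a $(1-\delta)$-spectral-independence bound, with the $\diag(r_S)$ term accounting for the non-homogeneous (downward-closed rather than pure) structure.

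\emph{Step 2: one-step variance contraction.} From Step 1, derive for each pinning $S$ and each test function $f$ an approximate variance decomposition. It says that the variance of $f$ under $\mu(\cdot\mid \supseteq S)$ is controlled by the expected variance after additionally pinning one element, up to a factor of the form $\bigl(1-\delta/(|U_S|(1+r_{\max}))\bigr)^{-1}$. Concretely, I would compare the down--up walk at level $S$ with the product walk. The factor $1+r_{\max}$ enters when converting between ``insert $v$'' and ``resample $v$'' steps, since Glauber also removes elements.

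\emph{Step 3: induction to a global Poincaré inequality.} Iterate the decomposition along a sequential exposure of the configuration, peeling off elements of $S$ one at a time. This is the trickle-down / local-to-global scheme of Alev--Lau and Chen--Liu--Vigoda. Chaining the factors yields
\[
\mathrm{Var}_\mu(f)\le C\,\frac{m(1+r_{\max})}{\delta}\,\mathcal E_{\mathrm{GD}}(f,f).
\]
Here the factor $1/\delta$ rather than $m^{O(1/\delta)}$ is obtained because the bound at each level is in the multiplicative-diagonal form $\Id+(1-\delta)\diag(r)^{-1}$, which telescopes.

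\emph{Main obstacle.} I expect the hardest step to be Step 3 for non-pure (downward-closed) families. Maximal elements have varying sizes, so the standard homogeneous local-to-global theorem does not apply directly. My first attempt would be to homogenise by adding dummy ``absent'' elements, or to pass through a field-dynamics comparison, and then check that the bound of Step 1 survives this operation with only the loss $1+r_{\max}$.
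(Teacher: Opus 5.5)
\paragraph{Verdict.} Your outline has a genuine gap at its core. For context, the paper does not prove this statement itself: it quotes \cite[Theorem~1.9]{CCCYZ2025} and points to \cite{GJMMPS26} for a short proof by a discrete Bochner--Bakry--\'Emery argument.

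\paragraph{Where the outline fails.} In Step~1 you read the hypothesis as a $(1-\delta)$-spectral-independence bound on the pinned covariance. But $\Psi_S$ only involves $\mu(S)$, $\mu(S\cup\{u\})$, $\mu(S\cup\{v\})$ and $\mu(S\cup\{u,v\})$. With $g_S(z)=\sum_T\mu(S\cup T)z^T$ one has $\diag(r_S)(\Psi_S-\Id)\diag(r_S)=\nabla^2\log g_S(0)$. So the hypothesis is the zero-field statement $\nabla^2\log g_S(0)\preceq(1-\delta)\diag(\nabla\log g_S(0))$. The pinned covariance is instead $\nabla^2\log g_S(\one)+\diag(\nabla\log g_S(\one))$, which depends on the whole conditional law. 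For the hard-core model, for instance, the hypothesis reads $-\lambda A[U_S]\preceq(\lambda+1-\delta)\Id$, a least-eigenvalue condition, not a bound on pinned correlations.

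In Step~3, suppose you even had a genuine level-uniform spectral-independence constant $C$. The Alev--Lau/Chen--Liu--Vigoda product $\frac1m\prod_k(1-\eta_k)$ then gives a gap that loses a power of $m$ growing with $C$, not a clean factor $\delta$. Your claim that the form $\Id+(1-\delta)\diag(r_S)^{-1}$ ``telescopes'' is exactly the missing content, and you give no mechanism for it. Step~2 likewise asserts, without derivation, a link-level contraction at rate $\delta/(|U_S|(1+r_{\max}))$; this is just the conclusion restated for the link.

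The non-pure obstacle you flag is also left open. Homogenizing does not help, because the local walks of the homogenized complex are governed by full conditional marginals, which the hypothesis does not control.

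\paragraph{The missing idea.} Treat the hypothesis as a pointwise curvature condition for the continuous-time birth--death chain
\[
\mathcal Lf(S)=\sum_{v\in U_S}r_S(v)\,D_vf(S)+\sum_{v\in S}\bigl(f(S\setminus\{v\})-f(S)\bigr),\qquad D_vf(S):=f(S\cup\{v\})-f(S).
\]
This chain is reversible for $\mu$, with $\mathcal E(f,f)=\mathbb E_\mu\langle\nabla f,R_S\nabla f\rangle$, where $R_S=\diag(r_S)$ and $\nabla f(S)=(D_vf(S))_{v\in U_S}$. Using $r_{S\cup\{v\}}(u)=r_S(u)(1+\Psi_S(u,v))$, one computes
\[
D_v\mathcal Lf=\widetilde{\mathcal L}_vD_vf+\sum_{u\in U_S\setminus\{v\}}r_S(u)\Psi_S(u,v)D_uf-(1+r_S(v))D_vf.
\]
Here $\widetilde{\mathcal L}_v$ is the analogous birth--death generator on $\{S:v\notin S,\ S\cup\{v\}\in\mathcal F\}$ with birth rates $r_{S\cup\{v\}}(\cdot)$; it is reversible for $S\mapsto\mu(S\cup\{v\})$. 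Inserting this into $\mathbb E_\mu[(\mathcal Lf)^2]=-\mathcal E(\mathcal Lf,f)$ gives the Bochner identity
\[
\mathbb E_\mu[(\mathcal Lf)^2]=\sum_v\widetilde{\mathcal E}_v(D_vf,D_vf)+\mathbb E_\mu\bigl\langle\nabla f,(R_S+R_S^2-R_S\Psi_SR_S)\nabla f\bigr\rangle,
\]
where each Dirichlet form satisfies $\widetilde{\mathcal E}_v\geq0$.

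Conjugating the hypothesis by $R_S$ gives exactly $R_S+R_S^2-R_S\Psi_SR_S\succeq\delta R_S$. Hence $\mathbb E_\mu[(\mathcal Lf)^2]\geq\delta\,\mathcal E(f,f)$, so $\mathrm{gap}(\mathcal L)\geq\delta$ in one stroke, with no induction over levels.

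Finally, compare with the heat-bath Glauber chain. It moves $S\to S\cup\{v\}$ with probability $\frac1m\frac{r_S(v)}{1+r_S(v)}$, so its Dirichlet form is at least $\mathcal E(f,f)/(m(1+r_{\max}))$. The chain is positive semidefinite, so your final step $T_{\mathrm{mix}}\leq\mathrm{gap}^{-1}\log(4/\mu_{\min})$ yields the stated bound.
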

See also~\cite{GJMMPS26} for a short, alternative proof of Theorem~\ref{thm:chen-general} adapting the Bochner--Bakry--\'{E}mery approach of~\cite{kondratiev2013spectral} to the discrete setting.

\subsection{Polymer models}

We briefly recall the polymer models used in our high-fugacity argument. For background on algorithmic applications of polymer models and cluster expansions, see
\cite{helmuth2019algorithmic,JenssenKeevashPerkins2020}.

Let \(H\) be a finite graph, called the \emph{host graph}. A polymer
model on \(H\) is specified by a collection \(\cP=\cP(H)\) of nonempty
connected subsets of \(V(H)\), together with a weight
\(w_\gamma\in\mathbb C\) for each \(\gamma\in\cP\). Two polymers
\(\gamma,\gamma'\in\cP\) are said to be compatible if $\operatorname{dist}_H(\gamma,\gamma')>1,$
and incompatible otherwise. Let \(\Omega(\cP)\) denote the collection
of all sets of pairwise compatible polymers, including the empty set.
The polymer partition function is
\begin{equation}
\Xi
:=
\sum_{\Gamma\in\Omega(\cP)}
\prod_{\gamma\in\Gamma}w_\gamma .
\label{eq:polymer-partition-function}
\end{equation}
When all polymer weights are nonnegative, the associated polymer Gibbs
distribution is
\[
\nu(\Gamma)
=
\frac{1}{\Xi}
\prod_{\gamma\in\Gamma}w_\gamma,
\qquad
\Gamma\in\Omega(\cP).
\]

To describe the cluster expansion, consider an ordered tuple
\(\boldsymbol{\gamma}=(\gamma_1,\ldots,\gamma_k)\in\cP^k\), where
repetitions are allowed. Its incompatibility graph
\(F_{\boldsymbol{\gamma}}\) is the graph on vertex set \([k]\) in which
distinct \(i,j\in[k]\) are adjacent exactly when
\(\gamma_i\) and \(\gamma_j\) are incompatible. Define
\[
\phi(\boldsymbol{\gamma})
:=
\frac{1}{k!}
\sum_{\substack{
A\subseteq E(F_{\boldsymbol{\gamma}})\\
([k],A)\ \mathrm{connected}
}}
(-1)^{|A|},
\]
with the convention that the graph on one vertex is connected. Thus
\(\phi(\boldsymbol{\gamma})=0\) unless
\(F_{\boldsymbol{\gamma}}\) is connected.  The cluster expansion is the formal power series 
\begin{equation}
\log\Xi
=
\sum_{k\geq 1}
\sum_{\boldsymbol{\gamma}\in\cP^k}
\phi(\boldsymbol{\gamma})
\prod_{i=1}^k w_{\gamma_i}.
\label{eq:cluster-expansion}
\end{equation}
A tuple with connected incompatibility graph is called a
\emph{cluster}. We write $|\boldsymbol{\gamma}|
:=
\sum_{i=1}^k|\gamma_i|$ for its total size. We use the following form of the Koteck\'y--Preiss convergence criterion; see \cite{KoteckyPreiss1986} and
\cite[Theorem~6]{JenssenKeevashPerkins2020}.

\begin{theorem}[Koteck\'y--Preiss criterion]
\label{thm:KP}
Let \((\cP,w)\) be a graph-based polymer model on \(H\), and let
\(\mathfrak g:\cP\to[0,\infty)\). Suppose that every
\(\gamma\in\cP\) satisfies
\begin{equation}
\sum_{\substack{\gamma'\in\cP\\
\operatorname{dist}_H(\gamma,\gamma')\leq 1}}
|w_{\gamma'}|
\exp\!\left(
|\gamma'|+\mathfrak g(\gamma')
\right)
\leq |\gamma|.
\label{eq:KP-criterion}
\end{equation}
Then the cluster expansion \eqref{eq:cluster-expansion} converges
absolutely. Moreover, extending \(\mathfrak g\) additively to clusters
by $\mathfrak g(\boldsymbol{\gamma})
:=
\sum_{i=1}^k\mathfrak g(\gamma_i),$
we have, for every \(v\in V(H)\),
\begin{equation}
\sum_{\substack{
k\geq1,\ \boldsymbol{\gamma}\in\cP^k\\
F_{\boldsymbol{\gamma}}\ \mathrm{connected}\\
v\in\bigcup_{i=1}^k\gamma_i
}}
|\phi(\boldsymbol{\gamma})|
\prod_{i=1}^k|w_{\gamma_i}|
\exp\!\left(
\mathfrak g(\boldsymbol{\gamma})
\right)
\leq 1.
\label{eq:KP-tail-bound}
\end{equation}
\end{theorem}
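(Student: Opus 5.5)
This is the classical convergence criterion of Koteck\'y and Preiss, and I would reprove it by the standard tree-graph route, not by the original inductive argument on $\log\Xi$. The plan has three steps: bound the Ursell function by spanning trees, bound the resulting tree sums by induction on depth, and finally specialise to clusters through a vertex $v$.

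\textbf{Step 1 (tree-graph bound).} First I would invoke Penrose's tree-graph inequality,
\[
\Bigl|\sum_{\substack{A\subseteq E(F)\\ ([k],A)\ \mathrm{connected}}}(-1)^{|A|}\Bigr|\leq \#\{\text{spanning trees of } F\}.
\]
This gives $|\phi(\boldsymbol\gamma)|\le \frac1{k!}\,\#\mathrm{ST}(F_{\boldsymbol\gamma})$. Every cluster sum is then dominated by a sum over labelled trees whose vertices carry polymers, with adjacent polymers incompatible. The factor $1/k!$ cancels the labelling, so the sum becomes one over rooted plane-free trees with unordered children. Each vertex of outdegree $d$ contributes a factor $1/d!$.

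\textbf{Step 2 (induction on depth).} For a polymer $\gamma$, let $T_N(\gamma)$ denote the weighted sum, with weights $|w_{\gamma'}|e^{\mathfrak g(\gamma')}$, over rooted trees of depth at most $N$ whose root polymer is incompatible with $\gamma$. The recursive structure of such trees gives
\[
T_{N+1}(\gamma)\le \sum_{\gamma'\colon \operatorname{dist}_H(\gamma,\gamma')\le1}|w_{\gamma'}|e^{\mathfrak g(\gamma')}\exp\bigl(T_N(\gamma')\bigr).
\]
Assuming inductively that $T_N(\gamma')\le|\gamma'|$, the hypothesis \eqref{eq:KP-criterion} gives $T_{N+1}(\gamma)\le|\gamma|$. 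Letting $N\to\infty$ by monotone convergence shows that the cluster sum incompatible with any fixed polymer is at most $|\gamma|$. In particular this yields absolute convergence of \eqref{eq:cluster-expansion} on any finite host graph, since $\log\Xi$ is a finite sum of such pieces.

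\textbf{Step 3 (clusters through a vertex).} For \eqref{eq:KP-tail-bound} I would root the tree at a polymer $\gamma_1\ni v$. The same recursion gives the bound
\[
\sum_{\gamma_1\ni v}|w_{\gamma_1}|e^{\mathfrak g(\gamma_1)}e^{T(\gamma_1)}\le \sum_{\gamma_1\ni v}|w_{\gamma_1}|e^{|\gamma_1|+\mathfrak g(\gamma_1)}.
\]
What remains is to bound the right-hand side by $1$.

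\textbf{Main obstacle.} This last inequality is the step I expect to be the main obstacle. The hypothesis is stated per polymer with right-hand side $|\gamma|$, not per vertex. My first attempt would be to run Steps 1 and 2 with the incompatibility region of an arbitrary vertex set $S$, aiming for the bound $\le|S|$. One would then take $S=\{v\}$ by adjoining $\{v\}$ as an auxiliary polymer of weight $0$, which does not change $\Xi$ or any cluster sum. The catch is that the inductive step requires the hypothesis at $\{v\}$ itself, namely $\sum_{\gamma'\colon\operatorname{dist}(v,\gamma')\le1}|w_{\gamma'}|e^{|\gamma'|+\mathfrak g(\gamma')}\le 1$. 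If this does not follow from \eqref{eq:KP-criterion} directly, I would fall back on the formulation of \cite[Theorem~6]{JenssenKeevashPerkins2020} and \cite{KoteckyPreiss1986}, where the criterion is checked uniformly over sets. In the applications of this paper, I would then verify the per-vertex version directly, since the polymer weight bounds there are obtained vertex by vertex anyway.
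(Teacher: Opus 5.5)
\paragraph{Assessment.} The proposal does not prove the statement: there is a gap in Step 3, and your fallback changes the theorem rather than proving it.

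The paper gives no proof here. It imports the Koteck\'y--Preiss theorem, whose conclusion is per polymer: clusters incompatible with a fixed polymer $\gamma$ have total weight at most $|\gamma|$. Your Steps 1 and 2 are a correct Penrose tree-graph reproof of exactly that bound. Step 3 is where it breaks, as you suspected, and replacing the hypothesis by a per-vertex one proves a different statement.

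Write $b(\gamma):=|w_\gamma|e^{\mathfrak g(\gamma)}$ and $U:=\sum_{\gamma\ni v}b(\gamma)e^{|\gamma|}$. The hypothesis only gives $U\le\min_{\gamma\ni v}|\gamma|$, and $U\le1$ genuinely fails. For example, a single polymer $\gamma\ni v$ with $|\gamma|=s\ge2$ and $b(\gamma)=se^{-s}$ satisfies \eqref{eq:KP-criterion} with $U=s$. So the rooted-tree estimate alone cannot yield \eqref{eq:KP-tail-bound}.

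You also missed the easy case. If $\{v\}\in\cP$, the hypothesis at $\{v\}$ is part of the assumption, so no auxiliary polymer is needed. Step 2 applied to $\gamma=\{v\}$ then gives the bound $|\{v\}|=1$ directly, since every cluster through $v$ is incompatible with $\{v\}$. This covers the paper's application, where a single vertex of $L$ is $H_L$-connected and $1\le r$.

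\paragraph{General fix.} For a general model, you must spend the slack in the hypothesis instead of bounding each root's subtree by $e^{|\gamma|}$. Let $\cP_v$ be the set of polymers containing $v$. Write $\Xi(-b)$ and $\Xi'(-b)$ for the partition functions of $\cP$ and of $\cP\setminus\cP_v$ with every weight $w_\gamma$ replaced by $-b(\gamma)$. Use the sign alternation $(-1)^{k-1}\phi(\boldsymbol\gamma)\ge0$ and the absolute convergence you obtained in Step 2. Then the left side of \eqref{eq:KP-tail-bound} equals $\log\bigl(\Xi'(-b)/\Xi(-b)\bigr)$. Since the polymers in $\cP_v$ are pairwise incompatible,
\[
\frac{\Xi(-b)}{\Xi'(-b)}=1-\sum_{\gamma\in\cP_v}b(\gamma)\,e^{S_\gamma}.
\]
Here $S_\gamma$ is the sum of $|\phi(\boldsymbol\gamma)|\prod_i b(\gamma_i)$ over clusters of $\cP\setminus\cP_v$ incompatible with $\gamma$. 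Running your Step 2 inside $\cP\setminus\cP_v$ gives $S_\gamma\le\sum b(\gamma')e^{|\gamma'|}$, summed over $\gamma'\in\cP\setminus\cP_v$ with $\operatorname{dist}_H(\gamma,\gamma')\le1$. This is at most $|\gamma|-U$, because the hypothesis at $\gamma$ already spends exactly $U$ on $\cP_v$, all of which is incompatible with $\gamma$. Hence the left side of \eqref{eq:KP-tail-bound} is at most $-\log(1-Ue^{-U})\le\log\frac{e}{e-1}<1$. So the statement is true as written, but it needs this extra step.
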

We call $\mathfrak g$ the \emph{decay function} of the polymer model as it controls the convergence rate of the cluster expansion.
In particular, suppose that
\(\mathfrak g(\gamma)\geq\rho|\gamma|\) for every polymer
\(\gamma\). Summing \eqref{eq:KP-tail-bound} over \(v\in V(H)\)  shows that 
\begin{equation}
\sum_{\substack{
k\geq1,\ \boldsymbol{\gamma}\in\cP^k\\
F_{\boldsymbol{\gamma}}\ \mathrm{connected}\\ |\boldsymbol{\gamma} | \ge m
}}
|\phi(\boldsymbol{\gamma})|
\prod_{i=1}^k|w_{\gamma_i}|
\leq
|V(H)|e^{-\rho m}.
\label{eq:cluster-tail}
\end{equation}
Thus the cluster expansion may be truncated at logarithmic size while
incurring only an exponentially small error. The following algorithmic consequence packages the counting and
sampling results of
\cite[Theorems~8 and~9]{JenssenKeevashPerkins2020}, which build on the
algorithmic cluster-expansion framework of
\cite{helmuth2019algorithmic}.

\begin{proposition}[Algorithmic polymer criterion]
\label{prop:JKP}
Fix \(D\in\mathbb N\). Consider a family of graph-based polymer models
on host graphs \(H\) of maximum degree at most \(D\) with decay functions $\mathfrak g$, and suppose that
all polymer weights are nonnegative. Assume that the following
properties hold uniformly over the family:
\begin{enumerate}
    \item There are constants \(c_1,c_2>0\) such that, given \(H\) and
    a connected set \(\gamma\subseteq V(H)\), one can determine whether
    \(\gamma\in\cP(H)\) and, when it is, compute
    \(w_\gamma\) and \(\mathfrak g(\gamma)\) in time
    \[
    O\!\left(
    |\gamma|^{c_1}e^{c_2|\gamma|}
    \right).
    \]

    \item There is a constant \(\rho>0\), independent of \(H\), such
    that
    \[
    \mathfrak g(\gamma)\geq\rho|\gamma|
    \qquad
    \text{for every }\gamma\in\cP(H).
    \]

    \item The Koteck\'y--Preiss condition
    \eqref{eq:KP-criterion} holds with decay function $\mathfrak g$.
\end{enumerate}
Then the polymer partition function \(\Xi\) admits an FPTAS, and the
polymer Gibbs distribution admits an efficient sampler.
\end{proposition}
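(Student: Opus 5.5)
The plan is to follow the Helmuth--Perkins--Regts template as packaged in \cite{JenssenKeevashPerkins2020}. For counting, truncate the cluster expansion \eqref{eq:cluster-expansion} at total size $m$, setting
\[
T_m:=\sum_{k\geq1}\ \sum_{\substack{\boldsymbol\gamma\in\cP^k\\ |\boldsymbol\gamma|<m}}\phi(\boldsymbol\gamma)\prod_{i=1}^k w_{\gamma_i}.
\]
Item~3 lets us apply Theorem~\ref{thm:KP}, so the expansion converges absolutely. Item~2 then gives the tail bound \eqref{eq:cluster-tail}, so that $|\log\Xi-T_m|\leq |V(H)|e^{-\rho m}$. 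Choosing $m=\lceil\rho^{-1}\log(2|V(H)|/\varepsilon)\rceil=O(\log(|V(H)|/\varepsilon))$ makes $e^{T_m}$ an $\varepsilon$-relative approximation of $\Xi$. It therefore suffices to compute $T_m$ exactly in time polynomial in $|V(H)|$ and $1/\varepsilon$.

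To compute $T_m$, first enumerate all connected subsets of $V(H)$ of size less than $m$. In a graph of maximum degree $D$ there are at most $|V(H)|(eD)^{m}$ of them, listable in time $|V(H)|\,e^{O(m\log D)}$. By item~1, membership in $\cP(H)$, the weight $w_\gamma$, and $\mathfrak g(\gamma)$ can be computed for each such set in time $m^{c_1}e^{c_2m}$. Next, enumerate clusters of total size below $m$ by noting that their union is a connected vertex set $U$ with $|U|<m$ in the graph $H^{(2)}$ (vertices at distance $\leq 1$ from a polymer are adjacent). Rather than listing ordered tuples directly, I would group clusters by their multiset of polymers and use the standard Ursell-function identity: $\phi(\boldsymbol\gamma)$ depends only on $F_{\boldsymbol\gamma}$ and equals a Tutte-polynomial evaluation $T_{F}(1,0)$ up to sign and $k!$. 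This is computable in time $e^{O(k)}$ for a graph on $k\leq m$ vertices.

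The total work is then $|V(H)|\cdot e^{O(m\log D)}\cdot e^{O(m)}=(|V(H)|/\varepsilon)^{O((\log D+c_2)/\rho)}$. This is polynomial because $D,c_1,c_2,\rho$ are fixed uniformly over the family. The most delicate bookkeeping step is bounding the number of clusters of bounded total size. I would handle it by encoding each cluster as a spanning walk in the bounded-degree graph $H^{(2)}$ (maximum degree $\leq D^2$), together with a choice of polymers. This gives $e^{O(m\log D)}$ clusters per root.

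For sampling, use self-reducibility, as in \cite[Theorem~9]{JenssenKeevashPerkins2020}. Order the vertices $v_1,\dots,v_N$ of $H$. At step $j$, decide which polymer (if any) of the configuration contains $v_j$, given the decisions made so far. Conditioned on past choices, the law of the remaining configuration is the polymer measure restricted to polymers avoiding a certain vertex set, and item~3 is inherited by this subfamily of polymers. Nonnegativity of the weights makes these conditional probabilities genuine ratios of partition functions of the form $w_\gamma\,\Xi_{\text{restricted}}/\Xi$. With high probability every polymer in a sample has size $O(\log(N/\varepsilon))$, by the same tail bound. So at each step only polynomially many candidate polymers need to be considered, each ratio is approximated to relative error $\varepsilon/N$ by the FPTAS above, and the accumulated total variation error is $O(\varepsilon)$.

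The main obstacle I anticipate is checking that the Koteck\'y--Preiss hypothesis and the decay function pass to every restricted model used in the self-reduction. I expect this to hold because deleting polymers only decreases the left side of \eqref{eq:KP-criterion}.
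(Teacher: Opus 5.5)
Your proposal is correct and reconstructs the argument behind \cite[Theorems~8 and~9]{JenssenKeevashPerkins2020} (building on \cite{helmuth2019algorithmic}), which the paper cites rather than reproving: truncate the cluster expansion at size $O(\log(|V(H)|/\varepsilon))$ using the Koteck\'y--Preiss tail bound, compute the truncation by enumerating small clusters (grouped as multisets, with Ursell functions via the Tutte polynomial), and sample by self-reducibility on restricted and truncated subfamilies, which inherit the Koteck\'y--Preiss condition. One harmless simplification: incompatible polymers overlap or are joined by an edge, so the union of a cluster is already connected in $H$ itself, and passing to $H^{(2)}$ is unnecessary.
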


\section{Proof of Theorem \ref{thm:main}}
\label{sec:main}

The key idea of the algorithm is to sample from a weighted distribution on independent sets of $G$ that introduces a quadratic penalty on the left-right occupation
imbalance. The discrete Hessian of this penalty cancels the rank-one part \((\Delta/n)J\) of \(M\), leaving an all-pinnings dependency bound
controlled only by \(\lambda\sigma\). We are then able to reconstruct the original hard-core weights coefficient-wise through a truncated infinite positive mixture, obtained from a discrete Gaussian identity.

Throughout this section, let $G\in\mathcal G_{\Delta,\Delta}^{\sigma}$. We assume that $n \ge \Delta/\xi$, as if not then we can simply enumerate over all independent sets. We begin by setting
\begin{equation}\label{eq:q-def}
  q:=1-\frac{\Delta}{n}\in[1-\xi,1).
\end{equation}
Additionally, for an independent set \(I\), write $m(I):=|I\cap L|-|I\cap R|.$

\subsection{Quadratically localized measures}

We begin by defining the weighted measure we will sample from. For notation, we write $f(a):=\frac{a^2}2$. Given a graph $G$, \(t>0\), and an integer \(k\), we define a weight function $\wt_{G,t,k}$ on independent sets as
\begin{align}
  \wt_{G,t,k}(I)
  &:=t^{|I|}q^{f(m(I))+km(I)},\label{eq:component-weight}
\end{align}
We use this weight function to define a partition function parameterized by $t$ and $k$ and an associated probability distribution on independent sets:
\begin{align}
     \cZ_{G,k}(t)
  &:=
  \sum_{I\in\I(G)}\wt_{G,t,k}(I) \\
  \cmu_{G,t,k}(I) &:=
  \frac{\wt_{G,t,k}(I)}{\cZ_{G,k}(t)} \,.
\end{align}

To motivate this specific choice of weight function, observe that the identity
\begin{equation}\label{eq:shift}
  f(m)+km = f(m+k)-f(k)
\end{equation}
implies that multiplication by the constant \(q^{f(k)}\) turns the
component weight into \(t^{|I|}q^{f(m(I)+k)}\). Thus, the quadratically localized measure defined here can be viewed as a quadratic extension of the tilted hard-core law defined in \cite{narang2026slices}.
We will now show that one can sample from the quadratically localized measure efficiently via Glauber dynamics.

\begin{lemma} \label{lem:cancellation} For every \(t>0\), every \(k\in\mathbb Z\), and every feasible occupied pinning \(S\), the
pairwise dependency matrix of $\cmu_{G,t,k}$  satisfies
\begin{equation}\label{eq:psi-bound}
  \Psi_S
  \preceq
  \Id+\frac{t\sigma}{\sqrt q}\diag(r_S)^{-1}.
\end{equation}
Consequently, if \(t\sigma<\sqrt q\), the hypothesis of
Theorem~\ref{thm:chen-general} holds with $\delta=1-\frac{t\sigma}{\sqrt q}$.
\end{lemma}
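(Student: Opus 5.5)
The plan is to compute \(\Psi_S\) for \(\cmu_{G,t,k}\) exactly, and then split it into four pieces: a cancelled rank-one part, a negative semidefinite part, a small multiple of the identity, and a bipartite part controlled by \(\sigma\). Fix a feasible pinning \(S\), write \(m:=m(S)\), and set \(s_v:=+1\) for \(v\in L\) and \(s_v:=-1\) for \(v\in R\). Then \(m(S\cup\{v\})=m+s_v\).

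\textbf{Step 1 (marginal ratios).} For \(v\in U_S\), the factors of \(t\) and \(q\) give
\[
r_S(v)=t\,q^{f(m+s_v)-f(m)+ks_v}=t\,q^{1/2}\,q^{s_v(m+k)} .
\]
Hence \(\frac{t\sigma}{\sqrt q}\diag(r_S)^{-1}=\frac{\sigma}{q}D\), where \(D_v:=q^{-s_v(m+k)}\). The key feature is that \(D\) equals \(\alpha:=q^{-(m+k)}\) on \(L\cap U_S\) and \(\alpha^{-1}\) on \(R\cap U_S\).

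\textbf{Step 2 (entries of \(\Psi_S\)).} Take distinct \(u,v\in U_S\).
\begin{itemize}
\item If \(u\sim v\), then \(S\cup\{u,v\}\notin\I(G)\), so \(\Psi_S(u,v)=-1\).
\item Otherwise the powers of \(t\) and the linear terms \(km\) cancel. The exponent of \(q\) becomes the discrete second difference \(f(m+s_u+s_v)-f(m+s_u)-f(m+s_v)+f(m)=s_us_v\), so \(\Psi_S(u,v)=q^{s_us_v}-1\).
\end{itemize}
Thus same-side pairs get \(q-1=-\Delta/n\), and nonadjacent cross pairs get \(q^{-1}-1=\frac{\Delta}{nq}\). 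Let \(A\) be the adjacency matrix of \(G\) restricted to \(U_S\), and \(J_{\times}\) the all-ones matrix on cross pairs. With \(J_{LL},J_{RR}\) the all-ones blocks on \(L\cap U_S\) and \(R\cap U_S\), I will verify entrywise that
\[
\Psi_S=-\tfrac{\Delta}{n}(J_{LL}+J_{RR})+\tfrac{\Delta}{n}\Id+\tfrac{\Delta}{nq}J_\times-\tfrac1q A .
\]
The adjacent entries check because \(\frac{\Delta}{nq}-\frac1q=-1\), using \(q=1-\Delta/n\).

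\textbf{Step 3 (cancellation).} Write \(A=\frac{\Delta}{n}J_\times+A_\perp\). Then the cross rank-one term cancels exactly, which is the point of the quadratic tilt, and
\[
\Psi_S=-\tfrac{\Delta}{n}(J_{LL}+J_{RR})+\tfrac{\Delta}{n}\Id-\tfrac1q A_\perp .
\]
The first term is \(\preceq0\). The second is \(\preceq \Id\) because \(n\geq\Delta/\xi>\Delta\).

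\textbf{Step 4 (the bipartite term).} \(A_\perp\) is the principal submatrix, indexed by \(U_S\), of \(\begin{pmatrix}0&M_\perp\\M_\perp^{\mathsf T}&0\end{pmatrix}\), where \(M_\perp=M-\frac{\Delta}{n}J\) and \(\|M_\perp\|_{\op}\leq\sigma\). For \(x=(a,b)\) split by sides,
\[
-x^{\mathsf T}A_\perp x=-2a^{\mathsf T}M_\perp b\leq2\sigma\|a\|\|b\|\leq\sigma\left(\alpha\|a\|^2+\alpha^{-1}\|b\|^2\right)=\sigma\,x^{\mathsf T}Dx .
\]
Hence \(-\frac1qA_\perp\preceq\frac{\sigma}{q}D\). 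Summing the three bounds gives \eqref{eq:psi-bound}.

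The consequence is then immediate: if \(t\sigma<\sqrt q\), then \(\frac{t\sigma}{\sqrt q}=1-\delta\) with \(\delta=1-\frac{t\sigma}{\sqrt q}\in(0,1)\), which is the hypothesis of Theorem~\ref{thm:chen-general}.

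The main obstacle is Step 4. The diagonal weight \(\diag(r_S)^{-1}\) is very unbalanced between the two sides whenever \(m+k\neq0\), so a plain operator-norm bound on \(A_\perp\) would not suffice. The resolution is the weighted AM--GM with \(\alpha=q^{-(m+k)}\), which matches the reciprocal weights on \(L\) and \(R\) exactly. The entrywise bookkeeping in Steps 2 and 3 (the \(1/q\) factors and the adjacent entries) needs care but is routine.
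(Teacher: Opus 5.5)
Your proposal is correct and follows essentially the same route as the paper. Both compute $\Psi_S$ exactly, cancel the rank-one cross term $\frac{\Delta}{n}J$ against the $q^{-1}-1$ entries, discard the negative semidefinite same-side blocks, and bound the remaining $E_S=M_\perp[U_L,U_R]$ block using $\|M_\perp\|_{\op}\le\sigma$. The one cosmetic difference is that the paper conjugates by $\diag(r_S)^{1/2}$ and uses the plain inequality $2ab\le a^2+b^2$ together with $\sqrt{r_Lr_R}=t\sqrt q$, whereas you stay in the original frame and use weighted AM--GM with weights $\alpha^{\pm1}$, which is the same computation.
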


\begin{proof}
We will write $\wt$ for $\wt_{G,t,k}$. Fix a feasible independent set \(S\) and denote $d:=m(S).$ Let \(U_L\subseteq L\) and \(U_R\subseteq R\) be the vertices that are
available after occupying \(S\), and set
\[
  B:=M[U_L,U_R],
\]
that is, the incidence matrix of $G$ restricted to the rows and columns of available vertices. Now, given some \(u\in U_L\), we compute the one-site ratios under the pinning $S$. In particular
\begin{align*}
  r_L &= \frac{\wt(S \cup \{u\})}{\wt(S)} \\ &= \frac{t^{|S| + 1}q^{f(d + 1) + k \cdot (d + 1)}}{t^{|S|}q^{f(d) + k \cdot d}} \\
  &= t \cdot q^{f(d + 1) - f(d)} \cdot q^k \\
  &= t \cdot q^{k + d + 1/2}
\end{align*}
and by a similar computation for \(v\in U_R\), we have that $r_R = t \cdot q^{1/2 - k -d}$.  While the individual left and right marginal ratios  vary exponentially with $k+d$, their geometric mean is always $\sqrt{r_L r_R } = t \sqrt{q}$. Now, we compute the symmetric dependency matrix $\Psi_S$. Let $u$ and $v$ be two distinct available vertices on the same side. Then, we have

\begin{align*}
\Psi_S(u,v)
&= \frac{\wt(S\cup\{u,v\})\wt(S)}
        {\wt(S\cup\{u\})\wt(S\cup\{v\})}-1 \\
&= \frac{
    \bigl(t^{|S|+2}q^{f(d+2)+k(d+2)}\bigr)
    \bigl(t^{|S|}q^{f(d)+kd}\bigr)}
    {
    \bigl(t^{|S|+1}q^{f(d+1)+k(d+1)}\bigr)^2}
    -1 \\
&= q^{f(d+2)+f(d)-2f(d+1)}-1 \\
&= q-1.
\end{align*}

Now, consider a pair $u \in L$ and $v \in R$. If $(u,v) \in E(G)$, then $\Psi_S(u,v) = -1$. If not, then by a similar computation,

\begin{align*}
\Psi_S(u,v)
&= \frac{
    \bigl(t^{|S|+2}q^{f(d)+kd}\bigr)
    \bigl(t^{|S|}q^{f(d)+kd}\bigr)}
    {
    \bigl(t^{|S|+1}q^{f(d+1)+k(d+1)}\bigr)
    \bigl(t^{|S|+1}q^{f(d-1)+k(d-1)}\bigr)}
    -1 \\
&= q^{2f(d)-f(d+1)-f(d-1)}-1 \\
&= q^{-1}-1.
\end{align*}

Set $a:=\frac{\Delta}{n}=1-q$ and write \(J_L,J_R,J_{LR},J_{RL}\) for the appropriate all-ones matrices. Then, from above, the dependency matrix is exactly
\begin{equation}\label{eq:psi-block}
  \Psi_S
  =
  \begin{pmatrix}
    -a(J_L-\Id_L)
      &(q^{-1}-1)J_{LR}-q^{-1}B\\
    (q^{-1}-1)J_{RL}-q^{-1}B^{\mathsf T}
      &-a(J_R-\Id_R)
  \end{pmatrix}.
\end{equation}
Because $G$ is $\Delta$-regular, we can decompose its biadjacency matrix $M$ in terms of the one-eigenspace and its complement. This gives
\[
  M=\frac{\Delta}{n}J+M_\perp,
  \qquad
  M_\perp\one=M_\perp^{\mathsf T}\one=0,
  \qquad
  \|M_\perp\|_{\op}=\sigma_2(M)\leq\sigma.
\]
Now, set $E_S:=B-\frac{\Delta}{n}J$, where here \(J\) is the
\(|U_L|\times|U_R|\) all-ones matrix. Then, by definition, \(E_S=M_\perp[U_L,U_R]\). As coordinate projections are contractions, we have that $\|E_S\|_{\op}\le\sigma$. Denote
\[
  R_S:=\diag(r_LI_{U_L},r_RI_{U_R}).
\]
Using the fact that $\sqrt{r_Lr_R}=t\sqrt q$ and the cancellation
\[
(q^{-1}-1)J-q^{-1}B=-q^{-1}E_S,
\]
we compute
\begin{align*}
  R_S^{1/2}\Psi_SR_S^{1/2}
  &=
  \begin{pmatrix}
    \sqrt{r_L}\Id_L & 0\\
    0 & \sqrt{r_R}\Id_R
  \end{pmatrix}
  \begin{pmatrix}
    -a(J_L-\Id_L)
      &-q^{-1}E_S\\
    -q^{-1}E_S^{\mathsf T}
      &-a(J_R-\Id_R)
  \end{pmatrix}
  \begin{pmatrix}
    \sqrt{r_L}\Id_L & 0\\
    0 & \sqrt{r_R}\Id_R
  \end{pmatrix} \\
  &=
  \begin{pmatrix}
    -ar_L(J_L-\Id_L)
      &-q^{-1}\sqrt{r_Lr_R}\,E_S\\
    -q^{-1}\sqrt{r_Lr_R}\,E_S^{\mathsf T}
      &-ar_R(J_R-\Id_R)
  \end{pmatrix} \\
  &=
  \begin{pmatrix}
    -ar_LJ_L+ar_L\Id_L
      &-tq^{-1/2}E_S\\
    -tq^{-1/2}E_S^{\mathsf T}
      &-ar_RJ_R+ar_R\Id_R
  \end{pmatrix}.
\end{align*}
Since \(a=1-q\), subtracting gives
\begin{align*}
  R_S^{1/2}\Psi_SR_S^{1/2}-R_S
  &=
  \begin{pmatrix}
    -ar_LJ_L+(a-1)r_L\Id_L
      &-tq^{-1/2}E_S\\
    -tq^{-1/2}E_S^{\mathsf T}
      &-ar_RJ_R+(a-1)r_R\Id_R
  \end{pmatrix} \\
  &=
  \begin{pmatrix}
    -ar_LJ_L-qr_L\Id_L
      &-tq^{-1/2}E_S\\
    -tq^{-1/2}E_S^{\mathsf T}
      &-ar_RJ_R-qr_R\Id_R
  \end{pmatrix}.
\end{align*}

Observe that because $a, r_L, r_R > 0$, the matrices $-ar_LJ_L-qr_L\Id_L$ and $-ar_RJ_R-qr_R\Id_R$ are negative semidefinite. Thus, for any vectors
\(x\in\mathbb R^{U_L}\) and \(y\in\mathbb R^{U_R}\),
\begin{align*}
  \left\langle
    \binom{x}{y},
    (R_S^{1/2}\Psi_SR_S^{1/2}-R_S)
    \binom{x}{y}
  \right\rangle
  &\le
  \frac{2t}{\sqrt q}|x^{\mathsf T}E_Sy|\\
  &\le
  \frac{t\sigma}{\sqrt q}
  \bigl(\|x\|_2^2+\|y\|_2^2\bigr),
\end{align*}
where the last inequality uses the fact that $\|E_S\|_{\op}\leq\sigma$ and
\(2ab\le a^2+b^2\). Therefore
\[
  R_S^{1/2}\Psi_SR_S^{1/2}-R_S
  \preceq
  \frac{t\sigma}{\sqrt q}I.
\]
After rearranging and applying congruence by \(R_S^{-1/2}\), this is
exactly \eqref{eq:psi-bound}.
\end{proof}

\subsection{An exact mixture of localized measures}

We now show how to transform samplers for the measures $\cmu_{G,t,k}$ (as $k$ varies) into a sampler for the hard-core measure. The underlying idea is to use a discrete Gaussian distribution to remove the quadratic exponent in the quadratically localized measure. In particular, define $p_k := \frac{q^{k^2/2}}{\displaystyle\sum_{r\in\mathbb{Z}} q^{r^2/2}}$ so that $\{p_k\}_{k \in \mathbb{Z}}$ defines a probability distribution on the integers. We denote this measure by $p$. Since \(f(r)=f(-r)\), this distribution is symmetric about \(0\). A useful property of this distribution is the following.

\begin{lemma} \label{lem:expect_one}
For all $0<q<1$ and every independent set $I$,
\[
\mathbb{E}_{k\sim p}
\left[
q^{f(m(I))+km(I)}
\right] = 1.
\]
\end{lemma}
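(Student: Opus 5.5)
The plan is a direct computation that completes the square and then reindexes the sum over $k$. Fix an independent set $I$ and set $m:=m(I)\in\mathbb Z$. Writing out the expectation against $p$ gives
\[
\mathbb E_{k\sim p}\bigl[q^{f(m)+km}\bigr]
=\frac{1}{\sum_{r\in\mathbb Z}q^{r^2/2}}\sum_{k\in\mathbb Z}q^{k^2/2}\,q^{m^2/2+km}.
\]

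The key step is to combine the exponents:
\[
\tfrac{k^2}{2}+\tfrac{m^2}{2}+km=\tfrac{(k+m)^2}{2}=f(k+m).
\]
This is the same completing-the-square identity as \eqref{eq:shift}, read in the other direction. The numerator therefore equals $\sum_{k\in\mathbb Z}q^{(k+m)^2/2}$.

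Since $m$ is an integer, the map $k\mapsto k+m$ is a bijection of $\mathbb Z$. Reindexing by $r=k+m$ turns the numerator into exactly $\sum_{r\in\mathbb Z}q^{r^2/2}$, which is the normalizing constant, so the ratio is $1$.

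The only point needing care is that the reindexing of an infinite sum is legitimate. All terms are positive, and for $0<q<1$ the series $\sum_{r}q^{r^2/2}$ converges, being dominated by a geometric series once $r^2/2\ge |r|$. Absolute convergence therefore justifies the rearrangement, and it also ensures that $p$ is a well-defined probability distribution. The crucial structural input is that $m(I)$ is an integer; a non-integer shift would not preserve the lattice $\mathbb Z$, and the identity would fail. There is no real obstacle beyond this.
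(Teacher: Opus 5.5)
Your proposal is correct and follows essentially the same route as the paper: expand the expectation, use $f(k)+f(m)+km=f(k+m)$, and reindex by $r=k+m$, with absolute convergence for $0<q<1$ justifying the manipulation. Your explicit remark that the bijectivity of $k\mapsto k+m$ on $\mathbb Z$ depends on $m(I)$ being an integer is a useful detail that the paper leaves implicit.
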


\begin{proof} Observe that because $q < 1$, the sum $\sum_{k \in \mathbb{Z}} q^{k^2/2}$ is absolutely convergent. Therefore, we can compute as follows:
\begin{align*}
\mathbb{E}_{k\sim p}
\left[
q^{f(m(I))+km(I)}
\right]
&=
\sum_{k\in\mathbb{Z}}
p_k q^{f(m)+km}
\\
&=
\frac{1}{\sum_{r \in \mathbb{Z}} q^{f(r)}}
\sum_{k\in\mathbb{Z}}
q^{f(k)}q^{f(m)+km}
\\
&=
\frac{1}{\sum_{r \in \mathbb{Z}} q^{f(r)}} \cdot
\sum_{k\in\mathbb{Z}}
q^{f(k+m)}
\\
&=1.
\end{align*}
Here, the third equality uses the fact that $
f(k+m)=f(k)+f(m)+km$ and the last equality follows from the change of variables \(r=k+m\).
\end{proof}

With the above lemma in hand, we can now state the key observation which enables us to relate the quadratically localized partition function to the hard-core model's partition function.

\begin{align} \label{eq:inf_sum}
Z_G(t)
&= \sum_{I\in\I(G)} t^{|I|} \nonumber \\
&= \sum_{I\in\I(G)}
t^{|I|}
\mathbb E_{k\sim p}
\left[
q^{f(m(I))+km(I)}
\right] \nonumber \\
&= \mathbb E_{k\sim p}
\left[
\sum_{I\in\I(G)}
t^{|I|}q^{f(m(I))+km(I)}
\right] \nonumber \\
&= \mathbb E_{k\sim p}\left[\cZ_{G,k}(t)\right] \nonumber \\
&= \sum_{k\in\mathbb Z} p_k \cZ_{G,k}(t),
\end{align}
where the second equality follows from Lemma \ref{lem:expect_one}. Thus, it remains to show that the infinite sum \eqref{eq:inf_sum} can be approximated via a truncation at a polynomial threshold with sufficiently good estimators.

\begin{lemma}
\label{lem:trunc-mixture}
Fix \(0<\varepsilon<1\), and set $
K
:=
2n+\left\lceil \log_2\frac{2}{\varepsilon}\right\rceil.$
Then, for every \(t>0\),
\[
(1-\varepsilon)Z_G(t)
\leq
\sum_{k=-K}^{K}p_k\cZ_{G,k}(t)
\leq
Z_G(t).
\]
\end{lemma}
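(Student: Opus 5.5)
The upper bound is immediate. All terms $p_k\cZ_{G,k}(t)$ are nonnegative, so the truncated sum is at most the full series, which equals $Z_G(t)$ by \eqref{eq:inf_sum}.

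For the lower bound, I will show that the tail $\sum_{|k|>K}p_k\cZ_{G,k}(t)$ is at most $\varepsilon Z_G(t)$, and I will do this termwise in $I$. Exchanging sums (all terms are nonnegative), the tail equals
\[
\sum_{I\in\I(G)} t^{|I|}\sum_{|k|>K}p_k\,q^{f(m(I))+km(I)} .
\]
So it suffices to prove, for every integer $m$ with $|m|\le n$ (always true since $|m(I)|\le n$), that
\[
\sum_{|k|>K}p_k q^{f(m)+km}\le\varepsilon .
\]
Lemma~\ref{lem:expect_one} shows the full sum over $k$ is $1$. Write $Z_q:=\sum_{r\in\mathbb Z}q^{f(r)}$. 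Using $f(k)+f(m)+km=f(k+m)$, the tail becomes $Z_q^{-1}\sum_{|k|>K}q^{f(k+m)}$. Substituting $r=k+m$ and using $|m|\le n$, the range $|k|>K$ lands inside $|r|>K-n\ge n+\lceil\log_2(2/\varepsilon)\rceil=:K'$. Hence the tail is at most the mass that the discrete Gaussian $p$ assigns to $\{|r|>K'\}$, that is, $P_{r\sim p}(|r|>K')$.

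The main step is bounding this discrete Gaussian tail with only $q\ge 1-\xi$ available, so the tail is not uniformly small in $q$. The slack here comes from the choice $K=2n+\cdots$ together with $q=1-\Delta/n$. Since $q^{r^2/2}$ decays with $r$ only at scale $\sqrt{n/\Delta}$, the mass at $|r|\ge n$ should already be tiny. Concretely, I will compare consecutive terms for $r\ge n$:
\[
\frac{q^{f(r+1)}}{q^{f(r)}}=q^{r+1/2}\le q^{n}=(1-\Delta/n)^n\le e^{-\Delta}\le\tfrac12 .
\]
This uses $\Delta\ge 3$ (in fact $\Delta\ge1$ suffices). So for $r\ge n$ the terms decay at least geometrically with ratio $1/2$.

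Therefore
\[
\sum_{r>K'}q^{f(r)}\le q^{f(n)}\,2^{-(K'-n)}\sum_{j\ge1}2^{-j}\le 2^{-\lceil\log_2(2/\varepsilon)\rceil}q^{f(n)}\le\frac{\varepsilon}{2}\,q^{f(n)} .
\]
Since $q^{f(n)}\le q^{f(0)}=1\le Z_q$, this is at most $\frac{\varepsilon}{2}Z_q$. The same bound holds for $r<-K'$ by symmetry. Summing the two sides gives $P_{r\sim p}(|r|>K')\le\varepsilon$. Multiplying by $t^{|I|}$ and summing over $I$ then yields tail $\le\varepsilon Z_G(t)$, which is the lower bound.
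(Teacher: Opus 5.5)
Your proposal is correct and is essentially the paper's argument. The paper likewise uses $p_k q^{f(m)+km}=p_{k+m}$ per independent set (phrased as an expectation over $\mu_{G,t}$), uses $|m(I)|\le n$ to force $|k+m(I)|>n+s$, and bounds the discrete Gaussian tail with the ratio $q^{r+1/2}\le q^n\le e^{-\Delta}\le 1/2$ for $r\ge n$ together with symmetry; your one-line justification of the upper bound via nonnegativity of the mixture terms spells out a step the paper leaves implicit.
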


\begin{proof}
Set $s:=\left\lceil \log_2\frac{2}{\varepsilon}\right\rceil$ so that $K=2n+s.$
Additionally, define
\[
\alpha_k(t):=\frac{p_k\cZ_{G,k}(t)}{Z_G(t)}.
\]
Using the fact that $f(k+m)=f(k)+f(m)+km$, we have, for every independent set \(I\), $p_kq^{f(m(I))+km(I)}
=
p_{k+m(I)}.$ Consequently,
\begin{align*}
\alpha_k(t)
&=
\frac{p_k}{Z_G(t)}
\sum_{I\in\I(G)}
t^{|I|}q^{f(m(I))+km(I)}\\
&=
\sum_{I\in\I(G)}
\mu_{G,t}(I)\,p_{k+m(I)}.
\end{align*}
Therefore, we can write
\begin{align*}
\sum_{|k|>K}\alpha_k(t)
&=
\mathbb E_{I\sim\mu_{G,t}}
\left[
\sum_{|k|>K}p_{k+m(I)}
\right].
\end{align*}
Since \(|m(I)|\leq n\), the condition \(|k|>2n+s\) implies $|k+m(I)|>n+s.$ It follows that
\[
\sum_{|k|>K}p_{k+m(I)}
\leq
\sum_{|r|>n+s}p_r.
\]
So, all that remains is to bound the tail of \(p\). For every \(r\geq n\),
\[
\frac{p_{r+1}}{p_r}
=
q^{r+1/2}
\leq
q^n
=
\left(1-\frac{\Delta}{n}\right)^n
\leq
e^{-\Delta}
\leq
\frac12.
\]
Hence
\[
\sum_{r>n+s}p_r
\leq
p_n\sum_{j\geq1}2^{-(s+j)}
\leq
2^{-s}.
\]
Moreover, \(f(r)=f(-r)\), and therefore \(p_r=p_{-r}\). Thus the
left and right tails are equal, and
\[
\sum_{|r|>n+s}p_r
\leq
2^{1-s}
\leq
\varepsilon.
\]
Therefore, we have that $\sum_{|k|>K}\alpha_k(t)\leq\varepsilon$ and multiplying by \(Z_G(t)\) proves the result.
\end{proof}

\subsection{The Algorithms}
\label{sec:reconstruction-algorithm}

We now combine the quadratically localized sampler with the mixture identity to state our algorithms and provide the proof of Theorem \ref{thm:main}. We begin by isolating the two algorithmic ingredients (Glauber dynamics and simulated annealing) that will be used by both the sampler and the counting algorithm.

For Glauber dynamics, fix an auxiliary accuracy
\(\varepsilon_K\in(0,1)\) and let
\[
K=2n+\left\lceil\log_2\frac{2}{\varepsilon_K}\right\rceil
\]
as in Lemma \ref{lem:trunc-mixture}. For \(|k|\leq K\), \(0<t\leq\lambda\), and \(\tau>0\), let
\(\textnormal{\textsc{QuadraticallyLocalizedSampler}}(t,k,\tau)\) denote Glauber dynamics for
\(\cmu_{G,t,k}\), started from the empty independent set and run long enough to have total variation error at most \(\tau\).

\begin{lemma}
\label{lem:quadratically-localized-sampler}
For fixed \(\Delta\) and \(\xi\), \(\textnormal{\textsc{QuadraticallyLocalizedSampler}}(t,k,\tau)\) runs in time polynomial in
\(n\), \(1/\varepsilon_K\), \(1+t\), \(1+|\log t|\), and
\(\log(1/\tau)\).
\end{lemma}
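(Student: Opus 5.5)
We apply Theorem~\ref{thm:chen-general} to $\cmu_{G,t,k}$, viewed as a full-support distribution on the downward-closed family $\mathcal F=\I(G)\subseteq 2^{[2n]}$ with $m=2n$. First the spectral hypothesis. Since $t\le\lambda\le(1-\xi)/\sigma$ and $q\ge 1-\xi$ by \eqref{eq:q-def}, we have
\[
\frac{t\sigma}{\sqrt q}\le\frac{1-\xi}{\sqrt{1-\xi}}=\sqrt{1-\xi}<1 .
\]
Lemma~\ref{lem:cancellation} then gives the hypothesis of Theorem~\ref{thm:chen-general} with $\delta\ge 1-\sqrt{1-\xi}\ge\xi/2$, a constant depending only on $\xi$. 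Hence
\[
T_{\mathrm{mix}}=O\!\left((1+r_{\max})\,\frac{n}{\xi}\,\log\frac{1}{\mu_{\min}}\right),
\]
and boosting to accuracy $\tau$ costs the usual extra factor $\lceil\log_2(1/\tau)\rceil$. Each Glauber step only needs the ratio $\wt(T\cup\{v\})/\wt(T)$, which is $t\,q^{k+d+1/2}$ or $t\,q^{1/2-k-d}$ (or $0$ if $v$ is blocked), so each step costs polynomial time. It remains to bound $r_{\max}$ and $\log(1/\mu_{\min})$.

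For $r_{\max}$: the ratios are $r_L=tq^{k+d+1/2}$ and $r_R=tq^{1/2-k-d}$, where $|d|\le n$ and $|k|\le K$. Because $q<1$, they are largest when the exponent is most negative. This gives
\[
r_{\max}\le t\,q^{-(K+n+1/2)}\le t\,(1-\Delta/n)^{-(3n+\lceil\log_2(2/\varepsilon_K)\rceil+1)}.
\]
Using $n\ge\Delta/\xi$, we have $(1-\Delta/n)^{-1}\le e^{\Delta/(n(1-\xi))}$. The $3n$ part of the exponent therefore contributes at most $e^{3\Delta/(1-\xi)}$, a constant. The remaining part contributes $e^{O(\Delta\log(2/\varepsilon_K)/(n(1-\xi)))}$, which is at most $(2/\varepsilon_K)^{O(1)}$ for fixed $\Delta,\xi$. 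So $1+r_{\max}$ is polynomial in $1+t$ and $1/\varepsilon_K$.

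For $\mu_{\min}$, we bound $\log\cZ_{G,k}(t)$ above and $\log\wt(I)$ below. The exponent $f(m)+km$ always lies in $[-K n,\ n^2/2+Kn]$. Hence for every $I$,
\[
\wt(I)\ge \min(1,t)^{2n}\,q^{n^2/2+Kn},
\qquad
\cZ_{G,k}(t)\le 4^n\max(1,t)^{2n}\,q^{-Kn}.
\]
Taking logarithms, with $\log(1/q)=O(\Delta/n)$ for fixed $\xi$, gives
\[
\log\frac{1}{\mu_{\min}}=O\!\bigl(n+n|\log t|+(n^2+Kn)\Delta/n\bigr),
\]
which is polynomial in $n$, $1+|\log t|$ and $\log(1/\varepsilon_K)$. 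Multiplying the pieces proves the lemma.

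The only delicate step is the $r_{\max}$ bound: the exponent $K$ grows linearly in $n$, and a naive estimate would give a factor exponential in $n$. It is the normalization $q=1-\Delta/n$, together with $n\ge\Delta/\xi$, that keeps $q^{-O(n)}$ bounded by a constant. For this reason I would carry out that estimate carefully, and treat the $\mu_{\min}$ bound as routine.
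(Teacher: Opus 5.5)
Your proposal is correct and follows essentially the same route as the paper. Both arguments use Lemma~\ref{lem:cancellation} with $\delta\ge 1-\sqrt{1-\xi}$, the bound $r_{\max}\le tq^{-(K+n)}$ (kept polynomial via $-\log q\le \Delta/(n(1-\xi))$ and $n\ge\Delta/\xi$), a polynomial bound on $\log(1/\mu_{\min})$, and the standard $\log(1/\tau)$ boosting. The only cosmetic difference is in the $\mu_{\min}$ bound: you confine the exponent $f(m)+km$ to $[-Kn,\,n^2/2+Kn]$ and use $|\I(G)|\le 4^n$, while the paper builds each independent set from $\varnothing$ one vertex at a time using the per-site ratio bounds; both give the same order.
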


\begin{proof}
Fix a feasible occupied pinning \(S\), and write \(d=m(S)\). The
one-site ratios computed in the proof of
Lemma~\ref{lem:cancellation} are
\[
r_L=tq^{k+d+1/2},
\qquad
r_R=tq^{1/2-k-d}.
\]
Since \(|d|\leq n\) and \(|k|\leq K\), every available vertex \(v\)
satisfies
\begin{equation}
\label{eq:ratio-range}
tq^{K+n+1}
\leq
r_S(v)
\leq
tq^{-(K+n)}.
\end{equation}
In particular, this gives $r_{\max}\leq tq^{-(K+n)}$, where $r_{\max}$ is defined as in Theorem \ref{thm:chen-general}.
Since \(\Delta/n\leq\xi\),
\[
-\log q
\leq
\frac{\Delta}{n-\Delta}
\leq
\frac{\Delta}{n(1-\xi)}.
\]
The definition of \(K\) therefore gives
\[
q^{-(K+n)}
\leq
\exp\!\left[
\frac{\Delta(K+n)}{n(1-\xi)}
\right]
\leq
(1/\varepsilon_K)^{O_{\Delta,\xi}(1)}.
\]
Every independent set can be constructed from \(\varnothing\) by
successively adding its vertices. Applying the lower bound in
\eqref{eq:ratio-range} at each addition gives
\[
\wt_{G,t,k}(I)
\geq
\min\left\{
1,
\bigl(tq^{K+n+1}\bigr)^{2n}
\right\}.
\]
Similarly, the upper bound in \eqref{eq:ratio-range} gives
\[
\cZ_{G,k}(t)
\leq
\sum_{A\subseteq L\sqcup R}
\bigl(tq^{-(K+n)}\bigr)^{|A|}
=
\bigl(1+tq^{-(K+n)}\bigr)^{2n}.
\]
Consequently, every state of the quadratically localized chain has probability at least
\[
\mu_*
:=
\frac{
\min\left\{1,\bigl(tq^{K+n+1}\bigr)^{2n}\right\}
}{
\bigl(1+tq^{-(K+n)}\bigr)^{2n}
}.
\]
By Lemma~\ref{lem:cancellation}, the hypothesis of
Theorem~\ref{thm:chen-general} holds with $\delta_t
=
1-\frac{t\sigma}{\sqrt q}.$ Since \(t\leq\lambda\), we may use the uniform lower bound
$\delta_t
\geq
\delta
:=
1-\frac{\lambda\sigma}{\sqrt q}.$ Moreover, \(\lambda\sigma\leq1-\xi\) and \(q\geq1-\xi\), so
\[
\delta
\geq
1-\frac{1-\xi}{\sqrt q}
\geq
1-\sqrt{1-\xi}.
\]
Let \(C_{\rm mix}>0\) be a sufficiently large universal constant so
that the mixing-time estimate below is valid. After one
\(T_{\mathrm{mix}}\)-step block, the maximal pairwise total variation
distance is at most \(1/2\). Its standard submultiplicativity therefore
implies that \(r\) consecutive blocks have error at most \(2^{-r}\).
Accordingly, we approximately sample from $\cmu_{G,t,k}$ by starting
Glauber dynamics from the empty independent set and running it for
\begin{equation}
\label{eq:quadratically-localized-sampler-runtime}
\begin{split}
T(t,K,\tau)
:=
\Bigg\lceil
&C_{\rm mix}
\frac{2n\bigl(1+tq^{-(K+n)}\bigr)}{\delta} \cdot
\log\left(
\frac{
\bigl(1+tq^{-(K+n)}\bigr)^{2n}
}{
\min\left\{1,\bigl(tq^{K+n+1}\bigr)^{2n}\right\}
}
\right)
\log\frac{2}{\tau}
\Bigg\rceil
\end{split}
\end{equation}
steps. By Theorem~\ref{thm:chen-general}, the distribution of the
final state is within total variation distance \(\tau\) of
\(\cmu_{G,t,k}\). The preceding bound on \(q^{-(K+n)}\), together
with \(\delta\geq1-\sqrt{1-\xi}\), shows directly from
\eqref{eq:quadratically-localized-sampler-runtime} that this running time is
polynomial in the parameters stated in the lemma.
\end{proof}

We next describe the simulated annealing procedure used in both
algorithms. Let \(C>0\) be a sufficiently large universal constant. We denote the output of the following algorithm (Algorithm~\ref{alg:annealed-estimate}) by $\textnormal{\textsc{AnnealedEstimate}}(k,\eta,\zeta).$

\begin{algorithm}[H]
\caption{Simulated annealing estimator for \(\cZ_{G,k}(\lambda)\)}
\label{alg:annealed-estimate}
\begin{algorithmic}[1]
\Require An index \(|k|\leq K\), an accuracy parameter
\(\eta\in(0,1)\), and a failure probability \(\zeta\in(0,1)\)
\Ensure An estimate \(\widehat{\cZ}_k\) of \(\cZ_{G,k}(\lambda)\)
\State Set \(t_0=\min\{\lambda,\eta/(128nq^{-(K+n)})\}\),
\(s=\left\lceil \log(\lambda/t_0)/\log(1+1/(2n))\right\rceil\),
\(\gamma=\eta/(64\max\{s,1\})\), and
\(N=\left\lceil C\gamma^{-2}\log(2\max\{s,1\}/\zeta)\right\rceil\)
\State Set the annealing schedule as follows
\[
t_i=t_0\left(1+\frac1{2n}\right)^i
\quad (0\leq i<s),
\qquad
t_s=\lambda
\]
\State Set \(\widehat{\cZ}_k\gets1\)
\For{\(i=1,\ldots,s\)}
    \For{\(j=1,\ldots,N\)}
        \State
        \(I_{i,j}\gets
        \textnormal{\textsc{QuadraticallyLocalizedSampler}}
        (t_i,k,\gamma/16)\)
    \EndFor
    \State Set
    \[
    \widehat R_{i,k}
    =
    \frac1N\sum_{j=1}^{N}
    \left(\frac{t_{i-1}}{t_i}\right)^{|I_{i,j}|}
    \]
    \State
    \(\widehat{\cZ}_k
    \gets
    \widehat{\cZ}_k/\widehat R_{i,k}\)
\EndFor
\State \Return \(\widehat{\cZ}_k\)
\end{algorithmic}
\end{algorithm}

\begin{lemma}
\label{lem:annealed-estimate}
For every \(|k|\leq K\),
\[
    \mathbb P\left(
    (1-\eta)\cZ_{G,k}(\lambda)
    \leq \textnormal{\textsc{AnnealedEstimate}}(k,\eta,\zeta)
    \leq (1+\eta)\cZ_{G,k}(\lambda)
    \right)
    \geq1-\zeta.
\]
Moreover, $\textnormal{\textsc{AnnealedEstimate}}(k,\eta,\zeta)$ can be computed in time
polynomial in \(n\), \(1/\varepsilon_K\), \(1/\eta\), and \(1/\zeta\).
\end{lemma}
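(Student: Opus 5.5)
We analyze Algorithm~\ref{alg:annealed-estimate} as a standard telescoping-product estimator. Write $Z_i:=\cZ_{G,k}(t_i)$, so that
\[
\cZ_{G,k}(\lambda)=Z_0\prod_{i=1}^s\frac{Z_i}{Z_{i-1}},
\qquad
\frac{Z_{i-1}}{Z_i}=\mathbb E_{I\sim\cmu_{G,t_i,k}}\!\left[\left(\frac{t_{i-1}}{t_i}\right)^{|I|}\right]=:R_i .
\]
The algorithm implicitly uses $Z_0\approx 1$ and returns $\prod_i \widehat R_{i,k}^{-1}$. The proof therefore has three parts: (i) $Z_0\in[1,1+\eta/4]$; (ii) each $\widehat R_{i,k}$ is within relative error $O(\gamma)$ of $R_i$ with probability at least $1-\zeta/s$; (iii) combine with a union bound and the choice $\gamma=\eta/(64s)$.

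For (i), the weight of any $I$ satisfies $\wt(I)\le (t_0q^{-(K+n)})^{|I|}$, by the upper bound in \eqref{eq:ratio-range} applied along a chain of additions from $\varnothing$. Hence
\[
1\le Z_0\le\bigl(1+t_0q^{-(K+n)}\bigr)^{2n}\le e^{2n\eta/(128n)}\le 1+\eta/32 .
\]
The lower bound $Z_0\ge 1$ comes from the empty set.

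For (ii), the key point is that the random variable $X=(t_{i-1}/t_i)^{|I|}$ lies in $[(1+1/(2n))^{-2n},1]\subseteq[e^{-1},1]$, since $|I|\le 2n$ and $t_{i-1}/t_i\ge(1+1/(2n))^{-1}$. This holds also for the last step, where $t_s/t_{s-1}\le 1+1/(2n)$ by the choice of $s$. Thus $R_i\ge e^{-1}$ and $X$ is bounded. The sampler output has law within total variation $\gamma/16$ of $\cmu_{G,t_i,k}$, so the mean of $X$ under that law differs from $R_i$ by at most $\gamma/16\le (e/16)\gamma R_i$. Hoeffding's inequality with $N=\lceil C\gamma^{-2}\log(2s/\zeta)\rceil$ gives deviation at most $\gamma R_i/4$ with probability at least $1-\zeta/s$, once $C$ is large. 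Hence $\widehat R_{i,k}=R_i(1+\epsilon_i)$ with $|\epsilon_i|\le\gamma/2$.

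For (iii), on the good event the product $\prod_i(1+\epsilon_i)^{-1}$ lies in $[e^{-s\gamma},e^{s\gamma}]$. Since $s\gamma\le\eta/64$, combining with $Z_0$ gives an estimate in $[(1-\eta)\cZ_{G,k}(\lambda),(1+\eta)\cZ_{G,k}(\lambda)]$. The edge case $s=0$ (when $t_0=\lambda$) is covered by (i) alone.

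The runtime needs slightly more care, and I expect it to be the main bookkeeping obstacle. We need $s=O(n\log(\lambda/t_0))$, and $\log(1/t_0)=O(\log(n/\eta)+(K+n)\Delta/n)$, which is polynomial by the bound $q^{-(K+n)}\le(1/\varepsilon_K)^{O(1)}$ from the proof of Lemma~\ref{lem:quadratically-localized-sampler}. We then call the sampler $sN=\mathrm{poly}(n,1/\eta,\log(1/\zeta))$ times. By Lemma~\ref{lem:quadratically-localized-sampler}, each call costs polynomial time in $n$, $1/\varepsilon_K$, $1+t_i\le 1+\lambda$, $1+|\log t_i|\le 1+|\log t_0|+|\log\lambda|$, and $\log(16/\gamma)$. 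For fixed $\lambda$ these are all polynomial in the stated parameters. One point still to check carefully is that $|\log t_0|$ is only polylogarithmic in $1/\varepsilon_K$ and $1/\eta$, which holds since $q^{-(K+n)}$ is polynomial in $1/\varepsilon_K$.
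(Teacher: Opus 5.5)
Your proposal is correct and follows the paper's proof essentially step for step. Both use the telescoping ratios $R_{i,k}$ bounded below by a constant, the TV bias $\gamma/16$ plus Hoeffding with a union bound over the $s$ stages, $\cZ_{G,k}(t_0)\in[1,e^{\eta/64}]$ from the one-site ratio bound, and $s\gamma\leq\eta/64$ to combine. The only differences are cosmetic: you use $R_{i,k}\geq e^{-1}$ instead of $1/4$, and you spell out the polynomial bounds on $s$ and $|\log t_0|$ that the paper leaves implicit.
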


\begin{proof}
If \(s=0\), then \(t_0=\lambda\), the algorithm returns \(1\), and the
estimate for \(\cZ_{G,k}(t_0)\) below directly gives the claimed
accuracy. Thus, suppose that \(s\geq1\).

For each \(i=1,\ldots,s\), we have
\begin{align*}
R_{i,k}
&:=
\frac{\cZ_{G,k}(t_{i-1})}{\cZ_{G,k}(t_i)} \\
&=
\frac{1}{\cZ_{G,k}(t_i)}
\sum_{I\in\I(G)}
t_{i-1}^{|I|}
q^{f(m(I))+km(I)} \\
&=
\mathbb E_{I\sim\cmu_{G,t_i,k}}
\left[
\left(\frac{t_{i-1}}{t_i}\right)^{|I|}
\right].
\end{align*}

By the choice of $s$, we have that $\frac{t_i}{t_{i-1}}
\leq
1+\frac1{2n}$ for every \(i=1,\ldots,s\), including the final step \(t_s=\lambda\).
Since \(|I|\leq2n\),
\[\frac{1}{4} \le
\left(1+\frac1{2n}\right)^{-2n}
\leq
\left(\frac{t_{i-1}}{t_i}\right)^{|I|}
\leq
1
\]
and so $R_{i,k}\geq\frac14$. By Lemma \ref{lem:quadratically-localized-sampler}, each sample \(I_{i,j}\) has distribution within total variation
distance \(\gamma/16\) of \(\cmu_{G,t_i,k}\). Since the function $I\longmapsto
\left(\frac{t_{i-1}}{t_i}\right)^{|I|}$ takes values in \([0,1]\), it follows that
\[
\left|
\mathbb E[\widehat R_{i,k}]-R_{i,k}
\right|
\leq
\frac{\gamma}{16}.
\]
The \(N\) summands defining \(\widehat R_{i,k}\) are independent and
lie in \([0,1]\). Hence, by Hoeffding's inequality,
\begin{align*}
\mathbb P\left(
\left|
\widehat R_{i,k}-\mathbb E[\widehat R_{i,k}]
\right|
>
\frac{3\gamma}{16}
\right)
&\leq
2\exp\left(
-2N\left(\frac{3\gamma}{16}\right)^2
\right) \\
&=
2\exp\left(-\frac{9N\gamma^2}{128}\right) \\
&\leq
\frac{\zeta}{\max\{s,1\}},
\end{align*}
where the last inequality follows from the definition of \(N\) and a
sufficiently large choice of \(C\). Summing over \(i=1,\ldots,s\), we find that, with probability at least
\(1-\zeta\), we have that $\left|
\widehat R_{i,k}-R_{i,k}
\right|
\leq
\frac{\gamma}{4}$ simultaneously for every \(i\). Since \(R_{i,k}\geq1/4\), this implies
\begin{equation}
\label{eq:annealing-ratios}
(1-\gamma)R_{i,k}
\leq
\widehat R_{i,k}
\leq
(1+\gamma)R_{i,k}
\end{equation}
with probability at least $1 - \zeta$.

At fugacity \(t_0\), every one-site ratio is at most
\(t_0q^{-(K+n)}\). Hence
\[
1
\leq
\cZ_{G,k}(t_0)
\leq
\sum_{A\subseteq L\sqcup R}
\bigl(t_0q^{-(K+n)}\bigr)^{|A|}
=
\bigl(1+t_0q^{-(K+n)}\bigr)^{2n}
\leq
e^{2nt_0q^{-(K+n)}}
\leq
e^{\eta/64}.
\]
Moreover,
\[
\prod_{i=1}^{s}R_{i,k}
=
\frac{\cZ_{G,k}(t_0)}{\cZ_{G,k}(\lambda)},
\qquad
\frac{\widehat{\cZ}_k}{\cZ_{G,k}(\lambda)}
=
\frac1{\cZ_{G,k}(t_0)}
\prod_{i=1}^{s}\frac{R_{i,k}}{\widehat R_{i,k}}.
\]
Thus, on the event \eqref{eq:annealing-ratios}, we have
\begin{align*}
\frac{\widehat{\cZ}_k}{\cZ_{G,k}(\lambda)}
=
\frac1{\cZ_{G,k}(t_0)}
\prod_{i=1}^{s}\frac{R_{i,k}}{\widehat R_{i,k}} \leq
(1-\gamma)^{-s}
\leq
e^{2s\gamma}
\leq
e^{\eta/32}
\leq
1+\eta.
\end{align*}
For the lower bound (on the same event),
\begin{align*}
\frac{\widehat{\cZ}_k}{\cZ_{G,k}(\lambda)}=
\frac1{\cZ_{G,k}(t_0)}
\prod_{i=1}^{s}\frac{R_{i,k}}{\widehat R_{i,k}} \geq
e^{-\eta/64}(1+\gamma)^{-s}
\geq
e^{-\eta/64-s\gamma}
\geq
e^{-\eta/32}
\geq
1-\eta.
\end{align*}
giving the algorithmic guarantee. As the annealing procedure makes \(sN\) calls to
\(\textnormal{\textsc{QuadraticallyLocalizedSampler}}\), each with accuracy
\(\gamma/16\). Lemma~\ref{lem:quadratically-localized-sampler} gives that the algorithm runs in polynomial time.
\end{proof}

We now use $\textnormal{\textsc{QuadraticallyLocalizedSampler}}$ and
$\textnormal{\textsc{AnnealedEstimate}}$ as subroutines to obtain the efficient sampling algorithm and FPRAS in Theorem \ref{thm:main}.

\subsubsection{An Efficient Sampler for \(\mu_{G,\lambda}\)}

For every independent set \(I\), Lemma~\ref{lem:expect_one} gives
\begin{align}
    \mu_{G,\lambda}(I)
    &=\sum_{k\in\mathbb Z}
    \frac{p_k\cZ_{G,k}(\lambda)}{Z_G(\lambda)}\cmu_{G,\lambda,k}(I).
    \label{eq:measure-mixture}
\end{align}
Thus the hard-core measure is a mixture of quadratically localized measures. Algorithm~\ref{alg:sampler} turns this into a sampler by first approximating the mixture weights and then sampling from the selected quadratically localized measure.

\begin{algorithm}[H]
\caption{Approximate sampler for the hard-core measure}
\label{alg:sampler}
\begin{algorithmic}[1]
\Require A \(\Delta\)-regular bipartite graph \(G=(L\sqcup R,E)\) with second biadjacency singular value at most \(\sigma\), a fixed constant \(\xi\in(0,1)\), a fugacity \(0<\lambda\leq(1-\xi)/\sigma\), and an accuracy parameter \(\varepsilon\in(0,1)\)
\Ensure A sample from \(\mu_{G,\lambda}\), up to total variation error \(\varepsilon\)
\If{\(n<\Delta/\xi\)}
    \State Sample exactly by enumerating \(\I(G)\), and return
\EndIf
\State Set \(\varepsilon_0=\varepsilon/20\),
\(\varepsilon_K=\varepsilon_0/4\), and
\(K=2n+\lceil\log_2(8/\varepsilon_0)\rceil\)
\For{\(k=-K,\ldots,K\)}
    \State \(\widehat{\cZ}_k\gets\textnormal{\textsc{AnnealedEstimate}}(k,\varepsilon_0,\varepsilon_0/(2K+1))\)
    \State \(\widehat A_k\gets q^{f(k)}\widehat{\cZ}_k\)
\EndFor
\State Draw \(\widehat k\in\{-K,\ldots,K\}\) with probability \(\widehat A_{\widehat k}/\sum_{j=-K}^{K}\widehat A_j\)
\State \Return \(\textnormal{\textsc{QuadraticallyLocalizedSampler}}(\lambda,\widehat k,\varepsilon_0)\)
\end{algorithmic}
\end{algorithm}

The following lemma gives correctness and runtime guarantees for Algorithm \ref{alg:sampler}.

\begin{lemma}
\label{lem:sampler}
Algorithm~\ref{alg:sampler} runs in time polynomial in \(n\) and \(1/\varepsilon\), and its
output law \(\widehat\mu\) satisfies
$\TV(\widehat\mu,\mu_{G,\lambda})<\varepsilon.$
\end{lemma}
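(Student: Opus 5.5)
We prove Lemma~\ref{lem:sampler} by comparing the output law $\widehat\mu$ with the exact mixture \eqref{eq:measure-mixture} through two intermediate laws, and splitting the error into three parts: truncation, mixture-weight estimation, and approximate sampling of the selected component. If $n<\Delta/\xi$ the output is exact and enumeration takes constant time (for fixed $\Delta,\xi$), so assume $n\geq\Delta/\xi$. Write $\alpha_k=p_k\cZ_{G,k}(\lambda)/Z_G(\lambda)$, so that $\mu_{G,\lambda}=\sum_k\alpha_k\cmu_{G,\lambda,k}$. Note that $p_k\propto q^{f(k)}$, and hence $\widehat A_k/\sum_j\widehat A_j$ estimates $p_k\widehat{\cZ}_k/\sum_j p_j\widehat{\cZ}_j$.

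\textbf{Step 1 (truncation).} Let $\beta_k=\alpha_k/\sum_{|j|\leq K}\alpha_j$ for $|k|\leq K$, and let $\nu=\sum_{|k|\leq K}\beta_k\cmu_{G,\lambda,k}$. The choice $K=2n+\lceil\log_2(8/\varepsilon_0)\rceil$ is that of Lemma~\ref{lem:trunc-mixture} with accuracy $\varepsilon_0/4$, which gives $\sum_{|k|>K}\alpha_k\leq\varepsilon_0/4$. The total variation distance between a mixture and its renormalized restriction is at most the discarded mass, so $\TV(\nu,\mu_{G,\lambda})\leq\varepsilon_0/4$.

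\textbf{Step 2 (weight estimation).} Each of the $2K+1$ calls to \textsc{AnnealedEstimate} has failure probability $\varepsilon_0/(2K+1)$. By a union bound and Lemma~\ref{lem:annealed-estimate}, with probability at least $1-\varepsilon_0$ every estimate satisfies $(1-\varepsilon_0)\cZ_{G,k}(\lambda)\leq\widehat{\cZ}_k\leq(1+\varepsilon_0)\cZ_{G,k}(\lambda)$. On this event the drawn index has law $\widehat\beta$ with $\widehat\beta_k/\beta_k\in[(1-\varepsilon_0)/(1+\varepsilon_0),(1+\varepsilon_0)/(1-\varepsilon_0)]$. Hence $\TV(\widehat\beta,\beta)\leq\tfrac12\sum_k|\widehat\beta_k-\beta_k|\leq 4\varepsilon_0$ for small $\varepsilon_0$. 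Off this event we only use the trivial bound $1$, at a cost of probability $\varepsilon_0$. Conditioned on the estimates, the map from the index to the final output is a fixed Markov kernel, so the data-processing inequality transfers these bounds to the output.

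\textbf{Step 3 (component sampling and assembly).} Conditioned on $\widehat k$, the output is within $\varepsilon_0$ of $\cmu_{G,\lambda,\widehat k}$ by Lemma~\ref{lem:quadratically-localized-sampler}; this lemma applies because $|\widehat k|\leq K$ and $t=\lambda$. Averaging over the randomness of both the estimates and $\widehat k$, and using convexity of total variation, gives
\[
\TV(\widehat\mu,\mu_{G,\lambda})\leq \tfrac{\varepsilon_0}{4}+\varepsilon_0+4\varepsilon_0+\varepsilon_0<20\varepsilon_0=\varepsilon.
\]
The main care point is this assembly: the weights $\widehat\beta$ are themselves random, so we condition on the good event, bound the error there, and add the failure probability separately rather than mixing the two arguments.

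\textbf{Step 4 (running time).} The algorithm makes $2K+1=O(n+\log(1/\varepsilon))$ calls to \textsc{AnnealedEstimate} with parameters $\eta=\varepsilon_0$ and $\zeta=\varepsilon_0/(2K+1)$. It also makes one sampler call with $\tau=\varepsilon_0$, $t=\lambda$ and $\varepsilon_K=\varepsilon_0/4$. By Lemmas~\ref{lem:annealed-estimate} and~\ref{lem:quadratically-localized-sampler}, each call is polynomial in $n$ and $1/\varepsilon$, treating $\lambda\leq(1-\xi)/\sigma$ as part of the fixed input. Computing $q^{f(k)}$ and drawing $\widehat k$ are trivial, so the total running time is polynomial in $n$ and $1/\varepsilon$.
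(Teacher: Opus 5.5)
Your proposal is correct and follows essentially the same route as the paper: truncation via Lemma~\ref{lem:trunc-mixture} at accuracy $\varepsilon_0/4$, a union bound over the $2K+1$ annealing estimates with total failure probability $\varepsilon_0$, a total variation bound for the renormalized estimated weights on the good event, and an extra $\varepsilon_0$ from the final Glauber call, all combined by convexity. The differences are minor: your bound of $4\varepsilon_0$ for the weight-estimation error is looser than the paper's $\varepsilon_0/(1-\varepsilon_0)\le 2\varepsilon_0$ (it is valid because $\varepsilon_0<1/20$), and your explicit conditioning/data-processing argument spells out what the paper compresses into ``convexity of total variation distance.''
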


\begin{proof}
Let $\alpha_k:=\frac{p_k\cZ_{G,k}(\lambda)}{Z_G(\lambda)}.$ By \eqref{eq:measure-mixture}, we have that $\mu_{G,\lambda}=\sum_{k\in\mathbb Z}\alpha_k\cmu_{G,\lambda,k}$. By Lemma~\ref{lem:trunc-mixture}, applied with \(\varepsilon=\varepsilon_0/4\),
\[
    \sum_{|k|>K}\alpha_k\leq\frac{\varepsilon_0}{4}.
\]
Since \(p_k\) is proportional to \(q^{f(k)}\), the exact
mixture weights on \([-K,K]\) are proportional to
\[
    A_k:=q^{f(k)}\cZ_{G,k}(\lambda).
\]

By Lemma~\ref{lem:annealed-estimate} and a union bound, with probability at least
\(1-\varepsilon_0\),
\[
    (1-\varepsilon_0)A_k\leq\widehat A_k\leq(1+\varepsilon_0)A_k
\]
simultaneously for every \(|k|\leq K\). On this event, normalization changes the resulting
distribution on \([-K,K]\) by total variation distance at most $\frac{\varepsilon_0}{1-\varepsilon_0}\leq2\varepsilon_0.$ The final call to \(\textnormal{\textsc{QuadraticallyLocalizedSampler}}\) contributes at most another
\(\varepsilon_0\). Accounting also for the exceptional estimation event, convexity of total
variation distance gives
\[
    \TV(\widehat\mu,\mu_{G,\lambda})
    \leq \frac{\varepsilon_0}{4}+2\varepsilon_0+\varepsilon_0+\varepsilon_0
    <\varepsilon.
\]
Polynomial running time follows from Lemmas~\ref{lem:quadratically-localized-sampler} and
\ref{lem:annealed-estimate}.
\end{proof}

\subsubsection{FPRAS for \(Z_G(\lambda)\)}

The counting algorithm uses the same annealing estimates. The only additional step is to
approximate the normalizing constant
\[
    P(q):=\sum_{r\in\mathbb Z}q^{f(r)}
\]
in the definition \(p_k=q^{f(k)}/P(q)\).

\begin{algorithm}[H]
\caption{FPRAS for the hard-core partition function}
\label{alg:fpras}
\begin{algorithmic}[1]
\Require A \(\Delta\)-regular bipartite graph \(G=(L\sqcup R,E)\) with second biadjacency singular value at most \(\sigma\), a fixed constant \(\xi\in(0,1)\), a fugacity \(0<\lambda\leq(1-\xi)/\sigma\), and an accuracy parameter \(\varepsilon\in(0,1)\)
\Ensure An \(\varepsilon\)-relative approximation to \(Z_G(\lambda)\)
\If{\(n<\Delta/\xi\)}
    \State Compute \(Z_G(\lambda)\) exactly by enumerating \(\I(G)\), and return
\EndIf
\State Set \(\varepsilon_0=\varepsilon/20\),
\(\varepsilon_K=\varepsilon_0/4\), and
\(K=2n+\lceil\log_2(8/\varepsilon_0)\rceil\)
\For{\(k=-K,\ldots,K\)}
    \State \(\widehat{\cZ}_k\gets\textnormal{\textsc{AnnealedEstimate}}(k,\varepsilon_0,1/(10(2K+1)))\)
\EndFor
\State Set \(B=\lceil2n+2\log(64/\varepsilon_0)\rceil\) and \(\widehat P=\sum_{r=-B}^{B}q^{f(r)}\)
\State \Return \(\widehat Z=\widehat P^{-1}\sum_{k=-K}^{K}q^{f(k)}\widehat{\cZ}_k\)
\end{algorithmic}
\end{algorithm}

\begin{lemma}
\label{lem:fpras}
Algorithm~\ref{alg:fpras} is an FPRAS for \(Z_G(\lambda)\).
\end{lemma}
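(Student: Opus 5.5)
The plan is to combine three approximations multiplicatively: the truncation of the mixture (Lemma~\ref{lem:trunc-mixture}), the per-index annealing estimates (Lemma~\ref{lem:annealed-estimate}), and the truncation of the normalizer $P(q)$. The small case $n<\Delta/\xi$ is exact enumeration in constant time, since $\Delta$ and $\xi$ are fixed. Assume $n\ge\Delta/\xi$.

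\emph{Step 1: success event.} By Lemma~\ref{lem:annealed-estimate} with failure probability $1/(10(2K+1))$ per index, a union bound gives probability at least $9/10\ge 3/4$ that
\[
(1-\varepsilon_0)\cZ_{G,k}(\lambda)\le \widehat{\cZ}_k\le(1+\varepsilon_0)\cZ_{G,k}(\lambda)
\qquad\text{for all } |k|\le K .
\]
On this event, since $q^{f(k)}>0$,
\[
\sum_{|k|\le K} q^{f(k)}\widehat{\cZ}_k
\]
lies within a factor $1\pm\varepsilon_0$ of $P(q)\sum_{|k|\le K}p_k\cZ_{G,k}(\lambda)$.

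\emph{Step 2: truncating the mixture.} Apply Lemma~\ref{lem:trunc-mixture} with accuracy $\varepsilon_0/4$. Its cutoff $2n+\lceil\log_2(8/\varepsilon_0)\rceil$ is exactly the $K$ used in Algorithm~\ref{alg:fpras}. This gives
\[
(1-\varepsilon_0/4)Z_G(\lambda)\le \sum_{|k|\le K}p_k\cZ_{G,k}(\lambda)\le Z_G(\lambda).
\]

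\emph{Step 3: truncating the normalizer.} Show $(1-\varepsilon_0)P(q)\le\widehat P\le P(q)$. The upper bound is trivial. For the lower bound, reuse the tail argument of Lemma~\ref{lem:trunc-mixture}. For $r\ge n$,
\[
\frac{q^{f(r+1)}}{q^{f(r)}}=q^{r+1/2}\le q^n\le e^{-\Delta}\le \tfrac12 .
\]
Hence
\[
\sum_{r>B}q^{f(r)}\le q^{f(n)}2^{-(B-n)}\le P(q)\,2^{-(B-n)} .
\]
Since $B-n\ge n+2\log(64/\varepsilon_0)\ge\log_2(64/\varepsilon_0)$, the two symmetric tails total at most $2\cdot\varepsilon_0/64\cdot P(q)\le\varepsilon_0 P(q)$. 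The only care needed is that $2\log(\cdot)$ uses the natural log, but $2\ln x\ge\log_2 x$, so the bound holds.

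\emph{Step 4: combining.} The three bounds give
\[
\frac{\widehat Z}{Z_G(\lambda)}\in\left[(1-\varepsilon_0)(1-\varepsilon_0/4),\ \frac{1+\varepsilon_0}{1-\varepsilon_0}\right]\subseteq[1-\varepsilon,1+\varepsilon],
\]
since $\varepsilon_0=\varepsilon/20$.

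\emph{Running time.} There are $2K+1=O(n+\log(1/\varepsilon))$ calls to \textsc{AnnealedEstimate}. Each is polynomial in $n$, $1/\varepsilon_K=O(1/\varepsilon)$, $1/\eta=O(1/\varepsilon)$ and $1/\zeta=O(K)$ by Lemma~\ref{lem:annealed-estimate}. Computing $\widehat P$ takes $O(B)$ arithmetic operations. The main point to check carefully is that the parameters fed to Lemma~\ref{lem:annealed-estimate} match those in Lemma~\ref{lem:trunc-mixture}, namely that $K$ corresponds to $\varepsilon_K=\varepsilon_0/4$. The rest is routine bookkeeping.
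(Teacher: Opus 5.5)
Your proposal is correct and follows the paper's proof essentially step for step: the same union bound over the $2K+1$ annealing estimates, the same application of Lemma~\ref{lem:trunc-mixture} with accuracy $\varepsilon_0/4$, and a geometric-tail bound on $P(q)-\widehat P$, all combined multiplicatively. The only difference is cosmetic. You bound the normalizer tail relative to $P(q)$, using ratio at most $1/2$ beyond $r=n$. The paper instead bounds it absolutely via $q\le e^{-1/n}$ and then uses $P(q)\ge 1$. As a result, your upper bound reads $(1+\varepsilon_0)/(1-\varepsilon_0)$ where the paper has $(1+\varepsilon_0)^2$.
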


\begin{proof}
As in the sampling argument, Lemma~\ref{lem:annealed-estimate} and a union bound imply that, with probability at least \(9/10\),
\begin{equation}
    (1-\varepsilon_0)\cZ_{G,k}(\lambda)
    \leq \widehat{\cZ}_k
    \leq (1+\varepsilon_0)\cZ_{G,k}(\lambda)
    \label{eq:all-Zk}
\end{equation}
simultaneously for all \(|k|\leq K\). Set
\[
    Z^{(K)}
    :=\sum_{k=-K}^{K}p_k\cZ_{G,k}(\lambda),
    \qquad
    \widetilde Z^{(K)}
    :=\sum_{k=-K}^{K}p_k\widehat{\cZ}_k.
\]
On the event in \eqref{eq:all-Zk},
\begin{equation}
    (1-\varepsilon_0)Z^{(K)}
    \leq \widetilde Z^{(K)}
    \leq (1+\varepsilon_0)Z^{(K)}.
    \label{eq:truncated-count-estimate}
\end{equation}
By Lemma~\ref{lem:trunc-mixture}, applied with \(\varepsilon=\varepsilon_0/4\),
\begin{equation}
    (1-\varepsilon_0/4)Z_G(\lambda)
    \leq Z^{(K)}
    \leq Z_G(\lambda).
    \label{eq:Z-truncation}
\end{equation}

It remains only to account for the approximation of \(P(q)\). Since \(q\leq e^{-1/n}\), for every \(r\geq B\) we have $q^{f(r+1)}/q^{f(r)}=q^{r+1/2}\leq e^{-B/n}\leq e^{-2}$. Together with $f(r)=f(-r)$, summing the two geometric tails gives
\begin{align}
0\leq P(q)-\widehat P
&=\sum_{r>B}q^{f(r)}+\sum_{r<-B}q^{f(r)} \nonumber\\
&\leq 2\sum_{r>B}q^{f(r)} \nonumber\\
&\leq \frac{2q^{f(B+1)}}{1-e^{-2}} \nonumber\\
&\leq 4\exp\!\left(-\frac{(B+1)^2}{2n}\right) \nonumber\\
&\leq 4\exp\!\left(-\frac{B^2}{2n}\right) \nonumber\\
&\leq 4\exp\!\left(-2n-2\log\frac{64}{\varepsilon_0}\right)
\leq\varepsilon_0.
\end{align}
As \(P(q)\geq1\) from the term indexed by \(0\), this implies
\begin{equation}
    1\leq\frac{P(q)}{\widehat P}\leq1+\varepsilon_0.
    \label{eq:P-approx}
\end{equation}
Using \(q^{f(k)}/\widehat P=(P(q)/\widehat P)p_k\), we have that $\widehat Z =\frac{P(q)}{\widehat P}\,\widetilde Z^{(K)}.$ Combining \eqref{eq:truncated-count-estimate},
\eqref{eq:Z-truncation}, and \eqref{eq:P-approx}, we obtain
\[
    (1-\varepsilon_0)(1-\varepsilon_0/4)Z_G(\lambda)
    \leq \widehat Z
    \leq (1+\varepsilon_0)^2Z_G(\lambda).
\]
Since \(\varepsilon_0=\varepsilon/20\), these bounds imply
\[
    (1-\varepsilon)Z_G(\lambda)
    \leq \widehat Z
    \leq (1+\varepsilon)Z_G(\lambda).
\]

Finally, \(K\) and \(B\) are polynomial in \(n\) and
\(\log(1/\varepsilon)\), and every annealing estimate runs in polynomial
time by Lemma~\ref{lem:annealed-estimate}. Thus the algorithm is an
FPRAS.
\end{proof}

\section{Proof of Theorem \ref{thm:cluster-expansion}}
\label{sec:cluster-proof}

Throughout this section, fix $0\leq\sigma<\Delta$ and let
$G\in\mathcal G_{\Delta,\Delta}^{\sigma}$ as in Theorem~\ref{thm:cluster-expansion}.

The proof of Theorem \ref{thm:cluster-expansion} combines the existing
polymer algorithm of \cite{JenssenKeevashPerkins2020} with a sharper
spectral analysis. The key new ingredient is a verification and application of the
local- and global-expansion hypotheses of the recent Hadas--Peled
long-range-order criterion \cite{HP2026} directly from the singular
spectrum. We then construct the two polymer models, verify the
Koteck\'y--Preiss criterion, and derive the counting and sampling
algorithms.

\subsection{Long-range order from the singular spectrum}

Let
\[
A=
\begin{pmatrix}
0&M\\ M^{\mathsf T}&0
\end{pmatrix}
\]
be the adjacency matrix of $G$, and denote $P=A/\Delta$. A simple observation is the following.

\begin{lemma}
\label{lem:return-probability}
For every $v\in V(G)$ and every integer $j\geq1$,
\begin{equation}
P^{2j}(v,v)
\leq
\frac1n+
\left(1-\frac1n\right)\left(\frac{\sigma}{\Delta}\right)^{2j}
\leq
\frac1n+\left(\frac{\sigma}{\Delta}\right)^{2j}.
\label{eq:return-probability}
\end{equation}
Moreover, $P^{2j+1}(v,v)=0$ for every $j\geq0$.
\end{lemma}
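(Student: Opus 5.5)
The approach is to compute $P^{2j}$ through the block structure of $A$ and then decompose along the top singular direction. Since $A$ is block off-diagonal with blocks $M$ and $M^{\mathsf T}$, odd powers of $A$ are again block off-diagonal and even powers are block diagonal. Because $G$ is bipartite, $P^{2j+1}(v,v)=0$ immediately. For even powers,
\[
A^{2j}=\begin{pmatrix}(MM^{\mathsf T})^j&0\\0&(M^{\mathsf T}M)^j\end{pmatrix}.
\]
It therefore suffices to bound $e_v^{\mathsf T}(MM^{\mathsf T})^j e_v/\Delta^{2j}$ for $v\in L$; the case $v\in R$ is identical with $M^{\mathsf T}M$ in place of $MM^{\mathsf T}$.

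Next, decompose $MM^{\mathsf T}$ spectrally. By regularity, $\one$ is an eigenvector of $MM^{\mathsf T}$ with eigenvalue $\Delta^2$, and $MM^{\mathsf T}$ preserves $\one^\perp$. On $\one^\perp$ it is positive semidefinite with operator norm $\sigma_2(M)^2\le\sigma^2$. Write $e_v=\frac1n\one+u$ with $u\perp\one$ and $\|u\|_2^2=1-\frac1n$. The cross terms vanish by orthogonality, so
\[
e_v^{\mathsf T}(MM^{\mathsf T})^je_v=\Delta^{2j}\cdot\frac{1}{n^2}\|\one\|_2^2+u^{\mathsf T}(MM^{\mathsf T})^ju\le \frac{\Delta^{2j}}{n}+\sigma^{2j}\left(1-\frac1n\right).
\]

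Dividing by $\Delta^{2j}$ gives the first inequality in \eqref{eq:return-probability}, and the second follows trivially by dropping the factor $1-\frac1n$. There is no real obstacle here. The only point needing care is that $(MM^{\mathsf T})^j$ restricted to $\one^\perp$ has norm at most $\sigma^{2j}$. This holds because $\Pi_\perp M\Pi_\perp$ has norm $\sigma_2(M)$, and $M$ maps $\one^\perp$ into $\one^\perp$ since $M^{\mathsf T}\one=\Delta\one$.
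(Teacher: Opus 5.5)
Your proof is correct and is essentially the paper's argument: both split $e_v$ into its component on the trivial eigenspace, which contributes $\frac1n$, and an orthogonal part of squared norm $1-\frac1n$ on which $P^{2j}$ is at most $(\sigma/\Delta)^{2j}$. The only difference is bookkeeping: the paper uses the full eigenbasis of $P$ with the two trivial vectors $\psi_\pm$ contributing $\frac{1}{2n}$ each, while you pass to the block $(MM^{\mathsf T})^j$ and split off the single direction $\one$ on one side, which is the same decomposition because $\operatorname{span}\{\psi_+,\psi_-\}=\operatorname{span}\{\one_L,\one_R\}$.
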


\begin{proof}
Since $\sigma_2(M)\leq\sigma<\Delta$, the graph $G$ is connected.
Since $P$ is real and symmetric, it has an orthonormal eigenbasis. Its
two trivial eigenvalues are $1$ and $-1$, with normalized eigenvectors
\[
\psi_+
:=
\frac{1}{\sqrt{2n}}\one
\qquad\text{and}\qquad
\psi_-
:=
\frac{1}{\sqrt{2n}}(\one_L-\one_R),
\]
respectively. Extend these to an orthonormal eigenbasis
$\psi_+,\psi_-,\psi_3,\ldots,\psi_{2n}$, and let
$\theta_3,\ldots,\theta_{2n}$ be the corresponding remaining
eigenvalues. Since the eigenvalues of the adjacency matrix of $G$ are
the positive and negative singular values of its biadjacency matrix, we have that $
|\theta_i|\leq\frac{\sigma}{\Delta}$ for $3\leq i\leq2n$.

Let $e_v$ denote the standard basis vector at $v$. The spectral
decomposition of $P^{2j}$ gives
\begin{align*}
P^{2j}(v,v)
&=
\langle e_v,P^{2j}e_v\rangle\\
&=
\langle e_v,\psi_+\rangle^2
+
\langle e_v,\psi_-\rangle^2
+
\sum_{i=3}^{2n}
\theta_i^{2j}\langle e_v,\psi_i\rangle^2\\
&\leq
\frac1n
+
\left(\frac{\sigma}{\Delta}\right)^{2j}
\sum_{i=3}^{2n}\langle e_v,\psi_i\rangle^2\\
&=
\frac1n
+
\left(\frac{\sigma}{\Delta}\right)^{2j}
\left(
\|e_v\|_2^2
-
\langle e_v,\psi_+\rangle^2
-
\langle e_v,\psi_-\rangle^2
\right)\\
&=
\frac1n
+
\left(1-\frac1n\right)
\left(\frac{\sigma}{\Delta}\right)^{2j}\\
&\leq
\frac1n+
\left(\frac{\sigma}{\Delta}\right)^{2j}\,,
\end{align*}
proving the inequality. Finally, a walk on a bipartite graph alternates between the two bipartition classes, so it cannot return to
its starting vertex after an odd number of steps.
\end{proof}

We next show that, at high enough fugacity, independent sets with
substantial occupancy on both bipartition classes carry negligible
total weight. To do this we apply Theorem \ref{prop:HP}.

Let $r = \frac{|V(G)|}{2\Delta}$ and let
\begin{equation}
\mathcal B
:=
\left\{
I\in\I(G):
|I\cap L|>r
\quad\text{and}\quad
|I\cap R|>r
\right\}.
\label{eq:def-bad-event}
\end{equation}

\begin{lemma}
\label{prop:spectral-LRO}
There are absolute constants $C_0,c_0>0$ such that the following holds.
Suppose $n\geq\Delta^2$ and
\begin{equation}
\lambda
\geq
\exp\!\left[
C_0\left(
\frac1\Delta+
\frac{\sigma^2}{\Delta^2-\sigma^2}
\right)\log(\e\Delta)
\right]-1.
\label{eq:LRO-lambda-condition}
\end{equation}
Then
\begin{equation}
\sum_{I\in\mathcal B}\lambda^{|I|}
\leq
(1+\lambda)^n
\exp\!\left[
-c_0\frac{\Delta-\sigma}{\Delta^2}\,
 n\log(1+\lambda)
\right].
\label{eq:spectral-LRO-conclusion}
\end{equation}
\end{lemma}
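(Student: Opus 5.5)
We apply Theorem~\ref{prop:HP} with $H=G$, $|V(H)|=2n$, $|L_H|=n$, and $r=|V(G)|/(2\Delta)=n/\Delta$. This $r$ satisfies $1\le r\le n/2$ because $n\ge\Delta^2$ and $\Delta\ge 3$.

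The Cheeger bound $h(G)\ge(\Delta-\sigma)/2$ comes from Lemma~\ref{lem:cheeger}. Substituting it into \eqref{eq:HP-conclusion} produces the exponent
\[
c_{\mathrm{HP}}\frac{\Delta-\sigma}{2\Delta}\cdot\frac{n}{\Delta}\log(1+\lambda),
\]
which is \eqref{eq:spectral-LRO-conclusion} with $c_0=c_{\mathrm{HP}}/2$. So the conclusion is immediate, and all the work is in verifying the hypothesis \eqref{eq:HP-hypothesis}. With this choice of $r$ we have $|V(H)|/r=2\Delta$, so the hypothesis reads
\[
\log(1+\lambda)>C_{\mathrm{HP}}\frac{C_{\mathrm{LE}}}{\Delta}\max\Bigl\{\log\bigl(4C_{\mathrm{HP}}\Delta^2/(\Delta-\sigma)\bigr),\ \frac{4\Delta^2}{(\Delta-\sigma)M_{\mathrm{LE}}}\Bigr\}.
\]
We split into two regimes, choosing a different local-expansion certificate in each.

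\textbf{Regime 1: $\sigma\le\Delta/2$.} Here we use the random-walk certificate \eqref{eq:HP-random-walk-certificate} with $M_{\mathrm{LE}}=\Delta$. Lemma~\ref{lem:return-probability} bounds the return probabilities, giving
\[
C_{\mathrm{LE}}\le 1+\Delta\sum_{j\ge1}\Bigl(\tfrac1n+(\sigma/\Delta)^{2j}\Bigr)\le 1+\frac{\Delta^2}{2n}+\frac{\Delta\sigma^2}{\Delta^2-\sigma^2}.
\]
Since $n\ge\Delta^2$, this is $O\bigl(1+\Delta\sigma^2/(\Delta^2-\sigma^2)\bigr)$. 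Also $\Delta-\sigma\ge\Delta/2$, so the first entry of the max is $O(\log(\e\Delta))$ and the second is $8\Delta/M_{\mathrm{LE}}=8$. The right-hand side is therefore
\[
O\Bigl(\bigl(\tfrac1\Delta+\tfrac{\sigma^2}{\Delta^2-\sigma^2}\bigr)\log(\e\Delta)\Bigr),
\]
which is exactly what \eqref{eq:LRO-lambda-condition} supplies for large $C_0$.

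\textbf{Regime 2: $\Delta/2<\sigma<\Delta$.} Here we use the trivial one-edge certificate $C_{\mathrm{LE}}=M_{\mathrm{LE}}=1$. The right-hand side becomes
\[
\frac{C_{\mathrm{HP}}}{\Delta}\max\Bigl\{\log\frac{4C_{\mathrm{HP}}\Delta^2}{\Delta-\sigma},\ \frac{4\Delta^2}{\Delta-\sigma}\Bigr\}=O\Bigl(\frac{\Delta}{\Delta-\sigma}\Bigr),
\]
since the second term dominates. In this regime
\[
\frac{\sigma^2}{\Delta^2-\sigma^2}\ge\frac{\Delta^2/4}{(\Delta-\sigma)\cdot2\Delta}=\frac{\Delta}{8(\Delta-\sigma)},
\]
so \eqref{eq:LRO-lambda-condition} again dominates once $C_0$ is large; the extra factor $\log(\e\Delta)\ge1$ only helps.

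The main obstacle is the bookkeeping that makes a single constant $C_0$ work uniformly across both regimes. The one delicate point is the $1/n$ term in the return probabilities: summed over up to $\Delta/2$ steps and multiplied by $\Delta$, it contributes $\Delta^2/(2n)$. This is where the assumption $n\ge\Delta^2$ is used. The $1/\Delta$ term in \eqref{eq:LRO-lambda-condition} absorbs the baseline $C_{\mathrm{LE}}\ge1$, though by Lemma~\ref{lem:trace-comparison} it is in any case dominated by $\sigma^2/(\Delta^2-\sigma^2)$.
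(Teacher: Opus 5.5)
Your proof is correct and follows essentially the same route as the paper: the same cutoff $r=n/\Delta$, the same split at $\sigma=\Delta/2$ with the random-walk certificate (bounded via Lemma~\ref{lem:return-probability} and $n\geq\Delta^2$) in the first regime and the one-edge certificate plus the Cheeger bound in the second, and the same final substitution of $h(G)\geq(\Delta-\sigma)/2$ into the Hadas--Peled conclusion, giving $c_0=c_{\mathrm{HP}}/2$. The only blemish is notational: the $1/n$ term should be summed only over $1\leq j\leq\lfloor(\Delta-1)/2\rfloor$ rather than all $j\geq1$, which is what your bound $\Delta^2/(2n)$ actually uses.
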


\begin{proof}
Write $\ell:=\log(1+\lambda)$ so that condition \eqref{eq:LRO-lambda-condition} is exactly
\begin{equation}
\ell
\geq
C_0\left(
\frac1\Delta+
\frac{\sigma^2}{\Delta^2-\sigma^2}
\right)\log(\e\Delta).
\label{eq:ell-alpha-L}
\end{equation}
We apply Theorem \ref{prop:HP} with the cutoff $r$. We split into the following two cases.

\smallskip
\noindent
\emph{Case 1: $\sigma\leq\Delta/2$.}
Use the random-walk certificate
\eqref{eq:HP-random-walk-certificate}. By
Lemma \ref{lem:return-probability} and since $n\geq\Delta^2$,
\begin{align}
\sum_{j=1}^{\lfloor(\Delta-1)/2\rfloor}P^{2j}(v,v)
&\leq
\frac{\Delta}{2n}+
\sum_{j\geq1}\left(\frac{\sigma}{\Delta}\right)^{2j}
\nonumber\\
&\leq
\frac1{2\Delta}+
\frac{\sigma^2}{\Delta^2-\sigma^2}.
\label{eq:return-sum}
\end{align}
Therefore the local expansion condition holds with constants $M_{\mathrm{LE}}=\Delta$ and
\begin{equation}
\frac{C_{\mathrm{LE}}}\Delta
=
\frac{3}{2\Delta}+
\frac{\sigma^2}{\Delta^2-\sigma^2}
\leq
\frac32\left(
\frac1\Delta+
\frac{\sigma^2}{\Delta^2-\sigma^2}
\right).
\label{eq:LE-case1}
\end{equation}
By Lemma \ref{lem:cheeger}, $h(G)\geq\Delta/4.$ Therefore, it follows that
\begin{align}
\log\!\left(
C_{\mathrm{HP}}\frac\Delta{h(G)}\frac{|V(G)|}{r}
\right)
&\leq
\log(C_1\Delta)
\leq C_2\log(\e\Delta),
\label{eq:HP-case1-first}\\
\frac\Delta{h(G)}\frac{|V(G)|}{rM_{\mathrm{LE}}}
&\leq8.
\label{eq:HP-case1-second}
\end{align}
Thus the right-hand side of \eqref{eq:HP-hypothesis} is at most
$C_3(1/\Delta+\sigma^2/(\Delta^2-\sigma^2))\log(\e\Delta)$.

\smallskip
\noindent
\emph{Case 2: $\sigma>\Delta/2$.}
Use the universal one-edge certificate
$C_{\mathrm{LE}}=M_{\mathrm{LE}}=1$.
Then Lemma \ref{lem:cheeger} gives $h(G)\geq(\Delta-\sigma)/2$. Hence
\begin{align}
\frac1\Delta
\log\!\left(
C_{\mathrm{HP}}\frac\Delta{h(G)}\frac{|V(G)|}{r}
\right)
&\leq
\frac1\Delta\log\!\left(
\frac{C_4\Delta}{1-\sigma/\Delta}
\right),
\label{eq:HP-case2-first}\\
\frac1\Delta
\frac\Delta{h(G)}\frac{|V(G)|}{rM_{\mathrm{LE}}}
&\leq
\frac{4}{1-\sigma/\Delta}.
\label{eq:HP-case2-second}
\end{align}
The first term is also $O((1-\sigma/\Delta)^{-1})$ by
$\log x\leq x$. On the other hand, because $\sigma>\Delta/2$,
\begin{equation}
\frac{(\sigma/\Delta)^2}
{(1-\sigma/\Delta)(1+\sigma/\Delta)}
\geq
\frac{1}{6(1-\sigma/\Delta)}.
\label{eq:theta-controls-gap}
\end{equation}
Therefore the right-hand side of \eqref{eq:HP-hypothesis} is at most
$C_5(1/\Delta+\sigma^2/(\Delta^2-\sigma^2))$, and hence at most
$C_5(1/\Delta+\sigma^2/(\Delta^2-\sigma^2))\log(\e\Delta)$.

Choosing $C_0$ larger than the constants in both cases makes
\eqref{eq:ell-alpha-L} imply \eqref{eq:HP-hypothesis}. Applying
Theorem~\ref{prop:HP}, then using
Lemma \ref{lem:cheeger} and $r=n/\Delta$, yields
\begin{align*}
\sum_{I\in\mathcal B}\lambda^{|I|}
&\leq
(1+\lambda)^n
\exp\!\left[-c_{\mathrm{HP}}
\frac{h(G)}\Delta\frac n\Delta\ell\right]\\
&\leq
(1+\lambda)^n
\exp\!\left[-c_0
\frac{\Delta-\sigma}{\Delta^2}n\ell\right],
\end{align*}
which is \eqref{eq:spectral-LRO-conclusion}.
\end{proof}

\subsection{Approximating the Hard-Core Partition Function via Polymer Models}

Having established that there are two dominant phases, we now define the polymer models which capture these. Let $H_L$ be the graph on vertex set $L$ in which two distinct vertices are adjacent when they have a common neighbor in $G$ and define $H_R$
analogously. By construction, each of $H_L,H_R$ has maximum degree at most
\begin{equation}
\Delta(\Delta-1)\leq\Delta^2.
\label{eq:square-degree}
\end{equation}
Note that $H_L$ and $H_R$ are equivalently obtained by considering the restrictions of $G^2$ to its two
bipartition classes.

\begin{definition}[Polymers]
\label{def:polymers}
An $L$-polymer is a set $\gamma\subseteq L$ such that
\begin{enumerate}[label=(\roman*)]
\item $\gamma$ is connected in $H_L$
\item $|\gamma|\leq r=n/\Delta$.
\end{enumerate}
Two $L$-polymers are compatible if their graph distance in $G$ is
greater than two. Equivalently, their distance in the host graph $H_L$
is greater than one, so this is exactly the graph-based compatibility
relation used in Proposition \ref{prop:JKP}. The weight of a polymer is
\begin{equation}
w_\gamma
:=
\frac{\lambda^{|\gamma|}}
{(1+\lambda)^{|N(\gamma)|}}.
\label{eq:polymer-weight}
\end{equation}
The $R$-polymer model is defined symmetrically. Their partition
functions are denoted by $\Xi_L$ and $\Xi_R$.
\end{definition}

We now show that the partition function can be approximated by the polymers. Call a set $A\subseteq L$ admissible if every connected component of $A$ in
$H_L$ has size at most $r$. Define admissible subsets of $R$ symmetrically.
Distinct $H_L$-components have disjoint neighborhoods in $R$, and the
components of an admissible set form a compatible polymer configuration.
Consequently,
\begin{align}
(1+\lambda)^n\Xi_L
&=
\sum_{\substack{I\in\I(G)\\I\cap L\text{ is admissible}}}
\lambda^{|I|},
\label{eq:L-phase-identity}\\
(1+\lambda)^n\Xi_R
&=
\sum_{\substack{I\in\I(G)\\I\cap R\text{ is admissible}}}
\lambda^{|I|}.
\label{eq:R-phase-identity}
\end{align}

Note that these two polymer models together all independent sets besides those that have both sides not admissible. They also double count independent sets with both sides admissible. Accordingly, define
\begin{align}
W_{\cap}
&:=
\sum_{\substack{I\in\I(G)\\
I\cap L\text{ and }I\cap R\text{ are admissible}}}
\lambda^{|I|},
\label{eq:def-Wcap}\\
W_{\varnothing}
&:=
\sum_{\substack{I\in\I(G)\\
I\cap L\text{ is not admissible and }I\cap R\text{ is not admissible}}}
\lambda^{|I|}.
\label{eq:def-Wempty}
\end{align}

We will now show that the sum $(1+\lambda)^n\Xi_L + (1+\lambda)^n\Xi_R$  is a good approximation to the hard-core partition function $Z_G(\lambda)$. In particular, the error, bounded by  $\frac{W_{\cap}+W_{\varnothing}}{Z_G(\lambda)}$, is exponentially small.

\begin{lemma} \label{lem:uniform-phase-error}
There are absolute constants $C_8,c_8>0$ such that, if $n\geq\Delta^2$
and
\begin{equation}
\lambda
\geq
\exp\!\left[
C_8\left(
\frac1\Delta+
\frac{\sigma^2}{\Delta^2-\sigma^2}
\right)\log(\e\Delta)
\right]-1,
\label{eq:uniform-phase-lambda}
\end{equation}
then
\begin{equation}
\frac{W_{\cap}+W_{\varnothing}}{Z_G(\lambda)}
\leq
(2n+2)
\exp\!\left[-c_8\frac{n\log(\e\Delta)}{\Delta^2}\right].
\label{eq:uniform-total-phase-error}
\end{equation}
\end{lemma}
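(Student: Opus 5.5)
The plan is to bound $W_{\varnothing}$ and $W_{\cap}$ separately, always against the trivial lower bound $Z_G(\lambda)\geq(1+\lambda)^n$, which comes from the independent sets contained in $L$. The factor $(2n+2)$ in \eqref{eq:uniform-total-phase-error} suggests I will also need a union bound over occupation sizes on one side.

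\emph{Step 1: $W_{\varnothing}$.} If $I\cap L$ is not admissible, it has an $H_L$-component of size $>r$, so $|I\cap L|>r$. The same holds on the right, so every set counted in $W_{\varnothing}$ lies in $\mathcal B$. Taking $C_8\geq C_0$, Lemma~\ref{prop:spectral-LRO} gives
\[
W_{\varnothing}/Z_G(\lambda)\leq\exp\!\left[-c_0\frac{\Delta-\sigma}{\Delta^2}n\ell\right],
\qquad \ell=\log(1+\lambda).
\]
I then check that $(\Delta-\sigma)\ell\gtrsim C_8\log(\e\Delta)$ in two cases.
\begin{enumerate}[label=(\roman*)]
\item If $\sigma\leq\Delta/2$, then $\Delta-\sigma\geq\Delta/2$, and the hypothesis gives $\ell\geq C_8\log(\e\Delta)/\Delta$.
\item If $\sigma>\Delta/2$, then $(\Delta-\sigma)\frac{\sigma^2}{\Delta^2-\sigma^2}=\frac{\sigma^2}{\Delta+\sigma}\geq\Delta/12$.
\end{enumerate}
In both cases this yields $\exp[-c\,n\log(\e\Delta)/\Delta^2]$.

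\emph{Step 2: split $W_{\cap}$ by occupation sizes.} The sets with both $|I\cap L|>r$ and $|I\cap R|>r$ lie in $\mathcal B$, so they are handled exactly as in Step 1.

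The sets with both $|I\cap L|\leq r$ and $|I\cap R|\leq r$ I bound crudely:
\[
\Bigl(\sum_{a\leq r}\binom{n}{a}\lambda^a\Bigr)^2\leq\exp\!\left[2\tfrac{n}{\Delta}\log\bigl(\e\Delta(1+\lambda)\bigr)\right].
\]
Compared with $(1+\lambda)^n=\e^{n\ell}$, this is negligible once $\ell\geq C\log(\e\Delta)/\Delta$, which the hypothesis provides. When $\lambda$ is large the term $\log(1+\lambda)$ inside is absorbed by $n\ell$ because $2/\Delta<1/2$.

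\emph{Step 3: the mixed case, say $|I\cap L|\leq r<|I\cap R|$ with $B:=I\cap R$ admissible.} Here I fix $B$ and sum over $A=I\cap L\subseteq L\setminus N(B)$ with $|A|\leq r$. This gives
\[
\lambda^{|B|}(1+\lambda)^{n-|N(B)|}\,\mathbb P\bigl(\mathrm{Bin}(n-|N(B)|,p)\leq r\bigr),\qquad p=\frac{\lambda}{1+\lambda}.
\]
Summing the prefactor over admissible $B$ gives exactly $(1+\lambda)^n\Xi_R\leq Z_G(\lambda)$. So the ratio is at most the largest binomial lower-tail probability, provided $n-|N(B)|\geq n/2$. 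Since $np/2\geq C n\log(\e\Delta)/\Delta\gg r$, a Chernoff bound then gives decay $\exp[-c\,n\log(\e\Delta)/\Delta]$. To organize the union bound I sum over the value of $|I\cap L|\in\{0,\dots,n\}$, which is where a factor like $n+1$ enters.

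\emph{Main obstacle.} The hard part is the sets $B$ with $|N(B)|>n/2$, where the binomial has too few trials. For these I would use the size-sensitive Tanner bound (Lemma~\ref{lem:tanner}) on each $H_R$-component $\gamma$ of $B$. Since $|\gamma|\leq r=n/\Delta$, it gives
\[
|N(\gamma)|\geq\frac{\Delta^2}{\sigma^2+\Delta}\,|\gamma|=:\kappa|\gamma|.
\]
The neighborhoods of distinct components are disjoint, so $|N(B)|\geq\kappa|B|$. I would then compare the weight $\lambda^{|B|}(1+\lambda)^{-|N(B)|}$, summed over admissible $B$ with $|N(B)|>n/2$, to $1$ via the polymer-style count: the number of connected sets of size $s$ in $H_R$ through a given vertex is at most $(\e\Delta^2)^s$. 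The hypothesis $\ell\gtrsim\frac{\sigma^2}{\Delta^2-\sigma^2}\log(\e\Delta)$ should be exactly what makes $\kappa\ell$ beat $\log(\e\Delta^2\lambda)$, so that this weight is exponentially small in $n/\Delta^2$. Matching these constants is the step I expect to require the most care. If it fails, the fallback is to show directly that such $I$ have $|I\cap L|$ forced small and $|I\cap R|$ forced both $>r$ and at most $n/\kappa$, and to bound this family by a second application of Theorem~\ref{prop:HP}-type estimates.
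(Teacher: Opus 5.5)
Steps~1 and~2 and the Chernoff half of Step~3 are sound. The identity $\sum_{B\text{ adm}}\lambda^{|B|}(1+\lambda)^{n-|N(B)|}=(1+\lambda)^n\Xi_R\le Z_G(\lambda)$ is a nice device. The gap is the step you call the main obstacle: it is left open, and as sketched it fails.

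\textbf{The constant $\kappa$ is too weak.} Your $\kappa=\Delta^2/(\sigma^2+\Delta)$ satisfies $\kappa\le1$ whenever $\Delta^2-\sigma^2\le\Delta$, and the lemma must cover that regime. There $\lambda\ge1$, and $\log(\e\Delta^2\lambda)-\kappa\ell\ge\log(\e\Delta^2)-\log(1+1/\lambda)>0$. So no choice of $C_8$ makes $\kappa\ell$ beat $\log(\e\Delta^2\lambda)$. The fix is to keep the unsimplified Tanner constant. For a component $\gamma$ with $|\gamma|\le n/\Delta$, Lemma~\ref{lem:tanner} gives $|N(\gamma)|\ge|\gamma|/x$, where $x=\frac{\sigma^2}{\Delta^2}+\frac{1-\sigma^2/\Delta^2}{\Delta}$. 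Moreover $(1/x-1)^{-1}=\frac{\Delta}{\Delta-1}\bigl(\frac1\Delta+\frac{\sigma^2}{\Delta^2-\sigma^2}\bigr)$, so the hypothesis gives $(1/x-1)\ell\ge\frac{C_8}{2}\log(\e\Delta)$. Using $\log\lambda\le\ell$, every admissible $B$ then satisfies $\lambda^{|B|}(1+\lambda)^{-|N(B)|}\le(\e\Delta)^{-C_8|B|/2}$.

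\textbf{A per-polymer bound does not control the sum over admissible sets.} Connected-set counting bounds single polymers, but the full sum over admissible $B$ is $\Xi_R\ge1$. You must exploit that $B$ is large. The simplest way is this: the number of $B\subseteq R$ with $|B|=s\ge n/\Delta$ is $\binom ns\le(\e\Delta)^s$. Hence $\sum_{s>r}\binom ns(\e\Delta)^{-C_8s/2}\le2\exp[-(C_8/2-1)n\log(\e\Delta)/\Delta]$. Since $|B|>r$ throughout your mixed case, this handles the whole mixed case at once. The Chernoff bound and the $|N(B)|>n/2$ split become unnecessary. Your Hadas--Peled fallback would not apply there anyway, since $|I\cap L|\le r$.

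With that repair your decomposition proves the lemma, but by a longer route than the paper's. The paper never splits by occupation sizes and uses Lemma~\ref{prop:spectral-LRO} only for $W_\varnothing$. Suppose both $A=I\cap L$ and $D=I\cap R$ are admissible. Tanner on components gives $|N(A)|\ge|A|/x$ and $|N(D)|\ge|D|/x$. Independence gives $N(A)\subseteq R\setminus D$ and $N(D)\subseteq L\setminus A$. Adding $|A|/x+|D|\le n$ and $|D|/x+|A|\le n$ yields $|I|\le 2xn/(1+x)$. Then $W_\cap\le\sum_{k\le2\beta n}\binom{2n}{k}\lambda^k$ with $\beta=x/(1+x)$, which an entropy bound compares directly with $(1+\lambda)^n$; this is also the source of the prefactor $2n+1$. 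Your repaired one-sided bound also covers every $I$ with an admissible side of size $>r$. So you likewise do not need long-range order for $W_\cap$, and your case analysis collapses to roughly the paper's argument.
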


\begin{proof}
Write
\[
\ell:=\log(1+\lambda),
\qquad
x:=
\frac{\sigma^2}{\Delta^2}
+
\frac{1-\sigma^2/\Delta^2}{\Delta}.
\]
We will begin by proving three short claims regarding $W_{\varnothing}$ and $W_{\cap}$

\begin{claim}
\label{clm:missing-phase-weight}
\begin{equation}
\frac{W_{\varnothing}}{(1+\lambda)^n}
\leq
\exp\!\left[
-c_0\frac{\Delta-\sigma}{\Delta^2}\,
n\ell
\right].
\label{eq:Wempty-bound}
\end{equation}
\end{claim}

\begin{proof}
If $I\cap L$ is not admissible, it contains an $H_L$-component of size
greater than $r=n/\Delta$, and hence $|I\cap L|>r$. The same statement
holds on $R$. Therefore, the configurations counted by
$W_{\varnothing}$ form a subset of $\mathcal B$ from
\eqref{eq:def-bad-event}. The claim follows from
Lemma \ref{prop:spectral-LRO}.
\end{proof}

\begin{claim}
\label{lem:overlap-size}
If $I\in\I(G)$ and both $A:=I\cap L$ and $D:=I\cap R$ are admissible, then
\begin{equation}
|I|
\leq \frac{2xn}{1+x}.
\label{eq:overlap-size}
\end{equation}
\end{claim}

\begin{proof}
Distinct connected components of $A$ in $H_L$ have disjoint
neighborhoods. Denote these components to be $A_1, A_2, \cdots A_k$ , and by definition each of these components has size at most $r=n/\Delta$. Applying Lemma \ref{lem:tanner} and summing gives

\begin{align}
|N(A)|
&=
\sum_{i=1}^k |N(A_i)| \nonumber \\
&\geq
\sum_{i=1}^k
\frac{\Delta^2}
{\sigma^2+(\Delta^2-\sigma^2)|A_i|/n}
\,|A_i| \nonumber \\
&\geq
\sum_{i=1}^k
\frac{\Delta^2}
{\sigma^2+(\Delta^2-\sigma^2)/\Delta}
\,|A_i| \nonumber \\
&=
\frac{\Delta^2}
{\sigma^2+(\Delta^2-\sigma^2)/\Delta}
\,|A| \nonumber \\
&=
\frac{|A|}{x},
\end{align}

By symmetry, the same bound holds with $D$ in place of $A$. Also, since $I$ is independent, we have that $N(A)\subseteq R\setminus D$ and $N(D)\subseteq L\setminus A.$ This gives that
\[
\frac{\Delta^2}{\sigma^2+(\Delta^2-\sigma^2)/\Delta}|A|+|D|\leq n,
\qquad
\frac{\Delta^2}{\sigma^2+(\Delta^2-\sigma^2)/\Delta}|D|+|A|\leq n.
\]
Adding the above two inequalities gives that
\begin{align}
\left(
\frac{\Delta^2}
{\sigma^2+(\Delta^2-\sigma^2)/\Delta}
+1
\right)(|A|+|D|)
&\leq 2n.
\end{align}
Rearranging gives the result
\begin{align}
|I|
=|A|+|D|
&\leq
\frac{2n}{
\displaystyle
\frac{\Delta^2}
{\sigma^2+(\Delta^2-\sigma^2)/\Delta}
+1
} \nonumber\\
&=
\frac{
2n\bigl(\sigma^2+(\Delta^2-\sigma^2)/\Delta\bigr)
}{
\Delta^2+\sigma^2+(\Delta^2-\sigma^2)/\Delta
} \nonumber\\
&=
\frac{2xn}{1+x}.
\end{align}
\end{proof}

\begin{claim}
\label{clm:overlap-weight}
There are absolute constants $C_7,c_7>0$ such that, if
\begin{equation}
\ell\geq C_7\left(
\frac1\Delta+
\frac{\sigma^2}{\Delta^2-\sigma^2}
\right)\log(\e\Delta),
\label{eq:overlap-ell-condition}
\end{equation}
then
\begin{equation}
\frac{W_{\cap}}{(1+\lambda)^n}
\leq
(2n+1)\exp\!\left[
-2c_7xn\log(\e\Delta)
\right].
\label{eq:Wcap-bound}
\end{equation}
\end{claim}

\begin{proof}
Set $\beta:=x/(1+x)\in(0,1/2)$, and let
$H(y):=-y\log y-(1-y)\log(1-y)$.
By Claim~\ref{lem:overlap-size}, every independent set counted by
$W_{\cap}$ has size at most $2\beta n$. Therefore,
\begin{align}
\frac{W_{\cap}}{(1+\lambda)^n}
&\leq
\frac{1}{(1+\lambda)^n}
\sum_{k\leq 2\beta n}\binom{2n}{k}\lambda^k
\label{eq:overlap-binomial}\\
&\leq
(2n+1)
\exp\left\{
2nH(\beta)+2\beta n(\log\lambda)_+-n\ell
\right\}
\end{align}

where the second inequality follows from the entropy bound $
\binom{2n}{k}
\leq
\exp\left\{2nH\left(\frac{k}{2n}\right)\right\}
\leq
\exp\{2nH(\beta)\}$ and the inequality $
\lambda^k\leq\exp\{2\beta n(\log\lambda)_+\}$. We then bound as follows

\begin{align}
\frac{W_{\cap}}{(1+\lambda)^n}
&\leq
(2n+1)
\exp\left\{
2n\left(
H(\beta)-\left(\frac12-\beta\right)\ell
\right)
\right\}
\nonumber\\
&\leq
(2n+1)
\exp\left\{
2n\left(
x\log(2\e\Delta)-\frac{1-x}{4}\ell
\right)
\right\}
\nonumber\\
&\leq
(2n+1)
\exp\left\{
2n\left(
2x
-
\frac{C_7}{4}(1-x)
\left(
\frac1\Delta+
\frac{\sigma^2}{\Delta^2-\sigma^2}
\right)
\right)\log(\e\Delta)
\right\}
\nonumber\\
&=
(2n+1)
\exp\left\{
2n\left(
2-\frac{C_7}{4}\frac{\Delta-1}{\Delta}
\right)
x\log(\e\Delta)
\right\}
\nonumber\\
&\leq
(2n+1)
\exp\left\{
2n\left(2-\frac{C_7}{8}\right)
x\log(\e\Delta)
\right\}
\nonumber\\
&\leq
(2n+1)
\exp\left\{
-2c_7xn\log(\e\Delta)
\right\}.
\notag
\end{align}
Here, the first inequality uses $(\log\lambda)_+\leq\ell$ and the second inequality follows from the facts that
\[
H(\beta)\leq\beta\log\frac{\e}{\beta},
\qquad
\beta\leq x,
\qquad
\frac1\beta\leq2\Delta,
\qquad
\frac12-\beta\geq\frac{1-x}{4}.
\]
The third inequality follows from
\eqref{eq:overlap-ell-condition} and
$\log(2\e\Delta)\leq2\log(\e\Delta)$. The equality uses
\[
(1-x)
\left(
\frac1\Delta+
\frac{\sigma^2}{\Delta^2-\sigma^2}
\right)
=
\frac{\Delta-1}{\Delta}x,
\]
and the following inequality uses
$(\Delta-1)/\Delta\geq1/2$. Finally, choosing $C_7$ sufficiently large and then choosing an
absolute $c_7>0$ such that
$2-\frac{C_7}{8}\leq-c_7$
gives the final inequality.
\end{proof}

With the above claims in hand, we are now able to prove the Lemma. We compute as follows:

\begin{align}
\frac{W_{\cap}+W_{\varnothing}}{Z_G(\lambda)}
&\leq
\frac{W_{\cap}+W_{\varnothing}}{(1+\lambda)^n}
\nonumber\\
&=
\frac{W_{\cap}}{(1+\lambda)^n}
+
\frac{W_{\varnothing}}{(1+\lambda)^n}
\nonumber\\
&\leq
(2n+1)
\exp\!\left[
-2c_7xn\log(\e\Delta)
\right]
+
\exp\!\left[
-c_0\frac{\Delta-\sigma}{\Delta^2}n\ell
\right]
\nonumber\\
&\leq
(2n+1)
\exp\!\left[
-c\frac{n\log(\e\Delta)}{\Delta}
\right]
+
\exp\!\left[
-c\frac{n\log(\e\Delta)}{\Delta^2}
\right]
\nonumber\\
&\leq
(2n+1)
\exp\!\left[
-c\frac{n\log(\e\Delta)}{\Delta^2}
\right]
+
\exp\!\left[
-c\frac{n\log(\e\Delta)}{\Delta^2}
\right]
\nonumber\\
&=
(2n+2)
\exp\!\left[
-c\frac{n\log(\e\Delta)}{\Delta^2}
\right].
\end{align}
Here, the first inequality uses
$Z_G(\lambda)\geq(1+\lambda)^n$. The second inequality follows from
Claims~\ref{clm:overlap-weight} and~\ref{clm:missing-phase-weight}. The third inequality, the overlap term uses $x\geq1/\Delta$, while
the missing-phase term uses the fugacity assumption \eqref{eq:uniform-phase-lambda}. The final inequality uses
$\Delta^{-1}\geq\Delta^{-2}$, after decreasing the absolute constant
$c>0$ if necessary.
\end{proof}

Now, as we have control over the errors from $W_{\varnothing}$ and $W_{\cap}$, we are finally able to prove that the polymer model is a good approximation for the partition function. For notation, let $\widetilde Z_G(\lambda)$ denote the approximation output by combining the two polymer models. That is,

\begin{equation}
\widetilde Z_G(\lambda)
:=
(1+\lambda)^n(\Xi_L+\Xi_R).
\end{equation}

By inclusion--exclusion,
\begin{equation}
\widetilde Z_G(\lambda)-Z_G(\lambda)
=
W_{\cap}-W_{\varnothing}.
\label{eq:phase-error-identity}
\end{equation}

Additionally, define
\begin{equation}
c(I)
:=
\bbone\{I\cap L\text{ is admissible}\}
+
\bbone\{I\cap R\text{ is admissible}\}
\in\{0,1,2\},
\label{eq:def-multiplicity}
\end{equation}
that is the number of times an independent set $I$ is counted in
$\widetilde Z_G(\lambda)$. The Gibbs measure corresponding to the
polymer approximation is
\begin{equation}
\widetilde\mu(I)
:=
\frac{c(I)\lambda^{|I|}}{\widetilde Z_G(\lambda)}.
\label{eq:def-two-phase-measure}
\end{equation}

\begin{lemma}
\label{lem:TV-comparison}
Suppose that the fugacity condition
\eqref{eq:uniform-phase-lambda} holds. Then
\begin{equation}
\TV(\widetilde\mu,\mu_{G,\lambda})
\leq
\frac{W_{\cap}+W_{\varnothing}}{Z_G(\lambda)}
\leq
(2n+2)
\exp\!\left[
-c_8\frac{n\log(\e\Delta)}{\Delta^2}
\right].
\label{eq:TV-phase-bound}
\end{equation}
\end{lemma}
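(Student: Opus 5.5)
We bound the total variation distance by comparing the unnormalized weights pointwise. By definition, the two laws have densities proportional to $c(I)\lambda^{|I|}$ and $\lambda^{|I|}$, so
\[
2\TV(\widetilde\mu,\mu_{G,\lambda})
=\sum_{I\in\I(G)}\left|\frac{c(I)\lambda^{|I|}}{\widetilde Z_G(\lambda)}-\frac{\lambda^{|I|}}{Z_G(\lambda)}\right|.
\]
We insert the intermediate quantity $c(I)\lambda^{|I|}/Z_G(\lambda)$ and apply the triangle inequality. This splits the sum into two terms:
\[
\sum_I \frac{|c(I)-1|\lambda^{|I|}}{Z_G(\lambda)}
\qquad\text{and}\qquad
\sum_I c(I)\lambda^{|I|}\left|\frac1{\widetilde Z_G(\lambda)}-\frac1{Z_G(\lambda)}\right|.
\]

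\textbf{First term.} We have $|c(I)-1|=1$ exactly when $c(I)\in\{0,2\}$, and is zero otherwise. The case $c(I)=0$ means both sides are non-admissible, which is the set counted by $W_\varnothing$. The case $c(I)=2$ means both sides are admissible, which is the set counted by $W_\cap$. Hence the first term equals $(W_\cap+W_\varnothing)/Z_G(\lambda)$.

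\textbf{Second term.} Since $\sum_I c(I)\lambda^{|I|}=\widetilde Z_G(\lambda)$, the second term equals $|Z_G(\lambda)-\widetilde Z_G(\lambda)|/Z_G(\lambda)$. By the inclusion--exclusion identity \eqref{eq:phase-error-identity}, this is $|W_\cap-W_\varnothing|/Z_G(\lambda)$, which is at most $(W_\cap+W_\varnothing)/Z_G(\lambda)$.

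Adding the two terms gives $2\TV\le 2(W_\cap+W_\varnothing)/Z_G(\lambda)$. This is the first inequality of \eqref{eq:TV-phase-bound}.

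The second inequality of \eqref{eq:TV-phase-bound} is then exactly Lemma~\ref{lem:uniform-phase-error}, which applies under the fugacity condition \eqref{eq:uniform-phase-lambda}. That lemma also carries the assumption $n\ge\Delta^2$, which we take to be in force throughout this section. For smaller $n$ the graph has bounded size and brute force suffices.

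The only point needing care is confirming that $c(I)-1$ vanishes outside the two error sets, and that the normalization error is controlled by the same quantity through \eqref{eq:phase-error-identity}. No step here presents a real obstacle.
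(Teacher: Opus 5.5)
Your proposal is correct and follows essentially the same route as the paper. The paper also uses the triangle inequality through $c(I)\lambda^{|I|}/Z_G(\lambda)$, identifies $\sum_I|c(I)-1|\lambda^{|I|}$ with $W_{\cap}+W_{\varnothing}$, bounds the normalization term by $|\widetilde Z_G(\lambda)-Z_G(\lambda)|\le W_{\cap}+W_{\varnothing}$, and then invokes Lemma~\ref{lem:uniform-phase-error}, whose standing assumption $n\ge\Delta^2$ you rightly flag.
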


\begin{proof}

Since $c(I)-1$ equals $1$ when both sides of $I$ are admissible,
$-1$ when neither side is admissible, and $0$ otherwise, we have that
\[
\sum_{I\in\I(G)}
|c(I)-1|\lambda^{|I|}
=
W_{\cap}+W_{\varnothing}.
\]
which gives
\[
\left|
\widetilde Z_G(\lambda)-Z_G(\lambda)
\right|
\leq
W_{\cap}+W_{\varnothing}.
\]
Therefore,
\begin{align}
2\TV(\widetilde\mu,\mu_{G,\lambda})
&=
\sum_{I\in\I(G)}
\lambda^{|I|}
\left|
\frac{c(I)}{\widetilde Z_G(\lambda)}
-
\frac{1}{Z_G(\lambda)}
\right|
\nonumber\\
&\leq
\sum_{I\in\I(G)}
c(I)\lambda^{|I|}
\left|
\frac{1}{\widetilde Z_G(\lambda)}
-
\frac{1}{Z_G(\lambda)}
\right|
+
\frac{1}{Z_G(\lambda)}
\sum_{I\in\I(G)}
|c(I)-1|\lambda^{|I|}
\nonumber\\
&=
\frac{
|\widetilde Z_G(\lambda)-Z_G(\lambda)|
}{
Z_G(\lambda)
}
+
\frac{W_{\cap}+W_{\varnothing}}{Z_G(\lambda)}
\nonumber\\
&\leq
2\frac{W_{\cap}+W_{\varnothing}}{Z_G(\lambda)}.
\end{align}
Dividing by two and applying Lemma~\ref{lem:uniform-phase-error} gives the result.
\end{proof}

\subsection{Verifying the Koteck\'y--Preiss Criterion}

Having established that the two polymer models are a good approximation to the hard-core model, we must now establish that their partition functions are efficiently computable using Proposition~\ref{prop:JKP}. In particular, we must verify the Koteck\'y--Preiss criterion.

\begin{lemma}
\label{lem:polymer-convergence}
There is an absolute constant $C_9>0$ such that, if
\begin{equation}
\lambda
\geq
\exp\!\left[
C_9\left(
\frac1\Delta+
\frac{\sigma^2}{\Delta^2-\sigma^2}
\right)\log(\e\Delta)
\right]-1,
\label{eq:polymer-lambda-condition}
\end{equation}
then both polymer models satisfy the Koteck\'y--Preiss condition
\eqref{eq:KP-criterion} with the decay function
\begin{equation}
\mathfrak g(\gamma):=\log(\e\Delta)|\gamma|.
\label{eq:def-decay}
\end{equation}
Consequently, each of $\Xi_L$ and $\Xi_R$ admits an FPTAS and a
polynomial-time approximate sampler.
\end{lemma}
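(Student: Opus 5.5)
Work with the $L$-polymer model; the $R$ case is symmetric. To verify \eqref{eq:KP-criterion}, I first bound the single-polymer weight using Lemma~\ref{lem:tanner}. For an $L$-polymer $\gamma$ we have $|\gamma|\leq r=n/\Delta$, so
\[
|N(\gamma)|\geq\frac{\Delta^2}{\sigma^2+(\Delta^2-\sigma^2)/\Delta}|\gamma|=\frac{|\gamma|}{x},
\]
with $x$ as in the proof of Lemma~\ref{lem:uniform-phase-error}. Set $\ell=\log(1+\lambda)$ and use $\lambda\leq1+\lambda$. This gives
\[
w_\gamma\leq\exp\!\bigl[-(1/x-1)\ell\,|\gamma|\bigr].
\]
Next I rewrite the hypothesis in terms of $x$. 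We have $\frac1\Delta+\frac{\sigma^2}{\Delta^2-\sigma^2}=\frac{\Delta x}{(\Delta-1)(1-x)}\geq\frac{x}{1-x}$, and $1/x-1=(1-x)/x$. Combining these, condition \eqref{eq:polymer-lambda-condition} yields $(1/x-1)\ell\geq C_9\log(\e\Delta)$. Hence
\[
w_\gamma\leq(\e\Delta)^{-C_9|\gamma|}.
\]
This cancellation between the Tanner expansion factor $1/x$ and the fugacity threshold is the key point.

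With the weight bound in hand, the Koteck\'y--Preiss sum is a standard count of connected sets. The host graph $H_L$ has maximum degree at most $\Delta^2$ by \eqref{eq:square-degree}. So the number of connected sets of size $k$ in $H_L$ containing a fixed vertex is at most $(\e\Delta^2)^{k-1}$. A polymer $\gamma'$ incompatible with $\gamma$ must contain a vertex within $H_L$-distance $1$ of $\gamma$. There are at most $(\Delta^2+1)|\gamma|$ such vertices. Therefore
\[
\sum_{\gamma'\not\sim\gamma}|w_{\gamma'}|e^{|\gamma'|+\mathfrak g(\gamma')}
\leq(\Delta^2+1)|\gamma|\sum_{k\geq1}(\e\Delta^2)^{k-1}\,\e^{k}(\e\Delta)^{k}(\e\Delta)^{-C_9k}.
\]
The summand is at most $(\e\Delta)^{(4-C_9)k}$ up to an absolute constant factor. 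Taking $C_9$ a large absolute constant, say $C_9\geq8$, the geometric series is at most $(\e\Delta)^{-3}$. The prefactor $\Delta^2+1$ is then absorbed, and the whole sum is at most $|\gamma|$.

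For the algorithmic consequence I apply Proposition~\ref{prop:JKP} with $D=\Delta^2$ and $\rho=\log(\e\Delta)\geq1$. Item~2 holds by definition of $\mathfrak g$, and item~3 is what was just shown. Item~1 holds because, for a connected $\gamma$, one can check $|\gamma|\leq n/\Delta$, compute $N(\gamma)$, and evaluate $w_\gamma$ and $\mathfrak g(\gamma)$ in time polynomial in $|\gamma|$ with $\Delta$ fixed. The weights are nonnegative, so the proposition gives the FPTAS and sampler for $\Xi_L$, and symmetrically for $\Xi_R$.

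The main obstacle I expect is the algebraic step showing that $(1/x-1)\ell\gtrsim C_9\log(\e\Delta)$. One must check that the size cap $|\gamma|\leq n/\Delta$ is precisely what makes Tanner's bound uniform over all polymers. One must also keep the factor $(\Delta-1)/\Delta$ from degrading the constant, which requires $\Delta\geq2$.
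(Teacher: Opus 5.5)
Your argument follows the paper's proof step for step: Tanner's bound made uniform by the cap $|\gamma|\leq n/\Delta$, the bound $w_\gamma\leq\exp[-(1/x-1)\ell|\gamma|]$, the connected-set count in $H_L$, a geometric series, and then Proposition~\ref{prop:JKP}. However, the step you single out as the crux contains an algebraic error.

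The identity you write has the factor $\Delta/(\Delta-1)$ inverted. With $s=\sigma^2/\Delta^2$ one has $x=s+(1-s)/\Delta$ and $1-x=(1-s)(\Delta-1)/\Delta$, so
\[
\frac1\Delta+\frac{\sigma^2}{\Delta^2-\sigma^2}
=\frac{(\Delta-1)\,x}{\Delta\,(1-x)}
\leq\frac{x}{1-x}.
\]
This is the reverse of the inequality you use. Already at $\sigma=0$, $\Delta=3$, the left side is $1/3$ while $x/(1-x)=1/2$.

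Consequently the hypothesis gives only $(1/x-1)\ell\geq C_9\frac{\Delta-1}{\Delta}\log(\e\Delta)\geq\frac{C_9}{2}\log(\e\Delta)$, not $(1/x-1)\ell\geq C_9\log(\e\Delta)$. Correspondingly, you get $w_\gamma\leq(\e\Delta)^{-C_9|\gamma|/2}$ rather than $(\e\Delta)^{-C_9|\gamma|}$. This is precisely the paper's estimate $(1/x-1)^{-1}=\frac{\Delta}{\Delta-1}\bigl(\frac1\Delta+\frac{\sigma^2}{\Delta^2-\sigma^2}\bigr)\leq2\bigl(\frac1\Delta+\frac{\sigma^2}{\Delta^2-\sigma^2}\bigr)$.

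The repair costs only a factor of $2$ in the absolute constant. Take $C_9\geq16$ instead of $C_9\geq8$, and your Koteck\'y--Preiss computation and the application of Proposition~\ref{prop:JKP} then go through verbatim. Your closing remark about $(\Delta-1)/\Delta$ was the right instinct. With the correct identity that factor does degrade the constant, but only by at most $2$, which is harmless since $C_9$ is yours to choose.
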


\begin{proof}
Recall the notation
\[
\ell:=\log(1+\lambda),
\qquad
\frac{1}{x}:=
\frac{\Delta^2}
{\sigma^2+(\Delta^2-\sigma^2)/\Delta}.
\]

By \eqref{eq:square-degree}, each host graph has maximum degree at
most \(\Delta^2\). Thus, by the standard connected-set counting bound (see, for example,
Lemma~14 of~\cite{JenssenKeevashPerkins2020}),
for every host-graph vertex \(v\) and every \(t\geq1\),
\begin{equation}
\#\left\{
\gamma:
v\in\gamma,\ |\gamma|=t
\right\}
\leq
(\e\Delta^2)^t.
\label{eq:polymer-connected-set-count}
\end{equation}
Moreover, every polymer satisfies \(|\gamma|\leq r=n/\Delta\).
Therefore, by Lemma~\ref{lem:tanner},
\begin{align}
|N(\gamma)|
&\geq
\frac{\Delta^2}
{\sigma^2+(\Delta^2-\sigma^2)|\gamma|/n}
|\gamma|
\nonumber\\
&\geq
\frac{\Delta^2}
{\sigma^2+(\Delta^2-\sigma^2)/\Delta}
|\gamma|
=
\frac{|\gamma|}{x}.
\label{eq:polymer-neighborhood-bound}
\end{align}
Thus, by the definition of the polymer weight,
\begin{align}
w_\gamma
&=
\frac{\lambda^{|\gamma|}}
{(1+\lambda)^{|N(\gamma)|}}
\nonumber\\
&\leq
\exp\!\left[
|\gamma|
\left(
\log\lambda-\frac{\ell}{x}
\right)
\right].
\label{eq:polymer-weight-exponential}
\end{align}

We now verify the Koteck\'y--Preiss criterion. Fix a polymer
\(\gamma\). Every polymer incompatible with \(\gamma\) contains a
vertex in the same bipartition class at graph distance at most two
from \(\gamma\). The number of such vertices is at most
\[
\bigl(1+\Delta(\Delta-1)\bigr)|\gamma|
\leq
\Delta^2|\gamma|.
\]
Consequently, using \eqref{eq:polymer-connected-set-count},
\eqref{eq:polymer-weight-exponential}, and
\eqref{eq:def-decay}, we obtain
\begin{align}
&\sum_{\gamma':\,\gamma'\not\sim\gamma}
w_{\gamma'}
\exp\!\left(
|\gamma'|+\mathfrak g(\gamma')
\right)
\nonumber\\
&\qquad\leq
\Delta^2|\gamma|
\max_v
\sum_{\gamma'\ni v}
w_{\gamma'}
\exp\!\left(
|\gamma'|+\mathfrak g(\gamma')
\right)
\nonumber\\
&\qquad\leq
\Delta^2|\gamma|
\sum_{t\geq1}
\exp\!\left[
t\left(
\log(\e\Delta^2)
+\log\lambda-\frac{\ell}{x}
+1+\log(\e\Delta)
\right)
\right]
\nonumber\\
&\qquad\leq
\Delta^2|\gamma|
\sum_{t\geq1}
\exp\!\left[
t\left(
3\log(\e\Delta)
-
\left(\frac1x-1\right)\ell
\right)
\right]
\nonumber\\
&\qquad\leq
\Delta^2|\gamma|
\sum_{t\geq1}
\exp\!\left[-3t\log(\e\Delta)\right]
\nonumber\\
&\qquad=
\frac{\Delta^2}{(\e\Delta)^3-1}|\gamma|
\leq
|\gamma|.
\label{eq:KP-verification}
\end{align}
where the penultimate inequality follows from
\eqref{eq:polymer-lambda-condition}, as $\left(\frac1x-1\right)^{-1}
=
\frac{\Delta}{\Delta-1}
\left(
\frac1\Delta+
\frac{\sigma^2}{\Delta^2-\sigma^2}
\right)
\nonumber\\
\leq
2\left(
\frac1\Delta+
\frac{\sigma^2}{\Delta^2-\sigma^2}
\right)$.
\end{proof}

The counting and sampling algorithms that provide Theorem \ref{thm:cluster-expansion} then follow exactly as in \cite{JenssenKeevashPerkins2020}.

\subsection{Proof of Theorem \ref{thm:all_fug}} \label{subsec:param_matching}

The proof of Theorem \ref{thm:all_fug} follows from parameter matching our moderate and high fugacity results. 

\begin{proof}[Proof of Theorem \ref{thm:all_fug}]
Let $C$ be the absolute constant in
Theorem \ref{thm:cluster-expansion}. Fix $\xi=1/2$ in
Theorem \ref{thm:main}, and choose an absolute constant $c>0$ sufficiently small
that $4Cc^3\leq1/2$. Choose $\Delta_0$ sufficiently large so
that, whenever $\Delta\geq\Delta_0$ and
\eqref{eq:all-fugacity-spectral-range} holds, we have
$\sigma\leq\Delta/2$ and
\[
C\frac{\sigma^2}
{\Delta^2-\sigma^2}\log(\e\Delta)
\leq1.
\]
If $\sigma=0$, the conclusion follows
directly from Theorem \ref{thm:cluster-expansion}. Otherwise, the inequality
$\e^x-1\leq2x$ for $0\leq x\leq1$ gives
\[
\exp\!\left[
C\frac{\sigma^2}{\Delta^2-\sigma^2}\log(\e\Delta)
\right]-1
\leq
4C\frac{\sigma^2\log(\e\Delta)}{\Delta^2}
\leq
\frac{1}{2\sigma}.
\]
Thus the fugacity ranges in Theorems
\ref{thm:main} and \ref{thm:cluster-expansion} overlap: the former applies up
to $1/(2\sigma)$, and the latter applies from a value no larger
than $1/(2\sigma)$. Together they cover every $\lambda>0$.
\end{proof}

\section*{Statement on AI Use}

ChatGPT 5.5 Pro was used to assist with and verify computations. The authors assume responsibility for all content.

\section*{Acknowledgments}

WP supported in part by NSF grant CCF-2309708.

\newcommand{\etalchar}[1]{$^{#1}$}

\end{document}